\theoremstyle{plain}
\newtheorem{theorem}{Theorem}[section]
\newtheorem{lemma}[theorem]{Lemma}
\newtheorem{corollary}[theorem]{Corollary}
\theoremstyle{construction}
\newtheorem{construction}[theorem]{Construction}
\theoremstyle{definition}
\newtheorem{definition}[theorem]{Definition}
\newtheorem{example}[theorem]{Example}
\def\keywords{%
\list{}{\advance\topsep by0.35cm\relax\small\rm
 \leftmargin=1cm
 \itemindent\listparindent
 \rightmargin\leftmargin}\item[\hskip\labelsep\bf Keywords: ]}
\theoremstyle{remark}
\renewenvironment{proof}{\noindent{\it Proof}.}{\qed}
\title{Around Context-Free Grammars - a Normal Form, a Representation Theorem, and a Regular Approximation}  
\author{Liliana Cojocaru\\
School of Information Sciences, Computer Science\\ 
University of Tampere\\ 
Liliana.Cojocaru{@}uta.fi} 
\date{}
\begin{document}
\thispagestyle{empty}
\maketitle
\thispagestyle{empty}
\pagenumbering{arabic}

\begin{abstract}
We introduce a normal form for \textit{context-free grammars}, called \textit{Dyck normal form}. This 
is a syntactical restriction of the \textit{Chomsky normal form}, in which the two nonterminals occurring on 
the right-hand side of a rule are paired nonterminals. This pairwise property allows to define a homomorphism 
from Dyck words to words generated by a grammar in Dyck normal form. We prove that for each context-free language $L$, 
there exist an integer $K$ and a homomorphism $\varphi$ such that $L=\varphi(D'_K)$, where $D'_K\subseteq D_K$, 
and $D_K$ is the one-sided Dyck language over $K$ letters. Through a \textit{transition-like diagram} for a context-free 
grammar in Dyck normal form, we effectively build a regular language $R$ such that $D'_K=R\cap D_K$, which leads 
to the \textit{Chomsky-Sch$\ddot{u}$tzenberger theorem}. Using graphical approaches we refine $R$ such that 
the Chomsky-Sch\"utzenberger theorem still holds. Based on this readjustment we sketch a {\it transition diagram} 
for a {\it regular grammar} that generates a \textit{regular superset approximation} for the initial context-free language. 
\end{abstract}
\vspace*{-0.7cm}
\begin{keywords}
linear languages, context-free languages, Dyck languages, Chomsky normal form, Dyck normal form,  
Chomsky-Sch\"utzenberger theorem, regular approximation 
\end{keywords}

\section*{Introduction}

A \textit{normal form} for context-free grammars consists of restrictions imposed on the structure of 
grammar's productions. These restrictions concern the number of terminals and nonterminals allowed on the 
right-hand sides of the rules, or on the manner in which terminals and nonterminals are arranged into the 
rules. Normal forms turned out to be useful tools in studying syntactical properties of context-free  grammars, in 
parsing theory, structural and descriptional complexity, inference and learning theory. Various normal forms 
for context-free  grammars have been study so far, but the most important remain the Chomsky normal form \cite{H}, 
Greibach normal form \cite{Gr}, and operator normal form \cite{H}. For definitions, results, and surveys on 
normal forms the reader is referred to \cite{CG}, \cite{H}, and \cite{PL}. A normal form is correct if it 
preserves the language generated by the original grammar. This condition is called {\it the weak equivalence}, 
i.e., a normal form preserves the language but may lose important syntactical or semantical properties of the 
original grammar. The more syntactical, semantical, or ambiguity properties a normal form preserves, the stronger 
it is. It is well known that the Chomsky normal form is a {\it strong} normal form.  

This paper is partly devoted to a new normal form for context-free  grammars, called \textit{Dyck normal form}. The Dyck normal 
form is a syntactical restriction of the Chomsky normal form, in which the two nonterminals occurring on the right-hand side 
of a rule are paired nonterminals, in the sense that each left (right) nonterminal of a pair has a unique right (left) 
pairwise. This pairwise property imposed on the structure of the right-hand side of each rule induces a nested 
structure on the derivation tree of each word generated by a grammar in Dyck normal form. More precisely, each derivation 
tree of a word generated by a grammar in Dyck normal form, that is read in the depth-first search order is a Dyck word, hence 
the name of the normal form. Furthermore, there exists always a homomorphism between the derivation tree of 
a word generated by a grammar in Chomsky normal form and its equivalent in Dyck normal form. In other words the Chomsky and Dyck normal forms are 
{\it strongly equivalent}. This property, along with several other terminal rewriting conditions imposed to a grammar in Dyck normal form, enable us to 
define a homomorphism from Dyck words to words generated by a grammar in Dyck normal form. We have been inspired to develop 
this normal form by the general theory of Dyck words and Dyck languages, that turned out to play a crucial role in 
the description and characterization of context-free  languages \cite{EHR2}, \cite{EHR3}, and \cite{IJR}. The definition and 
several properties of grammars in Dyck normal form are presented in Section 1. 

For each context-free  grammar $G$ in Dyck normal form we define, in Section 2, the \textit{trace language} associated with derivations 
in $G$, which is the set of all derivation trees of $G$ read in the depth-first search order, starting from the 
grammar axiom. By exploiting the Dyck normal form, and several characterizations of Dyck languages presented in \cite{IJR}, we 
give a new characterization of context-free  languages in terms of Dyck languages. We prove (also in Section 2) that for each 
context-free  language $L$, generated by a grammar $G$ in Dyck normal form, there exist an integer $K$ and a homomorphism $\varphi$ such 
that $L=\varphi(D'_K)$, where $D'_K$ (a subset of the Dyck language over $K$ letters) equals, with very little 
exceptions, the trace language associated with $G$. 

In Section 3 we show that the representation theorem in Section 2 emerges, through a \textit{transition-like diagram} 
for context-free  grammars in Dyck normal form, to the  Chomsky-Sch\"utzenberger theorem. By improving 
this transition diagram, in Section 4 we refine the regular language provided by the Chomsky-Sch\"utzenberger theorem, while in Section 5 
we show that the refined graphical representation of derivations in a context-free  grammar in Dyck normal form, used in the previous sections, provides 
a framework for a regular grammar that generates a \textit{regular superset approximation} for the initial context-free  language. 

The method used throughout this paper is graph-constructive, in the sense that it supplies a graphical interpretation 
of the Chomsky-Sch\"utzenberger theorem, and consequently it shows how to graphically build a regular language (as minimal as possible) 
that satisfies this theorem. Even if we reach the same famous Chomsky-Sch\"utzenberger theorem, the method used to approach it is 
different from the other methods known in the literature. In brief, the method in \cite{H} is based on pushdown approaches, 
while that in \cite{G} uses some kind of imaginary brackets that simulate the work of a pushdown store, when deriving a context-free  
language. The method presented in \cite{B} uses equations on languages and algebraical approaches to derive several types 
of Dyck language generators for context-free  languages. In all these works, the Dyck language is somehow hidden behind the derivative 
structure of the context-free  language (supplementary brackets are needed to derive a Dyck language generator for a context-free  language). The 
Dyck language provided in this paper is merely found through a \textit{pairwise-renaming procedure} of the nonterminals in 
the original context-free  grammar. Hence, it lies inside the context-free  grammar it describes. Each method used in the literature to prove 
the Chomsky-Sch\"utzenberger theorem provides its own regular language. Our aim is to find a thiner regular language that satisfies the Chomsky-Sch\"utzenberger theorem 
(with respect to the method hereby used) and approaching this language to build a regular superset approximation 
for context-free  languages (likely to be as thiner as possible). 

Note that the concept of a thiner (minimal) regular language, for the Chomsky-Sch\"utzen- berger theorem and for the regular superset approximation is 
\textit{relative}, in the sense that it depends on the structure of the grammar in Dyck normal form used to generate the original context-free  language. 
In \cite{Ce}, \cite{GK1}, \cite{GK2}, and \cite{GK3} it is proved that there is no algorithm that builds, for an arbitrary context-free  
language $L$, the minimal context-free  grammar that generates $L$, where the minimality of a context-free  grammar is considered, in principal, with 
respect to descriptional measures such as number of nonterminals, rules, and loops (i.e., grammatical levels \cite{GK1}, 
encountered during derivations in a context-free  grammar). Consequently, there is no algorithm to build a minimal regular superset 
approximation for an arbitrary context-free  language. Attempts to find optimal regular superset (subsets) approximations for context-free  languages 
can be found in \cite{CS}, \cite{Eg}, \cite{MN}, and \cite{SB}. In Sections 3, 4, and 5 we also illustrate, through several examples, 
the manner in which the regular languages provided by the Chomsky-Sch\"utzenberger theorem and by the regular approximation can be built, 
with regards to the method proposed in this paper. 

\section{Dyck Normal Form}

We assume the reader to be familiar with the basic notions of formal language theory \cite{H}. For an alphabet $X$, 
$X^*$ denotes the free monoid generated by $X$. By $|x|_a$ we denote the number of occurrences of the letter $a$ in 
the string $x\in X^*$, while $|x|$ is the length of $x\in V^*$. We denote by $\lambda$ the empty string. If $X$ is a 
finite set, then $|X|$ is the cardinality of $X$. 

\begin{definition}
A context-free grammar\footnote{\textit{A context-free grammar} is denoted by $G=(N, T, P, S)$, 
where $N$ and $T$ are finite sets of \textit{variables} and \textit{terminals}, respectively, $N\cap T= \emptyset$,
$S\in N-T$ is the grammar \textit{axiom}, and $P\subseteq N\times (N\cup T)^*$ is the finite set of \textit{productions}.} 
$G=(N, T, P, S)$ is said to be in {\it Dyck normal form} if it satisfies the following conditions:
\begin{enumerate}
\vspace{-0.1cm}
  \item G is in Chomsky normal form, 
  \vspace{-0.1cm}
  \item if $A\rightarrow a \in P$, $A\in N$, $A\neq S$, $a\in T$, 
   then no other rule in $P$ rewrites $A$, 
   \vspace{-0.1cm}
  \item for each $A\in N$ such that $X\rightarrow AB \in P$
   ($X\rightarrow BA \in P$) there is no other rule in $P$ of
  the form $X'\rightarrow B'A$ ($X'\rightarrow AB'$),
  \vspace{-0.1cm}
  \item for each rules $X\rightarrow AB$, $X'\rightarrow
  A'B$ ($X\rightarrow AB$, $X'\rightarrow AB'$), we have $A=A'$ ($B=B'$).
\end{enumerate}
\end{definition}

Note that the reasons for which we introduce the restrictions at items $2 - 4$, are the following. 
The condition at item 2 allows to make a partition between those nonterminals rewritten by nonterminals, and 
those nonterminals rewritten by terminals (with the exception of the axiom). This enables, in Section 2, to define a 
homomorphism from Dyck words to words generated by a grammar in Dyck normal form. Conditions at items 3 and 4 allow to split the 
set of nonterminals into pairwise nonterminals, and thus to introduce bracketed pairs. The next theorem proves that the 
Dyck normal form is correct. 

\begin{theorem}
For each context-free  grammar  $G=(N, T, P, S)$ there exists a grammar $G'=(N', T, P', S)$ such that 
$L(G)=L(G')$ where $G'$ is in Dyck normal form.
\end{theorem}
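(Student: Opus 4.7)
The plan is to build $G'$ from $G$ by four language-preserving transformations. First, I would convert $G$ to Chomsky normal form via the standard algorithm, which yields condition~1 of Definition~1.1. To prevent the exceptional status of the axiom in condition~2 from complicating the later pairwise renaming, I would then introduce a fresh symbol $S_0$ whose productions are an exact copy of those of the current $S$, and take $S_0$ as the new axiom; this guarantees that $S_0$ never appears on any right-hand side, so the original $S$ can subsequently be treated like any other nonterminal.

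Second, to enforce condition~2, for every nonterminal $A\neq S_0$ that has a terminal production $A\to a$ together with some other rule, I would introduce a fresh nonterminal $T_a$ whose sole production is $T_a\to a$; for every rule $X\to\alpha$ in which $A$ occurs on the right-hand side, I would add copies of the rule with one or more occurrences of $A$ replaced by $T_a$ (at most four copies per CNF rule, handling the $X\to AA$ case by expanding to $T_aA$, $AT_a$, $T_aT_a$), and then delete $A\to a$ from $P$. After iterating this over every terminal rule of every non-axiom nonterminal, each $A\neq S_0$ either has only nonterminal rules or the single terminal rule $A\to a$; the language is preserved since every use of $A$ as a direct terminal producer is mirrored by the corresponding use of $T_a$.

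Third, to enforce conditions~3 and~4, I would perform a pairwise renaming: for each ordered pair $(A,B)$ appearing as the right-hand side of some current rule, introduce two fresh symbols $L_{A,B}$ and $R_{A,B}$, replace every rule $X\to AB$ by $X\to L_{A,B}R_{A,B}$, and set the productions of $L_{A,B}$ to mirror those of $A$ but expressed in the renamed symbols (so $A\to CD$ induces $L_{A,B}\to L_{C,D}R_{C,D}$ and $A\to c$ induces $L_{A,B}\to c$), and analogously for $R_{A,B}$. Discarding the now-unreachable old nonterminals and productions yields a grammar $G'$ with nonterminals $\{S_0\}\cup\{L_{A,B},R_{A,B}\}$, which after renaming $S_0$ back to $S$ has the form asserted in the theorem.

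Conditions~3 and~4 then hold by construction: every $L_{A,B}$ appears only as the left element and every $R_{A,B}$ only as the right element of a pair, and the pair index makes the two members uniquely determine each other. Condition~2 is preserved by the renaming because each $L_{A,B}$ inherits the ``clean'' productions of $A$, which by the second step are either all-terminal or all-nonterminal. The equality $L(G')=L(G)$ I would prove by an explicit bijection on derivation trees: from a tree in $G$, relabel the root as $S_0$ and every other internal node $A$ as $L_{A,B}$ or $R_{B,A}$ according to whether $A$ is a left or a right child of its parent in the tree; the reverse direction simply forgets the pair indices and the subscript $0$. The main obstacle, and the step requiring the most care, is ordering the transformations so that the productions inherited during the pairwise renaming are already condition-2--clean: this is why the cleanup step must be performed before the renaming and, crucially, must be applied to the original $S$ (which is no longer the axiom after the insertion of $S_0$), so that the inherited productions of the $L$- and $R$-copies do not silently reintroduce mixed terminal/nonterminal rules.
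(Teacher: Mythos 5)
Your proposal is correct, and its first half mirrors the paper's: convert to Chomsky normal form and then split off dedicated terminal producers (your $T_a$ plays exactly the role of the paper's $A_b$/$A_a$ terminal substitutions, including the three extra rule variants for the $X\rightarrow AA$ case). Where you genuinely diverge is in how conditions 3 and 4 are enforced and how equivalence is proved. The paper repairs violations incrementally: it keeps the original nonterminals and, conflict by conflict, introduces ${}_ZA$- and ${}_{A'}B$-type copies (nonterminal substitutions), and then proves $L(G)=L(G')$ by a collapsing homomorphism $h_d$ for one inclusion and a CYK-matrix induction (via the relations $\hat h_t$ and $\hat h_{\neg t}$) for the other. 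You instead perform a one-shot product-style renaming, creating copies $L_{A,B}$, $R_{A,B}$ indexed by the ordered sibling pair, after first inserting a fresh axiom $S_0$ so that the old axiom can be cleaned like any other symbol (the paper has no $S_0$; it simply relies on condition 2 exempting the axiom). With that indexing, conditions 3 and 4 hold trivially, and your correctness argument --- a yield-preserving bijection of derivation trees obtained by relabeling each internal node with its own label paired with its sibling's --- is more elementary than the paper's CYK reconstruction, and it makes the strong equivalence underlying Corollary 1.4 explicit. The trade-off is size: your renaming may create on the order of $|N|^2$ pair-copies regardless of need, whereas the paper's incremental substitutions introduce new symbols only where conditions 2--4 are actually violated; this does not affect the validity of your proof, only the economy of the resulting grammar.
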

\begin{proof}
Suppose that $G$ is a context-free  grammar in Chomsky normal form. Otherwise, using the algorithm  described in \cite{PL} 
we can convert $G$ into Chomsky normal form. To convert $G$ from Chomsky normal form into Dyck normal form we proceed as follows.\vspace{0.2cm}\\
\textbf{Step 1} We check  whether $P$ contains two (or more) 
rules of the form $A\rightarrow a$, $A\rightarrow b$, $a\neq b$.  If it does, then for each rule 
$A\rightarrow b$, $a\neq b$, a new variable $A_b$ is introduced. We add the new rule $A_b\rightarrow b$, 
and remove $A\rightarrow b$. For each rule $X\rightarrow AB$ ($X\rightarrow BA$) we add the new rule 
$X\rightarrow A_bB$ ($X\rightarrow BA_b$), while for a rule of the form $X\rightarrow AA$ we add three 
new rules $X\rightarrow A_bA$, $X\rightarrow AA_b$, $X\rightarrow A_bA_b$, without removing
the initial rules. We call this procedure an $A_b$-{\it terminal 
substitution} of $A$. For each rule $A\hspace{-0.1cm}\rightarrow a$, $a\in T$, we check 
whether a rule of the form $A\hspace{-0.1cm}\rightarrow B_1B_2$, $B_1,B_2\in N$, exists 
in $P$. If it does, then a new  nonterminal $A_a$ is introduced and an $A_a$-terminal substitution 
of $A$ for the rule $A\rightarrow a$ is performed.\vspace{0.2cm}\\
\textbf{Step 2} Suppose there exist two (or more) rules of the form 
$X\rightarrow AB$ and $X'\rightarrow B'A$. If we have agreed on preserving 
only the left occurrences of $A$ on the right-hand sides, then according to 
condition 3 of Definition 1.1, we have to remove all right occurrences of $A$. 
To do so we introduce a new nonterminal ${}_ZA$ and all right occurrences 
of $A$, preceded at the left side by $Z$, in the right-hand side of a rule, 
are substituted by ${}_ZA$. For each rule that rewrites $A$, $A\rightarrow Y$, 
$Y\hspace{-0.1cm}\in N^2\cup T$, we add a new rule of the form ${}_ZA\hspace{-0.1cm}\rightarrow Y$, preserving 
the rule\\$A\rightarrow Y$. We call this procedure an ${}_ZA$-{\it nonterminal 
substitution} of $A$. According to this procedure, for the rule $X'\rightarrow B'A$, 
we introduce a new nonterminal ${}_{B'}A$, we add the rule $X'\rightarrow B'{}_{B'}A$, 
and remove the rule $X'\rightarrow B'A$. For each rule that rewrites $A$, of the 
form\footnote{This case deals with the possibility of having $Y=B'{}_{B'}A$, 
too.} $A\rightarrow Y$, $Y\in N^2\cup T$, we add a new rule of the form 
${}_{B'}A\rightarrow Y$, preserving the rule $A\rightarrow Y$. \vspace{0.2cm}\\
\textbf{Step 3}  Finally, for each two rules $X\rightarrow AB$, $X'\rightarrow
A'B$ ($X\rightarrow BA$, $X'\rightarrow BA'$) with $A\neq A'$, a new nonterminal 
${}_{A'}B$ ($B_{A'}$) is introduced to replace $B$ from the second rule, and we  
perform an ${}_{A'}B$($B_{A'}$)-nonterminal substitution of $B$, i.e., we add 
$X'\rightarrow A'{}_{A'}B$, and remove $X'\rightarrow A'B$. For each rule that 
rewrites $B$, of the form $B\rightarrow Y$, $Y\in N^2\cup T$, we add a new rule 
${}_{A'}B\rightarrow Y$, preserving $B\rightarrow Y$. In 
the case that $A'$ occurs on the right-hand side of another rule, such
that $A'$ matches at the right side with another nonterminal
different of ${}_{A'}B$, then the procedure described above is repeated
for $A'$, too. 
\vspace*{0.1cm}

Note that, if one of the conditions 2, 3, and 4 in Definition 1.1, has been settled, we 
do not have to resolve it once again in further steps of the procedure. The new grammar 
$G'$ built as described at steps 1, 2, and 3 has the set of nonterminals $N'$ and 
the set of productions $P'$ composed of all nonterminals from $N$ and productions 
from $P$, plus/minus all nonterminals and productions, respectively introduced/removed 
according to the substitutions performed during the above steps. Next we prove that grammars 
$G=(N, T, P, S)$ in Chomsky normal form, and $G'=(N', T, P', S)$ in Dyck normal form, generate the same language. 
Consider the homomorphism  $h_d\hspace{-0.1cm}: N'\cup T \rightarrow N \cup T$ defined by $h_d(x)=x$, 
$x\in T$, $h_d(X)=X$, for $X\in N$, and $h_d(X')=X$ for $X'\in N'-N$, $X\in N$ such that $X'$ is a 
(transitive\footnote{There exist $X_k\in N$, such that $X'$ is an $X'$-substitution of $X_k$, $X_k$ 
is an $X_k$-substitution of $X_{k-1}$,..., and $X_1$ is an $X_1$-substitution of $X$. All of them 
substitute $X$.}) $X'$-substitution of $X$, terminal or not, in the above construction of the grammar $G'$.
\vspace*{0.1cm}

To prove that $L(G') \subseteq L(G)$ we extend $h_d$ to a
homomorphism from $(N'\cup T)^*$ to $(N \cup T)^*$ defined on the
classical concatenation operation. It is straightforward to prove by
induction, that for each $\alpha \Rightarrow^*_{G'} \delta$ we have
$h_d(\alpha) \Rightarrow^*_G h_d(\delta)$. This implies that for any 
derivation of a word $w\in L(G')$, i.e., $S\Rightarrow^*_{G'}w$, we 
have $h_d(S) \Rightarrow^*_G h_d(w)$, i.e., $S \Rightarrow^*_G w$, 
or equivalently, $L(G') \subseteq L(G)$. 
\vspace*{0.1cm}

To prove that $L(G) \subseteq L(G')$ we make use of the CYK
(Cocke-Younger-Kasami) algorithm as described in \cite{PL}.
Let $w=a_1a_2...a_n$ be an arbitrary word in $L(G)$, and $V_{ij}$, $i\leq j$, 
$i,j\in \{1,..., n\}$, be the triangular matrix of size $n\times n$ built with the 
CYK algorithm. Since $w\in L(G)$, we have $S \in V_{1n}$. We prove 
that $w\in L(G')$, i.e., $S \in V'_{1n}$, where $V'_{ij}$, $i\leq j$, $i,j\in \{1,..., n\}$ 
forms the triangular matrix obtained by applying the CYK algorithm to $w$ according to 
$G'$ productions. 
\vspace*{0.1cm}

We consider two relations $\hat h_t \subseteq  (N\cup T)\times (N'\cup T)$ and
$\hat h_{\neg t} \subseteq N \times N'$. The first relation is defined by
$\hat h_t(x)=x$, $x\in T$, $\hat h_t(S)=S$, if $S\rightarrow t$, $t\in T$,
is a rule in $G$, and $\hat h_t(X)=X'$, if $X'$ is a (transitive) $X'$-terminal
substitution\footnote{There may exist several terminal/nonterminal 
substitutions for the same nonterminal $X$. This makes $\hat h_t$/$\hat h_{\neg t}$
to be a relation.} of $X$, and $X\rightarrow t$ is a rule in $G$. Finally, $\hat h_t(X)=X$ if 
$X\rightarrow t \in P$, $t\in T$. The second relation is defined
as $\hat h_{\neg t}(S)=S$, $\hat h_{\neg t}(X)=\{X\}\cup \{ X'| X' \mbox{ is a
(transitive) } X'\mbox{-nonterminal substitution of } X\}$ and $\hat h_{\neg
t}(X)=X$, if there is no substitution of $X$ and no rule of the form $X\rightarrow t$, 
$t\in T$, in $G$. Note that $\hat h_x(X_1\cup X_2)\hspace{-0.1cm}=\hat h_x(X_1)\cup \hat h_x(X_2)$, 
for $X_i \subseteq N$, $i\in \{1,2\}$, $x\in \{t, \neg t\}$. Using $\hat h_t$, 
each rule $X\hspace{-0.2cm}\rightarrow t$ in $P$ has a corresponding set of rules 
$\{X'\hspace{-0.1cm}\rightarrow t| X'\hspace{-0.1cm}\in \hat h_t(X), X\hspace{-0.1cm}\rightarrow t \in P\}$ 
in $P'$. Each rule $A\hspace{-0.1cm}\rightarrow BC$ in $P$ has a corresponding set of rules  
$\{A'\hspace{-0.1cm}\rightarrow B'C'| A'\in \hat h_{\neg t}(A), B'\in \hat h_{\neg t}(B)\cup\\
\hat h_t(B), C'\in \hat h_{\neg t}(C)\cup\hat h_t(C), B' \mbox{ and } C' \mbox{ are pairwise nonterminals, } A\rightarrow BC \in P\}$ in $P'$.
\vspace*{0.1cm}

Consider $V'_{ii}= \hat h_t(V_{ii})$ and $V'_{ij}= \hat
h_{\neg t}(V_{ij})$, $i < j$, $i,j\in \{1,..., n\}$. We claim
that $V'_{ij}$, $i,j\in \{1,..., n\}$, $i\leq j$, defines the triangular matrix obtained by applying 
CYK algorithm to rules that derive $w$ in $G'$. First, observe that for $i=j$, we have 
$V'_{ii}=\hat h_t(V_{ii})=\{A | A\rightarrow a_i\in P'\}$, $i\in \{1,..., n\}$, due to 
the definition of $\hat h_t$. Now let us consider $k=j-i$, $k\in \{1,..., n-1\}$. 
We want to compute $V'_{ij}$, $i<j$.

By definition, we have $V_{ij}=\bigcup_{l=i}^{j-1} \{A| A\rightarrow BC, B\in V_{il}, 
C\in V_{l+1j}\}$, so that $V'_{ij}=\hat h_{\neg t}(V_{ij})\hspace{-0.1cm}=\hat h_{\neg t}(\bigcup_{l=i}^{j-1} \{A|
A\rightarrow BC, B\in V_{il}, C\in V_{l+1j}\})\hspace{-0.1cm}=\bigcup_{l=i}^{j-1}
\hat h_{\neg t}(\{A| A\rightarrow BC, B\in V_{il}, C\in V_{l+1j}\})=
\bigcup_{l=i}^{j-1}\{A'| A'\rightarrow B'C', A'\in \hat h_{\neg t}(A),
B'\in \hat h_{\neg t}(B)\cup \hat h_t(B), B\in V_{il},
C'\in \hat h_{\neg t}(C)\cup\hat h_t(C), C\in V_{l+1j}$,
$B'$ and $C'$ are pairwise nonterminals, $A\rightarrow BC \in P\}$.
Let us explicitly develop the last union.

If $k=1$, then $l\in \{i\}$. For each $i\in \{1,..., n-1\}$ we have
$V'_{ii+1}=\{A'| A'\rightarrow B'C', A'\in \hat h_{\neg t}(A), B'\in
\hat h_{\neg t}(B)\cup \hat h_t(B), B\in V_{ii}, C'\in \hat h_{\neg
t}(C)\cup\hat h_t(C), C\in V_{i+1i+1}$, $B'$ and $C'$ are pairwise
nonterminals, $A\rightarrow BC \in P\}$. Due to the fact that $B\in
V_{ii}$ and $C\in V_{i+1i+1}$, $B'$ is a terminal substitution of
$B$, while $C'$ is a terminal substitution of $C$. Therefore, we have
$B'\notin \hat h_{\neg t}(B)$, $C'\notin\hat h_{\neg t}(C)$, so that
$B'\in \hat h_t(B)$, for all $B\in V_{ii}$, and $C'\in \hat h_t(C)$, 
for all $C\in V_{i+1i+1}$, i.e., $B'\in \hat h_t(V_{ii})=V'_{ii}$
and $C'\in \hat h_t(V_{i+1i+1})=V'_{i+1i+1}$. Therefore,
$V'_{ii+1}=\{A'| A'\rightarrow B'C', B'\in V'_{ii}, C'\in
V'_{i+1i+1}\}$.
\vspace*{0.1cm}

If $k\geq 2$, then $l\in \{i, i+1,..., j-1\}$, and
$V'_{ij}=\bigcup_{l=i}^{j-1}\{A'| A'\rightarrow B'C', A'\in \hat h_{\neg t}(A),
B'\in \hat h_{\neg t}(B)\cup \hat h_t(B), B\in V_{il},
C'\in \hat h_{\neg t}(C)\cup\hat h_t(C), C\in V_{l+1j}$,
$B'$ and $C'$ are pairwise nonterminals, $A\rightarrow BC \in P\}$.
We now compute the first set of the above union, i.e., $V'_i=\{A'|
A'\rightarrow B'C', A'\in \hat h_{\neg t}(A), B'\in \hat h_{\neg
t}(B)\cup \hat h_t(B), B\in V_{ii}, C'\in \hat h_{\neg t}(C)\cup\hat
h_t(C), C\in V_{i+1j}$, $B'$ and $C'$ are pairwise nonterminals,
$A\rightarrow BC \in P\}$. By the same reasoning as before, the
condition $B'\in \hat h_{\neg t}(B)\cup \hat h_t(B), B\in V_{ii}$,
is equivalent with $B'\in \hat h_t(V_{ii})=V'_{ii}$.
Because $i+1 \neq j$, $C'$ is a nonterminal substitution of $C$. 
Therefore, $C'\notin \hat h_t(C)$, and the condition
$C'\in \hat h_{\neg t}(C)\cup \hat h_t(C), C\in V_{i+1j}$ is equivalent
with $C'\in \hat h_{\neg t}(V_{i+1j})=V'_{i+1j}$. So that
$V'_i= \{A'| A'\rightarrow B'C', B'\in V'_{ii}, C'\in V'_{i+1j}\}$.
Using the same method for each $l\in \{i+1,..., j-1\}$ we have
 $V'_l=\{A'| A'\rightarrow B'C', A'\in \hat h_{\neg t}(A), B'\in \hat
h_{\neg t}(B)\cup \hat h_t(B), B\in V_{il}, C'\in \hat h_{\neg
t}(C)\cup\hat h_t(C), C\in V_{l+1j}$, $B'$ and $C'$ are pairwise
nonterminals, $A\rightarrow BC \in P\}= \{A'| A'\rightarrow B'C',
B'\in V'_{il}, C'\in V'_{l+1j}\}$. In conclusion, $V'_{ij}=
\bigcup_{l=i}^{j-1} \{A'| A'\rightarrow B'C', B'\in V'_{il}, C'\in
V'_{l+1j}\}$, for each $i,j\in \{1,..., n\}$, i.e., $V'_{ij}$, $i\leq j$, 
contains  the nonterminals of the $n\times n$ triangular matrix computed  
by applying the CYK algorithm to rules that derive $w$ in $G'$.
Because $w\in L(G)$, we have $S\in V_{1n}$. That is equivalent with
$S\in V'_{1n}=\hat h_t(V_{1n})$, if $n=1$, and 
$S\in V'_{1n}=\hat h_{\neg t}(V_{1n})$, if $n>1$, i.e., $w\in L(G')$.
\end{proof}

\begin{corollary}
Let $G$ be a context-free  grammar in Dyck normal form. Any terminal derivation  
in $G$ producing a word of length $n$, $n\geq 1$, takes $2n-1$ steps.
\end{corollary}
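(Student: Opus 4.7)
The plan is to exploit the fact that, by condition 1 of Definition 1.1, a grammar in Dyck normal form is in Chomsky normal form, so every production is either of the form $A\rightarrow BC$ with $B,C\in N$ or $A\rightarrow a$ with $a\in T$. The conclusion is then just the classical step-count for Chomsky normal form derivations, so the extra restrictions (conditions 2--4) play no role here; I would only invoke condition 1.

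I would prove a slightly stronger statement by induction on $n$: for every nonterminal $X\in N$, any derivation $X\Rightarrow^*_G w$ with $w\in T^*$ and $|w|=n$ uses exactly $2n-1$ steps. For the base case $n=1$, the word is a single terminal $a$, and since rules are either binary or terminal, the derivation $X\Rightarrow w$ must be a single application of a rule $X\rightarrow a$, giving $2\cdot 1-1=1$ step. For the inductive step with $n\geq 2$, the first step of the derivation cannot be a terminal rule (it would yield a word of length $1\neq n$), so it must use a rule $X\rightarrow BC$; hence $w=w_1w_2$ with $B\Rightarrow^{k_1}_G w_1$ and $C\Rightarrow^{k_2}_G w_2$, where $|w_1|=n_1\geq 1$, $|w_2|=n_2\geq 1$, and $n_1+n_2=n$. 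By the induction hypothesis $k_1=2n_1-1$ and $k_2=2n_2-1$, so the total number of steps is $1+(2n_1-1)+(2n_2-1)=2(n_1+n_2)-1=2n-1$. Specialising $X=S$ yields the corollary.

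As a sanity check, the same count follows directly from the derivation tree: in a Chomsky normal form derivation the tree is the disjoint union of $n$ unary nodes (one per terminal leaf, corresponding to applications of rules $A\rightarrow a$) and some number $k$ of binary internal nodes. A standard edge-counting argument (edges $=$ nodes $-1$, with edges $=2k+n$ and nodes $=2n+k$) forces $k=n-1$, so the total number of rule applications equals $n+(n-1)=2n-1$. I do not anticipate any serious obstacle; the only point that requires care is noticing that the induction must be formulated for an arbitrary nonterminal rather than only for $S$, so that it can be applied to the subderivations from $B$ and $C$ after the first step.
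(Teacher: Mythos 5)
Your proof is correct and follows the same route as the paper: the paper's entire argument is that a grammar in Dyck normal form is, by condition 1 of Definition 1.1, also in Chomsky normal form, so the classical $2n-1$ step count carries over. The only difference is that you spell out the standard induction (and tree-counting check) for that classical fact, whereas the paper simply cites it as a known property of Chomsky normal form.
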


\begin{proof}
If $G$ is a context-free  grammar in Dyck normal form, then it is also in Chomsky normal form, and all properties of the latter hold.
\end{proof}

\begin{corollary}
If $G=(N, T, P, S)$ is a grammar in Chomsky normal form, and
$G'=(N', T, P', S)$ its equivalent in Dyck normal form, then there
exists a homomorphism $h_d\hspace{-0.1cm}:N'\cup T \rightarrow N \cup T$, such
that any derivation tree of $w\in L(G)$ is the homomorphic image of
a derivation tree of the same word in $G'$.
\end{corollary}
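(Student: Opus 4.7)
The plan is to prove the corollary by induction on the height of a derivation tree of $w$ in $G$, exploiting the homomorphism $h_d$ and the auxiliary relations $\hat h_t,\hat h_{\neg t}$ already introduced in the proof of Theorem~1.2. Recall that $h_d$ sends every nonterminal created by a terminal or nonterminal substitution back to the nonterminal it originally substituted, and fixes all terminals and unsubstituted nonterminals. To establish the corollary it suffices, given any derivation tree $T$ of $w$ in $G$, to construct a derivation tree $T'$ of the same word in $G'$ such that relabelling every node of $T'$ through $h_d$ reproduces $T$.

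For the base case, a tree with a single internal node $A$ and leaf $a$ arises from a rule $A\to a\in P$; by Step~1 of the construction in Theorem~1.2, either this same rule or a rule $A'\to a$ with $h_d(A')=A$ is placed into $P'$, and the corresponding singleton tree in $G'$ is the desired preimage. For the inductive step, let $T$ have root $A$ expanded via $A\to BC\in P$, with children subtrees $T_B,T_C$ yielding substrings $w_1,w_2$. By inductive hypothesis there are derivation trees $T_{B'},T_{C'}$ of $w_1,w_2$ in $G'$ with roots $B',C'$ satisfying $h_d(B')=B$ and $h_d(C')=C$. One then needs a rule $A'\to B'C'$ in $P'$ with $h_d(A')=A$, under which $T_{B'}$ and $T_{C'}$ can be grafted to form $T'$.

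The main obstacle is the pairwise constraint built into Definition~1.1: the sets $\hat h_{\neg t}(B)\cup\hat h_t(B)$ and $\hat h_{\neg t}(C)\cup\hat h_t(C)$ offer many candidates for $B'$ and $C'$, yet only certain pairs are simultaneously legal in a rule of $P'$. To overcome this I would strengthen the inductive statement so that, at sibling positions, the pair $(B',C')$ can always be chosen to be pairwise nonterminals of $G'$. The CYK bookkeeping from Theorem~1.2, which established $V'_{ij}=\hat h_{\neg t}(V_{ij})$ for $i<j$ and $V'_{ii}=\hat h_t(V_{ii})$ and showed that their elements are co-selectable, furnishes exactly this guarantee; the explicit description of $P'$ then supplies an $A'\to B'C'\in P'$ with $h_d(A')=A$. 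Applying $h_d$ node-by-node to the resulting $T'$ yields $T$, completing the proof.
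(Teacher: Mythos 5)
Your overall plan---relabel the given derivation tree of $G$ node by node into a derivation tree of $G'$, arguing by induction and strengthening the hypothesis to cope with the pairwise constraint---is a reasonable way to supply the detail that the paper leaves implicit: the paper's own proof merely defines $h_d(A_t)=h_d({}_ZA)=h_d(A_Z)=A$, $h_d(t)=t$, and declares the claim a direct consequence of how the substituted nonterminals were chosen. But as written your argument has a genuine gap in its key step. The strengthened inductive statement is never actually formulated as a property of a single subtree; what you need is: for \emph{every} variant $X'$ with $h_d(X')=X$ of the appropriate kind (a terminal substitution of $X$ carrying the specific terminal produced at that node when the subtree is a single production $X\to a$, a nonterminal substitution of $X$ otherwise), there is a derivation tree of $G'$ rooted at $X'$ whose node-by-node $h_d$-image is the given subtree. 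Quantifying over all admissible root variants is essential, because in the bottom-up step the parent's rule and the sibling dictate which variant the child's root must carry; ``the pair $(B',C')$ can always be chosen pairwise'' is not, as phrased, an inductive invariant of the child subtrees.

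More importantly, the guarantee you invoke is the wrong one. The CYK bookkeeping of Theorem 1.2 establishes $V'_{ii}=\hat h_t(V_{ii})$ and $V'_{ij}=\hat h_{\neg t}(V_{ij})$, i.e.\ that every substitution variant of a nonterminal derives the same substrings of $w$; this is a statement about derivability only, and the derivation of $w_{i:j}$ from $A'$ that it certifies may use a different rule or a different split than the one used at the corresponding node of the given tree $T$, so it cannot force the mirrored tree shape. What your inductive step actually needs is a property of $P'$ itself: for every rule $A\to BC\in P$, every variant $A'$ with $h_d(A')=A$, and every prescription of the children's kinds dictated by $T$ (which terminal, or that the child is rewritten by nonterminals), there exists a rule $A'\to B'C'\in P'$ with $B',C'$ pairwise, $h_d(B')=B$, $h_d(C')=C$, and of the prescribed kinds. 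This nonemptiness does not follow from the table equalities---indeed the CYK computation in Theorem 1.2 itself tacitly presupposes it---but must be read off Steps 1--3 of the construction, where each terminal or nonterminal substitution duplicates all rules of the substituted nonterminal and is propagated through the right-hand sides. If you replace the appeal to the CYK tables by this rule-duplication property (which is exactly what the paper's one-line proof alludes to), your induction closes; a top-down relabelling of $T$ starting from the common axiom $S$ gives the same conclusion even more directly.
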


\begin{proof}
Consider the homomorphism $h_d\hspace{-0.1cm}:N'\cup T \rightarrow N \cup T$  
defined as $h_d(A_t)= h_d({}_ZA)= h_d(A_Z)=A$, for each $A_t$-terminal 
or ${}_ZA$($A_Z$)-nonterminal substitution of $A$, and $h_d(t)=t$, 
$t\in T$. The claim is a direct consequence of the way in which 
the new nonterminals $A_t$, ${}_ZA$, and $A_Z$ have been chosen. 
\end{proof}
\vspace*{0.1cm}

Note that, due to the pairwise-renaming procedure used to reach the Dyck normal form, it may appear that a context-free  grammar in Dyck normal form is more 
ambiguous than the original grammar in Chomsky normal form. However, this is relative. The derivation trees of a certain word have the 
same structure in both grammars, in Chomsky normal form and Dyck normal form (only some ``labels'' of the nodes in these trees differ). The apparent 
ambiguity can be resolved through the homomorphism $h_d$ considered in Corollary 1.4.
\vspace*{0.1cm}

Let $G$ be a grammar in Dyck normal form. To emphasis the pairwise brackets occurring on the 
right-hand side of a rule, and also to make the connection with the Dyck language, 
each pair $(A, B)$, such that there exists a rule of the form 
$X\hspace{-0.1cm}\rightarrow AB$, is replaced by an indexed pair of brackets 
$[_i, ]_i$. In each rule that rewrites $A$ and $B$, we replace $A$ by $[_i$, and $B$ by 
$]_i$, respectively. Next we present an example of the conversion procedure described in 
the proof of Theorem 1.2 along with the homomorphism considered in Corollary 1.4.
\vspace*{-0.4cm}\\

\begin{example}\rm 
Consider the context-free  grammar in Chomsky normal form  $G\hspace{-0.1cm}=\hspace{-0.1cm}(\{E_0,E,E_1,\\
E_2,T,T_1,T_2,R\},\{+,*, a\}, E_0, P')$, where  
$P'=\{E_0\hspace{-0.1cm}\rightarrow a/TT_1/EE_1,E\hspace{-0.1cm}\rightarrow \hspace{-0.1cm}a/TT_1/EE_1, 
T\hspace{-0.1cm}\rightarrow \hspace{-0.1cm}a/TT_1, T_1\hspace{-0.1cm}\rightarrow T_2R, E_1\hspace{-0.1cm}\rightarrow \hspace{-0.1cm}E_2T, T_2\hspace{-0.1cm}\rightarrow \hspace{-0.1cm}*, E_2\hspace{-0.1cm}\rightarrow \hspace{-0.1cm}+, R\hspace{-0.1cm}\rightarrow a\}$.
\vspace{0.1cm}

To convert $G$ into Dyck normal form, with respect to Definition 1.1, item 2, we first remove 
$E\rightarrow a$ and $T\rightarrow a$. Then, according to
item 3, we remove the right occurrence of $T$ from the rule
$E_1\rightarrow E_2T$, along with other transformations that may be
required after completing these procedures. Let $E_3$ and $T_3$ be two 
new nonterminals. We remove  $E\rightarrow a$ and $T\rightarrow a$, and 
add the rules $E_3\rightarrow a$, $T_3\rightarrow a$, $E_0\rightarrow E_3E_1$, 
$E_0\rightarrow T_3T_1$, $E\rightarrow E_3E_1$, $E\rightarrow T_3T_1$,
$E_1\rightarrow E_2T_3$, $T\rightarrow T_3T_1$.
Let $T'$ be the new nonterminal that replaces the right occurrence
of $T$. We add the rules $E_1\rightarrow E_2T'$, $T'\rightarrow TT_1$,
$T'\rightarrow T_3T_1$, and remove $E_1\rightarrow E_2T$.
We repeat the procedure with $T_3$ (added in the previous step), i.e.,
we introduce a new nonterminal $T_4$, remove $E_1\rightarrow E_2T_3$,
add $E_1\rightarrow E_2T_4$ and  $T_4 \rightarrow a$. 

Due to the new nonterminals $E_3$, $T_3$, $T_4$, item 4  does not hold. 
To have accomplished this condition, we introduce three new nonterminals 
$E_4$ to replace $E_2$ in $E_1\rightarrow E_2T_4$, $E_5$ to replace $E_1$ 
in $E_0\rightarrow E_3E_1$ and $E\rightarrow E_3E_1$, and $T_5$ to
replace $T_1$ in  $E_0\rightarrow T_3T_1$ and $E\rightarrow T_3T_1$.
We remove all the above rules and add the new rules $E_1\rightarrow
E_4T_4$, $E_4 \rightarrow +$, $E_0\rightarrow E_3E_5$, $E\rightarrow
E_3E_5$, $E_5\rightarrow E_2T'$, $E_5\rightarrow E_4T_4$,
$E_0\rightarrow T_3T_5$, $E\rightarrow T_3T_5$, and $T_5\rightarrow
T_2R$. 
\vspace*{0.1cm}

The Dyck normal form of $G'$, in bracketed notation, is 
$G''=(\{E_0,[_1,[_2,...,[_7,]_1,]_2, ...,]_7\},\\ 
\{+,*, a\}, E_0, P'')$, $P''\hspace{-0.1cm}=\hspace{-0.1cm}\{E_0\hspace{-0.1cm}\rightarrow a/[_1\hspace{0.1cm}]_1/[_2\hspace{0.1cm}]_2/[_3\hspace{0.1cm}]_3/[_4\hspace{0.1cm}]_4, 
[_1\rightarrow[_1\hspace{0.1cm}]_1/[_4\hspace{0.1cm}]_4,[_2\rightarrow [_1\hspace{0.1cm}]_1/[_2\hspace{0.1cm}]_2/[_3\hspace{0.1cm}]_3/[_4\hspace{0.1cm}]_4,\\
]_1\rightarrow [_7\hspace{0.1cm}]_7,\hspace{0.1cm}
]_2\rightarrow[_5\hspace{0.1cm}]_5/[_6\hspace{0.1cm}]_6,\hspace{0.1cm}
]_3\rightarrow[_5\hspace{0.1cm}]_5/[_6\hspace{0.1cm}]_6, \hspace{0.1cm}
 ]_4\rightarrow[_7\hspace{0.1cm}]_7, \hspace*{0.1cm}]_5\rightarrow[_1\hspace{0.1cm}]_1/[_4\hspace{0.1cm}]_4, 
[_3\hspace{0.1cm}\rightarrow a,\hspace{0.1cm}[_4\hspace{0.1cm}\rightarrow a,[_5\hspace{0.1cm}\rightarrow +,$\\$[_6\hspace{0.1cm}\rightarrow +,\hspace{0.1cm}]_6\rightarrow a, 
[_7\hspace{0.1cm}\rightarrow *, \hspace{0.1cm}]_7\rightarrow a\}$, 
where $([_T, ]_{T_1})= ([_1,]_1)$, $([_E, ]_{E_1})= ([_2,]_2)$, $([_{E_3}, ]_{E_5})= ([_3,]_3)$, 
$([_{T_3}, ]_{T_5})= ([_4,]_4)$, $([_{E_2}, ]_{T'})= ([_5,]_5)$, $([_{E_4}, ]_{T_4})= ([_6,]_6)$,  
$([_{T_2}, ]_{R})= ([_7,]_7)$. 
\vspace*{0.1cm}

The homomorphism $h_{d}$ is defined as $h_d\hspace{-0.1cm}:N'\cup T \rightarrow N'' \cup T$,  
$h_d(E_0)= E_0$, $h_d([_2)= h_d([_3)=E$, $h_d(]_2)= h_d(]_3)=E_1$,  
$h_d([_5)= h_d([_6)=E_2$, $h_d([_1)= h_d(]_5)=h_d([_4)=h_d(]_6)=T$,
$h_d(]_1)= h_d(]_4)=T_1$, $h_d([_7)=T_2$, $h_d(]_7)=R$, $h_d(t)= t$, for each $t\in T$. 
\vspace*{0.1cm}

The string $w=a*a*a+a$ is a word in $L(G'')=L(G)$ generated, for instance, 
by a leftmost derivation $D$ in $G''$ as follows.
\vspace*{0.1cm}

$D\hspace{-0.1cm}: E_0\Rightarrow [_2\hspace{0.1cm}]_2\Rightarrow
[_1\hspace{0.1cm}]_1\hspace{0.1cm}]_2\Rightarrow
[_4\hspace{0.1cm}]_4\hspace{0.1cm}]_1\hspace{0.1cm}
]_2\Rightarrow a\hspace{0.1cm}]_4\hspace{0.1cm}]_1\hspace{0.1cm}
]_2\Rightarrow a\hspace{0.1cm}[_7\hspace{0.1cm}]_7\hspace{0.1cm}]_1\hspace{0.1cm}
]_2\Rightarrow a\hspace{0.1cm}*\hspace{0.1cm}]_7\hspace{0.1cm}]_1\hspace{0.1cm}
]_2\Rightarrow a\hspace{0.1cm}*\hspace{0.1cm}a\hspace{0.1cm}]_1\hspace{0.1cm}
]_2\Rightarrow a\hspace{0.1cm}*\hspace{0.1cm}a\hspace{0.1cm}[_7\hspace{0.1cm}]_7
\hspace{0.1cm} ]_2\Rightarrow a\hspace{0.1cm}*\hspace{0.1cm}a\hspace{0.1cm}*\hspace{0.1cm}]_7
\hspace{0.1cm} ]_2\Rightarrow a\hspace{0.1cm}*\hspace{0.1cm}a\hspace{0.1cm}*\hspace{0.1cm}a
\hspace{0.1cm} ]_2\Rightarrow a\hspace{0.1cm}*\hspace{0.1cm}a\hspace{0.1cm}*\hspace{0.1cm}a
\hspace{0.1cm} [_6\hspace{0.1cm}]_6\Rightarrow
a\hspace{0.1cm}*\hspace{0.1cm}a\hspace{0.1cm}*\hspace{0.1cm}a
\hspace{0.1cm} +\hspace{0.1cm}]_6\Rightarrow 
a\hspace{0.1cm}*\hspace{0.1cm}a\hspace{0.1cm}*\hspace{0.1cm}a
\hspace{0.1cm} +\hspace{0.1cm}a$. 

Applying $h_{d}$ to $D$, in $G''$, we obtain a derivation of $w$ in $G'$. If we consider $\cal T$ 
the derivation tree of $w$ in $G$, and $\cal T'$ the derivation tree of $w$ in $G''$, then $\cal T$ 
is the homomorphic image of $\cal T'$ through $h_d$. 
\end{example}

\section{Characterizations of Context-Free  Languages by Dyck\\Languages}

\begin{definition}\rm
Let $G_k=(N_k, T, P_k, S)$ be a context-free  grammar in Dyck normal form with $|N_k-\{S\}|=2k$. 
Let $D\hspace{-0.1cm}:S\Rightarrow u_1 \Rightarrow u_2\Rightarrow... \Rightarrow u_{2n-1}=w$,
$n\geq 2$, be a leftmost derivation of $w\in L(G)$. The {\it trace-word} of $w$ 
associated with the derivation $D$, denoted as $t_{w,D}$, is defined as the 
concatenation of nonterminals consecutively rewritten in $D$, excluding the axiom. 
The {\it trace-language} associated with $G_k$, denoted by $\L(G_k)$, is  
$\L(G_k)=\{t_{w,D}|\mbox{ for any } w\in L(G_k), \mbox{ and any leftmost derivation } D \mbox{ of } w \}$.
\end{definition}

Note that $t_{w,D}$, $w\in L(G)$, can also be read from the derivation tree in the 
depth-first search order starting with the root, but ignoring the root and the leaves.
The trace-word associated with $w$ and the leftmost derivation $D$ in 
Example 2.5 is $t_{a*a*a+a,D}=[_E\hspace{0.1cm}[_T\hspace{0.1cm}[_{T_3}
\hspace{0.1cm}]_{T_5}\hspace{0.1cm}[_{T_2}\hspace{0.1cm}]_R
\hspace{0.1cm}]_{T_1}\hspace{0.1cm}[_{T_2}\hspace{0.1cm}]_R
\hspace{0.1cm}]_{E_1}\hspace{0.1cm}[_{E_4}\hspace{0.1cm}]_{T_4}$.

\begin{definition}\rm
A {\it one-sided Dyck language} over $k$ letters, $k\geq 1$, is a context-free  language defined by the 
grammar $\Gamma_k=(\{S\}, T_k, P, S)$, where $T_k=\{[_1,\hspace{0.1cm}[_2,...,[_k, \hspace{0.1cm}]_1,\hspace{0.1cm}]_2,
...,\hspace{0.1cm}]_k \}$ and $P=\{S\rightarrow [_i\hspace{0.1cm}S\hspace{0.1cm}]_i, 
S\rightarrow SS,S\rightarrow [_i\hspace{0.1cm}]_i \hspace{0.1cm}|\hspace{0.1cm}
1\leq i \leq k\}.$ 
\end{definition}

Let $G_k=(N_k, T, P_k, S)$ be a context-free  grammar in Dyck normal form. To emphasize 
possible relations between the structure of trace-words in $\L(G_k)$ and 
the structure of words in the Dyck language, and also to keep control of 
each bracketed pair occurring on the right-hand side of each rule in $G_k$, 
we fix $N_k=\{S, [_1,\hspace{0.1cm}[_2,...,[_k,\hspace{0.1cm}]_1,\hspace{0.1cm}]_2,...,\hspace{0.1cm}]_k \}$, 
and $P_k$ to be composed of rules of the forms $X \rightarrow
[_i\hspace{0.1cm}]_i$, $1\leq i \leq k$, and $Y \rightarrow t$, $X, Y
\in N_k$,  $t \in T$. From \cite{IJR} we have adopted the next characterizations 
of $D_k$, $k\geq 1$, (Definition 2.3, and Lemmas 2.4 and 2.5). 

\begin{definition}\rm
For a string  $w$, let $w_{i:j}$ be its substring starting at the
$i^{th}$ position and ending at the $j^{th}$ position. Let $h$ be a
homomorphism defined as follows: 

\hspace{1.5cm} $h([_1)=h([_2)=...=h([_k)=[_1,$ \hspace{0.7cm}
$h(]_1)=h(]_2)=...=h(]_k)=]_1$.\\
Let $w\in D_k$, $1 \leq i\leq j \leq |w|$, where $|w|$ is the length
of $w$. We say that ($i$, $j$) is a {\it matched pair} of $w$, if $h(w_{i:j})$
is {\it balanced}, i.e., $h(w_{i:j})$ has an equal number of $[_1$'s and
$]_1$'s and, in any prefix of $h(w_{i:j})$, the number of $[_1$'s is
greater than or equal to the number of $]_1$'s.
\end{definition}

\begin{lemma}
A string $w\in \{[_1 \hspace{0.1cm},]_1\}^*$ is in $D_1$ if and only
if it is balanced. 
\end{lemma}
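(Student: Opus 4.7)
The plan is to prove both implications by induction, with the forward direction being a straightforward structural induction on derivations in $\Gamma_1$, and the harder (but still standard) direction requiring strong induction on word length via a ``first matched pair'' decomposition.

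For the forward direction ($w \in D_1 \Rightarrow w$ is balanced), I would induct on the length of the leftmost derivation $S \Rightarrow^* w$ in $\Gamma_1$, or equivalently on the structure of the derivation tree. The base case uses the rule $S \to [_1\,]_1$, and the string $[_1\,]_1$ trivially satisfies both conditions (equal counts; every prefix has at least as many $[_1$'s as $]_1$'s). For the inductive step, if the last rule applied at the root is $S \to [_1 S\,]_1$, then $w = [_1 w'\,]_1$ with $w' \in D_1$; by IH $w'$ is balanced, and wrapping a balanced string in a matching pair preserves the equal-count property and raises each intermediate prefix count by one, so the result is balanced. If the root rule is $S \to SS$, then $w = w_1 w_2$ with $w_1, w_2 \in D_1$ both balanced by IH; the concatenation is clearly balanced since $w_1$ ends at depth zero, so prefixes of $w_2$ contribute the same excess as they did in isolation.

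For the converse ($w$ balanced $\Rightarrow w \in D_1$), I would use strong induction on $|w|$, restricted to non-empty strings (the grammar $\Gamma_1$ of Definition 2.2 does not produce $\lambda$). The base case $|w|=2$ forces $w = [_1\,]_1$, directly derivable. For the inductive step with $|w| \geq 4$, since $w$ is balanced and non-empty, its first symbol must be $[_1$ (otherwise the length-one prefix $]_1$ violates the prefix condition). Let $j$ be the smallest index $> 1$ at which the prefix $w_{1:j}$ has equal counts; such a $j$ exists because the total counts are equal at $j = |w|$. Then $w_j = ]_1$ (the count can only equalize by a closing bracket), and I can write $w = [_1\, v\,]_1 \, u$ where $v = w_{2:j-1}$ and $u = w_{j+1:|w|}$. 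A short argument shows $v$ is balanced: its prefix counts equal the prefix counts of $w_{1:j}$ minus one, which remain non-negative by minimality of $j$, and its total count is zero. A similar subtraction argument shows $u$ is balanced. Now split into two cases: if $u = \lambda$, then $w = [_1\, v\,]_1$; if $v = \lambda$ use the rule $S \to [_1\,]_1$, otherwise use $S \to [_1 S\,]_1$ and derive $v$ by IH (since $|v| < |w|$). If $u \neq \lambda$, apply $S \to SS$ and derive $[_1\,v\,]_1$ and $u$ separately by IH.

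The main obstacle is really just the bookkeeping in the decomposition step of the converse direction, in particular justifying that the ``first return to zero'' $j$ exists and that the resulting substrings $v$ and $u$ inherit the balanced property; this is routine but must be stated carefully since the definition of balanced uses two coupled conditions (equal total count plus non-negative prefix counts under $h$). Nothing else in the proof depends on Dyck normal form or on any earlier construction in the paper—this is the classical characterization of the one-letter Dyck language, and serves only as a preparatory lemma for the $K$-letter generalizations that follow.
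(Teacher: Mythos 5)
Your proof is correct, but be aware that the paper itself contains no proof of this lemma: it is explicitly adopted, together with Definition 2.3 and Lemma 2.5, from the characterization of $D_k$ in \cite{IJR}, so there is no internal argument of the paper to compare yours against. What you supply is the classical self-contained proof of the one-letter case: structural induction over the productions $S\rightarrow [_1\,]_1$, $S\rightarrow [_1 S\,]_1$, $S\rightarrow SS$ of $\Gamma_1$ for the direction $D_1\Rightarrow$ balanced, and strong induction on $|w|$ for the converse, using the ``first return to zero'' index $j$ to split $w=[_1\,v\,]_1\,u$ and checking that $v$ and $u$ inherit the two coupled conditions (zero total excess, nonnegative prefix excess). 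All of these steps are sound, and the decomposition bookkeeping is exactly where the care is needed. You are also right to flag the empty-string caveat: $\lambda$ is vacuously balanced under Definition 2.3 but is not generated by $\Gamma_1$ as given in Definition 2.2, so the equivalence should be read for nonempty $w$ (or $D_1$ taken to include $\lambda$); this is a harmless convention issue that the paper, relying on \cite{IJR}, does not address. With that restriction understood, your argument establishes precisely the fact the paper invokes, and it has the added value of making the paper self-contained on this point.
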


Consider the homomorphisms defined as follows (where $\lambda$ is the empty string)

$h_1([_1)=[_1,$ \hspace{0.1cm} $h_1(]_1)=]_1,$ \hspace{0.1cm}
$h_1([_2)=h_1(]_2)=...=h_1([_k)=h_1(]_k)=\lambda,$

$h_2([_2)=[_1,$ \hspace{0.1cm} $h_2(]_2)=]_1,$ \hspace{0.1cm}
$h_2([_1)=h_2(]_1)=...=h_2([_k)=h_2(]_k)=\lambda,$.\hspace{0.1cm}.\hspace{0.1cm}.\hspace{0.1cm}.\hspace{0.1cm}.


$h_k([_k)=[_1,$ \hspace{0.1cm} $h_k(]_k)=]_1,$ \hspace{0.1cm}
$h_k([_1)=h_k(]_1)=...=h_k([_{k-1})=h_k(]_{k-1})=\lambda.$

\begin{lemma}
We have $w \in D_k$, $k\geq 2$, if and only if the following
conditions hold: $i$) $(1$, $|w|)$ is a matched pair, and $ii$) for all 
matched pairs $(i$, $j)$, $h_k(w_{i:j})$  are in $D_1$, where $k\geq 1$.
\end{lemma}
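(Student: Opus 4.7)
The plan is to prove both implications by induction on $|w|$, using Lemma~2.4 as the engine for decomposing $h(w)$ into primitive $D_1$-blocks. The forward direction is the easier one: induct on a derivation of $w$ in $\Gamma_k$. The base case $w = [_i\,]_i$ is immediate. For the step $w = [_i\,u\,]_i$ with $u \in D_k$, condition (i) holds because $h(w)$ wraps the balanced $h(u)$ with matching outer brackets, and for (ii) every matched pair of $w$ is either $(1,|w|)$ itself or sits strictly inside $u$; in the former, $h_{k'}(w)$ equals either $[_1 \cdot h_{k'}(u) \cdot ]_1$ (when $k' = i$) or $h_{k'}(u)$ (when $k' \neq i$), both balanced by the inductive hypothesis. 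The step $w = u_1 u_2$ is similar, splitting matched pairs into those lying in $u_1$, those lying in $u_2$, and the outer $(1,|w|)$.

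For the converse, assume (i) and (ii) and induct on $|w|$. The base cases are direct: $|w|=0$ is trivial, and if $|w|=2$ then balance of $h(w_{1:2})$ forces $w = [_r\,]_s$, and applying (ii) with $h_r$ to the matched pair $(1,2)$ forces $r = s$, so $w = [_r\,]_r \in D_k$. For the inductive step, (i) and Lemma~2.4 give $h(w) \in D_1$, so decompose $h(w) = u_1' \cdots u_\ell'$ into its primitive $D_1$-factors, inducing a decomposition $w = u_1 \cdots u_\ell$. If $\ell \geq 2$, each $u_i$ is strictly shorter and inherits (i) (since $h(u_i) = u_i'$ is balanced) and (ii) (matched pairs of $u_i$ correspond exactly to matched pairs of $w$ whose range lies inside that of $u_i$), so by induction each $u_i \in D_k$, and $w \in D_k$ by repeated use of $S \to SS$.

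The heart of the argument is the primitive case $\ell = 1$, where I must show $w = [_i\,v\,]_i$ for some $i$ and $v \in D_k$. Write $w_1 = [_i$ and $w_{|w|} = ]_j$. Since $h(w)$ is primitive in $D_1$, $h(w_{2:|w|-1})$ is also balanced, so $(2,|w|-1)$ is itself a matched pair of $w$. Apply condition (ii) with the homomorphism $h_i$ to both $(1,|w|)$ and $(2,|w|-1)$: this yields $h_i(w) \in D_1$ and $h_i(w_{2:|w|-1}) \in D_1$. If $i \neq j$, then $h_i(w_{|w|}) = \lambda$ while $h_i(w_1) = [_1$, so $h_i(w) = [_1 \cdot h_i(w_{2:|w|-1})$, a string with one more $[_1$ than $]_1$, which cannot lie in $D_1$ — a contradiction. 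Hence $i = j$. Setting $v = w_{2:|w|-1}$, the matched pairs of $v$ are precisely the matched pairs of $w$ strictly inside $(1,|w|)$, shifted by one, so $v$ inherits both (i) and (ii), by induction $v \in D_k$, and $w = [_i\,v\,]_i \in D_k$ via the rule $S \to [_i\,S\,]_i$.

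The main obstacle is exactly this type-matching step in the primitive case: a naive attempt that applies (ii) only to $(1,|w|)$ does not yield a contradiction, because $h_i(w)$ could in principle be balanced through internal cancellations among type-$i$ brackets in the interior. The decisive observation is that primitivity of $h(w)$ automatically makes $(2,|w|-1)$ a matched pair as well, so (ii) applies there too, and it is the simultaneous use of both instances that delivers the parity contradiction pinning down $i = j$.
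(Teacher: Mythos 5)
The paper itself gives no proof of this lemma (it is quoted from \cite{IJR} and only used inside the proof of Theorem 2.8), so your argument stands alone. Your converse direction, the substantive half, is correct: factoring $h(w)$ into primitive balanced blocks, recursing when $\ell\geq 2$, and in the primitive case applying condition (ii) \emph{both} to $(1,|w|)$ and to $(2,|w|-1)$ with the homomorphism $h_i$ fixed by the first symbol does force the last symbol to be $]_i$; the counting contradiction is valid, and the inheritance of (i) and (ii) by the shorter pieces only needs the easy observation that a matched pair of a factor is a matched pair of $w$ on the same substring. (Reading ``in $D_1$'' as ``balanced'', so that empty images such as $h_2([_1\,]_1)=\lambda$ are admitted, is forced by the statement itself and is not an issue of your making.)

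The forward direction, however, has a genuine gap in the concatenation step. You split the matched pairs of $w=u_1u_2$ into those lying in $u_1$, those lying in $u_2$, and the outer pair $(1,|w|)$; this trichotomy is false. Take $w=[_1\,]_1\,[_2\,]_2\,[_1\,]_1$ with $u_1=[_1\,]_1$ and $u_2=[_2\,]_2\,[_1\,]_1$: the pair $(1,4)$ is matched, straddles the boundary, and is none of your three kinds, so your induction never certifies $h_{k'}(w_{1:4})\in D_1$ (whatever split the derivation provides, some straddling matched pair of this sort exists). The omission is repairable: if $(a,b)$ is matched with $a\leq|u_1|<b$, then $h(w_{a:|u_1|})$ is a suffix of the balanced $h(u_1)$, hence has at least as many $]_1$'s as $[_1$'s, and is also a prefix of the balanced $h(w_{a:b})$, hence has at least as many $[_1$'s as $]_1$'s; so it is balanced, $(a,|u_1|)$ and $(|u_1|+1,b)$ are matched pairs of $u_1$ and $u_2$ respectively, and $h_{k'}(w_{a:b})$ is the concatenation of their $h_{k'}$-images, which lie in $D_1$ by induction and whose product is again balanced. (This suffix-versus-prefix counting is exactly the trick the paper uses later, inside the proof of Theorem 2.8.) Your nested case $w=[_i\,u\,]_i$ is fine, since the same counting shows a matched pair cannot touch position $1$ or $|w|$ without being $(1,|w|)$, but there too the claim deserves that one-line justification rather than a bare assertion.
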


\begin{definition}\rm
Let $w\in D_k$, ($i$, $j$) is a {\it nested pair} of $w$ if ($i$, $j$) is
a matched pair, and either $j=i+1$, or ($i+1$, $j-1$) is a matched pair.
\end{definition}

\begin{definition}\rm
 Let $w\in D_k$ and ($i$, $j$) be a matched pair of $w$. We say
 that ($i$, $j$) is {\it reduci-\\ble} if there exists an integer 
 $j'$, $i < j' \hspace*{-0.1cm}< j$, such that ($i$, $j'$) and ($j'+1$, $j$)
 are matched pairs.\\
\end{definition}
\vspace*{-0.5cm}

Let $w\in D_k$, if ($i$, $j$) is a nested pair of $w$ then ($i$, $j$) 
is an irreducible pair. If ($i$, $j$) is a nested pair of $w$ then 
($i+1$, $j-1$) may be a reducible pair.

\begin{theorem} The trace-language associated with a context-free  grammar,
$G=(N_k, T, P_k, S)$ in Dyck normal form, with $|N_k|=2k+1$, is a 
subset of $D_k$. 
\end{theorem}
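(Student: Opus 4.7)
The strategy is to view the trace-word combinatorially. Because the derivation $D$ is leftmost, the sequence of nonterminals rewritten in $D$ is exactly the pre-order (depth-first) listing of the internal nodes of the derivation tree $\mathcal{T}_{w,D}$ of $w$; removing the axiom leaves the pre-order listing of all non-root internal nodes. Moreover, Dyck normal form forces $\mathcal{T}_{w,D}$ to have a very rigid shape: every internal node $v$ is rewritten by a rule of the form $X \to [_j\,]_j$ (in which case $v$ has two internal children labeled $[_j$ and $]_j$) or $X \to t$ (in which case $v$ has a single terminal leaf child). Thus the two children of every branching internal node carry matched bracket labels.

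The plan is then to prove by structural induction on the subtree rooted at a non-root internal node $v$ the following auxiliary statement: the pre-order listing $\tau(v)$ of the internal nodes of the subtree equals $\ell(v)\cdot d_v$ for some $d_v\in D_k\cup\{\lambda\}$, where $\ell(v)$ is the label of $v$. The base case is a node $v$ with rule $\ell(v)\to t$, for which $\tau(v)=\ell(v)$ and $d_v=\lambda$. In the inductive step, $v$ has rule $\ell(v)\to [_j\,]_j$ with children $v_L,v_R$ labeled $[_j,]_j$; by the induction hypothesis $\tau(v_L)=[_j d_{v_L}$ and $\tau(v_R)=\,]_j d_{v_R}$ with $d_{v_L},d_{v_R}\in D_k\cup\{\lambda\}$. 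Concatenating gives $\tau(v)=\ell(v)\cdot [_j\, d_{v_L}\,]_j\, d_{v_R}$; the factor $[_j\, d_{v_L}\,]_j$ belongs to $D_k$ by the productions $S\to[_j\,]_j$ and $S\to[_j S\,]_j$ of Definition 2.2, and concatenating with $d_{v_R}$ via $S\to SS$ keeps the result in $D_k$.

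Finally, since $n\geq 2$, the first derivation step is $S\to[_i\,]_i$, producing two non-root children $v_L,v_R$ with labels $[_i,]_i$, and the trace-word is exactly $t_{w,D}=\tau(v_L)\cdot\tau(v_R)=[_i\, d_{v_L}\,]_i\, d_{v_R}$. The same Dyck-closure argument as above places $t_{w,D}$ in $D_k$, proving the theorem.

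The main thing to keep careful is the correspondence between the leftmost derivation step order and pre-order tree traversal (routine but worth stating explicitly), together with the precise handling of the possibly-empty factors $d_{v_L},d_{v_R}$: the argument goes through uniformly because enclosing $\lambda$ with $[_j,]_j$ still produces the Dyck word $[_j\,]_j$, and $SS$-concatenation absorbs an empty right factor.
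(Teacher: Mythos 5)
Your proof is correct, but it follows a genuinely different route from the paper's. The paper proves Theorem 2.8 by an induction on the height of subtrees showing that every subtree (and every left embedded subtree) read in depth-first order corresponds to a \emph{matched pair} of the trace-word whose projections $h_{k'}$ lie in $D_1$, then argues by contradiction that \emph{every} matched pair arises from a subtree or left embedded subtree, and finally invokes the characterization of $D_k$ via matched pairs (Lemmas 2.4 and 2.5, taken from \cite{IJR}). You bypass that machinery entirely: your structural induction shows directly that the pre-order listing of each subtree is its root label followed by a word of $D_k\cup\{\lambda\}$, and you certify membership in $D_k$ using only the defining productions $S\to[_i\,]_i$, $S\to[_iS]_i$, $S\to SS$ of Definition 2.2, with careful handling of the empty factors (which matters, since $\lambda\notin D_k$) and of the restriction $n\geq 2$ that rules out the one-step derivation $S\Rightarrow t$. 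Your argument is shorter, more elementary, and self-contained. What the paper's longer proof buys is the auxiliary structural correspondence between matched pairs of trace-words and subtrees or left embedded subtrees of derivation trees; that correspondence is reused later (for example, the discussion of core segments in Section 3 explicitly refers to the left embedded subtree case of this proof), so if your proof replaced the paper's, that correspondence would have to be established separately. The one step you leave as ``routine'' --- that a leftmost derivation rewrites nonterminals exactly in pre-order of the internal nodes --- is also asserted without proof in the paper (the remark following Definition 2.1), so this is not a gap relative to the paper's own standard of rigor.
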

\vspace*{-0.1cm}

\begin{proof}
Let $N_k=\{S,[_1,...,[_k,\hspace{0.1cm}]_1,...,\hspace{0.1cm}]_k\}$ be the set of 
nonterminals, $w\in L(G)$, and $D$ a leftmost derivation of $w$. We show that any 
subtree of the derivation tree, read in the depth-first search order, by  
ignoring the root and the terminal nodes, corresponds to a matched pair in  
$t_{w,D}$. In particular, $(1, |t_{w,D}|)$ will be a matched pair. Denote by 
${t_{w,D}}_{i:j}$ the substring of $t_{w,D}$ starting at the $i^{th}$ position 
and ending at the $j^{th}$ position of $t_{w,D}$. We show that for all matched 
pairs $(i, j)$, $h_{k'}({t_{w,D}}_{i:j})$ belong to $D_1$, $1\leq k'\leq k$. We 
prove these claims by induction on the height of subtrees.
\vspace{0.1cm}

{\it Basis.} Certainly, any subtree of height $n=1$, read in the
depth-first search order, looks like $[_i \hspace{0.1cm}]_i$, $1\leq i\leq
k$. Therefore, it satisfies the above conditions.
\vspace{0.1cm}

{\it Induction step.} Assume that the claim is true for all subtrees of 
height $\hbar$, $\hbar<n$, and we prove it for $\hbar=n$. Each subtree of height $n$ 
can have one of the following structures. The level $0$ of the subtree is marked by 
a left or right bracket. This bracket will not be considered when we read the subtree. 
Denote by $[_{m}$ the left son of the root. Then the right son is
labeled by $]_{m}$. They are the roots of a left and right subtree,
for which at least one has the height $n-1$.

Suppose that both subtrees have the height $1\leq \hbar \leq n-1$. By the induction 
hypothesis, let us further suppose that the left subtree corresponds to the matched 
pair $(i_l,j_l)$, and the right subtree corresponds to the matched pair $(i_r, j_r)$, 
$i_r=j_l+2$, because the position $j_l+1$ is taken by  $]_{m}$. 
As $h$ is a homomorphism, we have $h({t_{w,D}}_{i_l-1:j_r})=
h([_m{t_{w,D}}_{i_l:j_l}]_m{t_{w,D}}_{j_l+2:j_r})=h([_m)h({t_{w,D}}_{i_l:j_l})
h(]_m)h({t_{w,D}}_{j_l+2:j_r})$. Therefore, $h({t_{w,D}}_{i_l-1:j_r})$ satisfies 
all conditions in Definition 2.3, and thus $(i_l-1,j_r)$ that corresponds to the 
considered subtree of height $n$, is a matched pair. By the induction hypothesis, 
$h_{k'}({t_{w,D}}_{i_l:j_l})$ and $h_{k'}({t_{w,D}}_{i_r:j_r})$ are in $D_1$, $1\leq k'\hspace{-0.1cm}\leq k$. 
Hence, $h_{k'}({t_{w,D}}_{i_l-1:j_r})\hspace{-0.1cm}=\hspace{-0.1cm}h_{k'}([_m)h_{k'}({t_{w,D}}_{i_l:j_l})h_{k'}(]_m)
h_{k'}({t_{w,D}}_{j_l+2:j_r})\in\{h_{k'}({t_{w,D}}_{i_l:j_l})h_{k'}({t_{w,D}}_{j_l+2:j_r}),
[_1h_{k'}({t_{w,D}}_{i_l:j_l})]_1h_{k'}({t_{w,D}}_{j_l+2:j_r})\}$
belong to $D_1$, $1\leq k'\leq k$. Note that in this case the matched pair $(i_l-1,j_r)$ is 
reducible into $(i_l-1,j_l+1)$ and $(j_l+2,j_r)$, where $(i_l-1,j_l+1)$ corresponds to the 
substring ${t_{w,D}}_{i_l-1:j_l+1}=[_m{t_{w,D}}_{i_l:j_l}]_m$. We refer to 
this structure as the {\it left embedded subtree}, i.e., $(i_l-1,j_l+1)$ 
is a nested pair. A similar reasoning is applied for the case when one 
of the subtrees has the height $0$. Analogously, it can be shown that 
the initial tree corresponds to the matched pair $(1, |t_{w,D}|)$, i.e., 
the first condition of Lemma 2.5 holds.
So far, we have proved that each subtree of the derivation tree, and
also each left embedded subtree, corresponds to a matched pair
$(i,j)$ and $(i_l,j_l)$, such that $h_{k'}({t_{w,D}}_{i:j})$ and
$h_{k'}([_m{t_{w,D}}_{i_l:j_l}]_m)$, $1\leq k'\leq k$, are in $D_1$.
\vspace{0.1cm}

Next we show that all matched pairs from $t_{w,D}$ correspond only
to subtrees, or left embedded subtrees, from the derivation tree. To derive 
a contradiction, let us suppose that there exists a matched pair
$(i, j)$ in $t_{w,D}$, that does not correspond to any subtree, or
left embedded subtree, of the derivation tree read in the depth-first search 
order. We show that this leads to a contradiction.
\vspace{0.1cm}

Since $(i, j)$ does not correspond to any subtree, or left embedded subtree, there 
exist two adjacent subtrees $\theta_1$ (a left embedded subtree) and  $\theta_2$ 
(a right subtree) such that $(i, j)$ is composed of two adjacent ``subparts'' of $\theta_1$ 
and $\theta_2$. In terms of matched pairs, if $\theta_1$ corresponds to the matched pair 
$(i_1, j_1)$ and $\theta_2$ corresponds to the matched pair $(i_2, j_2)$, such that $i_2=j_1+2$, 
then there exists a suffix $s_{i_1-1:j_1+1}$ of ${t_{w,D}}_{i_1-1:j_1+1}$, and a prefix 
$p_{i_2:j_2}$ of ${t_{w,D}}_{i_2:j_2}$, such that ${t_{w,D}}_{i:j}=s_{i_1-1:j_1+1}p_{i_2:j_2}$. 
Furthermore, without loss of generality, we assume that $(i_1, j_1)$ and $(i_2, j_2)$ are nested 
pairs. Otherwise, the matched pair $(i, j)$ can be ``narrowed'' until $\theta_1$ and $\theta_2$
are characterized by two nested pairs. If $(i_1, j_1)$ is a nested pair, then so is $(i_1-1, j_1+1)$. 
As $s_{i_1-1:j_1+1}$ is a suffix of ${t_{w,D}}_{i_1-1:j_1+1}$ and $(i_1-1, j_1+1)$ is a matched 
pair, with respect to Definition 2.3, the number of $]_1$'s in $h(s_{i_1-1:j_1+1})$ is greater than 
or equal to the number of $[_1$'s in $h(s_{i_1-1:j_1+1})$. On the other hand,  
$s_{i_1-1:j_1+1}$ is also a prefix of ${t_{w,D}}_{i:j}$, because $(i, j)$ 
is a matched pair, by the induction hypothesis. Therefore, the number of $[_1$'s in 
$h(s_{i_1-1:j_1+1})$ is greater than or equal to the number of $]_1$'s in $h(s_{i_1-1:j_1+1})$. 
Hence, the only possibility for $s_{i_1-1:j_1+1}$ to be, in the same time, a suffix for ${t_{w,D}}_{i_1-1:j_1+1}$ 
and a prefix for ${t_{w,D}}_{i:j}$, is the equality between the number of $[_1$'s and $]_1$'s 
in $h(s_{i_1-1:j_1+1})$. This property holds if and only if $s_{i_1-1:j_1+1}$
corresponds to a matched pair in ${t_{w,D}}_{i_1-1:j_1+1}$, i.e., if $i_s$ and $j_s$ are
the start and the end positions of $s_{i_1-1:j_1+1}$ in ${t_{w,D}}_{i_1-1:j_1+1}$, then
$(i_s, j_s)$ is a matched pair. Thus,  $(i_1-1, j_1+1)$ is a reducible pair
into $(i_1-1, i_s-1)$ and $(i_s, j_s)$, where $j_s=j_1+1$. We have reached a 
contradiction, i.e., $(i_1-1, j_1+1)$ is reducible. 
\vspace{0.1cm}

Therefore, the matched pairs in $t_{w,D}$ correspond to subtrees,
or left embedded subtrees, in the derivation tree. For these matched
pairs we have already proved that they satisfy Lemma 2.5. Accordingly,
$t_{w,D} \in D_k$, and consequently the trace-language associated
with $G$ is a subset of $D_k$. 
\end{proof}

\begin{theorem}
Given a context-free  grammar $G$ there exist an integer $K$, a homomorphism $\varphi$, and 
a subset $D'_K$ of the Dyck language $D_K$, such that $L(G)=\varphi(D'_K)$.  
\end{theorem}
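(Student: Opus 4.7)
The plan is to build on Theorem 2.7 and the pairwise structure of Dyck normal form. First I would apply Theorem 1.2 to replace $G$ by an equivalent grammar $G_K$ in Dyck normal form with nonterminal set $N_K = \{S, [_1, \ldots, [_K, ]_1, \ldots, ]_K\}$, so that $|N_K - \{S\}| = 2K$. Condition 2 of Definition 1.1 then partitions $N_K - \{S\}$ into two disjoint classes: the \emph{nonterminal brackets} $X$, whose rules are only of the form $X \to [_i\, ]_i$, and the \emph{terminal brackets} $X$, which have a single rule $X \to t_X$ for some $t_X \in T$. This partition is exactly what makes it possible to define an unambiguous homomorphism $\varphi : \{[_1, \ldots, [_K, ]_1, \ldots, ]_K\}^* \to T^*$ by setting $\varphi(X) = t_X$ when $X$ is a terminal bracket and $\varphi(X) = \lambda$ otherwise.

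Next I would take $D'_K$ to be essentially the trace-language $\mathcal{L}(G_K)$; by Theorem 2.7 it is already a subset of $D_K$, which gives the containment $D'_K \subseteq D_K$ for free. The core claim to prove is then the equality $L(G_K) = \varphi(D'_K)$.

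For the inclusion $\varphi(\mathcal{L}(G_K)) \subseteq L(G_K)$, I would fix a leftmost derivation $D$ of a word $w \in L(G_K)$ with $|w| = n \geq 2$ and show by induction on $n$ (equivalently on the $2n-1$ steps of $D$, via Corollary 1.3) that $\varphi(t_{w,D}) = w$. The key observation is that in a leftmost derivation in Dyck normal form each step rewrites exactly one nonterminal and either produces two new brackets (contributing nothing to the terminal yield) or produces one terminal symbol of $w$; the left-to-right order of these events coincides with the order of the corresponding symbols in $t_{w,D}$, and this is precisely the information $\varphi$ keeps or discards. The opposite inclusion $L(G_K) \subseteq \varphi(\mathcal{L}(G_K))$ is the symmetric direction: every $t \in \mathcal{L}(G_K)$ comes, by construction, from an actual leftmost derivation in $G_K$, whose terminal yield is $\varphi(t)$.

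The main obstacle, and the reason $D'_K$ is not quite the trace-language, is the handling of the short words excluded by Definition 2.1 (which requires $n \geq 2$): any word $a \in L(G_K)$ produced in one step by a rule $S \to a$ has no trace word, and similarly the empty word would need special treatment if it occurred. I would patch this by setting $D'_K := \mathcal{L}(G_K) \cup E$, where $E \subseteq D_K$ is a finite set containing, for each $a$ with $S \to a \in P_K$, a Dyck word $d_a$ with $\varphi(d_a) = a$; if no pair $[_i\, ]_i$ already satisfies $\varphi([_i\, ]_i) = a$, one fresh pair $[_{K+1}\, ]_{K+1}$ (together with the obvious extension of $\varphi$) suffices, inflating $K$ by at most a constant. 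These are the ``very little exceptions'' advertised in the introduction, and once they are absorbed the equality $L(G) = \varphi(D'_K)$ follows.
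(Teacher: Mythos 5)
Your proposal is correct and follows essentially the same route as the paper: convert $G$ to Dyck normal form, take $D'_K$ to be the trace-language (a subset of $D_K$ by the trace-language theorem), define $\varphi$ by sending each terminal bracket to the terminal it rewrites to and every other bracket to $\lambda$, and absorb the one-step derivations $S\to t$ by adjoining extra bracket pairs --- exactly the paper's extended grammar $G_{k+p}$ and the finite set $L_p$. The only differences are cosmetic: you spell out the induction that the paper dismisses with ``obviously,'' and you allow reusing an existing pair $[_i\,]_i$ with $\varphi([_i\,]_i)=a$ where the paper always adds a fresh pair per rule $S\to t$.
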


\begin{proof}
Let $G$ be a context-free  grammar and $G_k=(N_k, T, P_k, S)$ be the Dyck normal form of
$G$, such that $N_k=\{S, [_1,...,[_k,\hspace{0.1cm}]_1,...,]_k \}$. Let 
$\L(G_k)$ be the trace-language associated with $G_k$. Consider $\{t_{k+1},...,t_{k+p}\}$ the 
ordered subset of $T$, such that $S\rightarrow t_{k+i} \in P$, $1\leq i\leq p$.  
We define $N_{k+p}=N_k \cup \{[_{t_{k+1}}, ..., [_{t_{k+p}}, ]_{t_{k+1}}, ... 
]_{t_{k+p}}\}$, and $P_{k+p}=P_k \cup\{ S\rightarrow [_{t_{k+i}}]_{t_{k+i}}, 
[_{t_{k+i}} \rightarrow t_{k+i}, ]_{t_{k+i}} \rightarrow  \lambda|  S\rightarrow t_{k+i} 
\in P, 1\leq i \leq p\}$. The new grammar $G_{k+p}=(N_{k+p}, T, 
P_{k+p}, S)$ generates the same language as $G_k$. 
\vspace{0.1cm}

Let $\varphi\hspace{-0.1cm}:(N_{k+p}-\{S\})^*\rightarrow T^*$ be the homomorphism
defined by $\varphi(N)=\lambda$, for each rule of the form
$N\rightarrow XY$, $N, X, Y\in N_k-\{S\}$, and $\varphi(N)=t$, for
each rule of the form $N\rightarrow t$, $N\in N_k-\{S\}$, and $t\in
T$, $\varphi([_{k+i})=t_{k+i}$, and $\varphi(]_{k+i})=\lambda$, for
each $1\leq i \leq p$. Obviously, $L=\varphi(D'_K)$, where
$K=k+p$, $D'_K=\L(G_k)\cup{L_p}$, and $L_p=\{[_{t_{k+1}} \hspace{0.1cm}]_{t_{k+1}}, ..., 
[_{t_{k+p}}\hspace{0.1cm}]_{t_{k+p}}\}$.
\end{proof}
\vspace{0.2cm}

In the sequel, grammar $G_{k+p}$  is called the {\it extended grammar} of $G_{k}$. $G_{k}$ has an 
extended grammar if and only if $G_k$ (or $G$) has rules of the form $S\rightarrow t$, $t\in T\cup\{\lambda\}$.
If $G_k$ does not have an extended grammar then $D'_K=D'_k=\L(G_k)$.

\section{On the Chomsky-Sch$\rm\ddot{\textbf{u}}$tzenberger Theorem}

Let $G_k=(N_k, T, P_k, S)$ be an arbitrary context-free  grammar in Dyck normal form, with $N_k=\{S, [_1,...,[_k,\hspace{0.1cm}]_1,...,]_k \}$. 
and $\varphi\hspace{-0.1cm}: (N_k-\{S\})^* \rightarrow T^*$ the restriction of the homomorphism $\varphi$ in the proof of Theorem 2.9.  
We divide $N_k$ into three main sets $N^{(1)}$, $N^{(2)}$, $N^{(3)}$ as follows: \\
\hspace*{0.5cm} 1. $[_i$ and $]_i$ belong to $N^{(1)}$ if and only if $\varphi ([_i)=t$ and $\varphi (]_i)=t'$, $t, t' \in T$,\\ 
\hspace*{0.5cm} 2. $[_i$ and $]_i$ belong to $N^{(2)}$ if and only if $\varphi ([_i)=t$ and $\varphi(]_i)=\lambda$, or vice versa $\varphi ([_i)=\lambda$ \hspace*{1cm} and $\varphi (]_i)=t$, $t\in T$,\\ 
\hspace*{0.5cm} 3. $[_i$, $]_i\in N^{(3)}$ if and only $\varphi ([_i)=\lambda$ and $\varphi (]_i)=\lambda$. 
\vspace{0.1cm}

Certainly, $N_k-\{S\}=N^{(1)} \cup N^{(2)}\cup N^{(3)}$ and $N^{(1)}\cap N^{(2)}\cap N^{(3)}=\emptyset$. 
$N^{(2)}$ is further divided into $N^{(2)}_l$ and $N^{(2)}_r$, where $N^{(2)}_l$ contains those pairs 
$[_i, ]_i\in N^{(2)}$ such that $\varphi ([_i)\neq\lambda$, while $N^{(2)}_r$ contains those pairs 
$[_i, ]_i \in N^{(2)}$ such that $\varphi (]_i)\neq\lambda$. 
\vspace*{-0.1cm}

\begin{definition}\rm
A grammar $G_k$ is in linear-Dyck normal form if $G_k$ is in Dyck normal form and  $N^{(3)} = \emptyset$.  
\end{definition}
\vspace*{-0.3cm}
\begin{theorem}
For each linear grammar $G$, there exits a grammar $G_k$ in linear-Dyck normal form such that $L(G) = L(G_k)$, and vice versa. 
\end{theorem}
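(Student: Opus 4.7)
The plan is to prove the two implications separately. For the forward direction, I would start from a linear grammar $G$, convert it first to Chomsky normal form $\tilde G$ by the standard algorithm, and then, via the construction in the proof of Theorem 1.2, to an equivalent Dyck normal form grammar $G_k$; the task reduces to verifying $N^{(3)} = \emptyset$. Call a nonterminal \emph{terminal-producing} if the only rule rewriting it has the form $A \to a$ with $a \in T$. The key invariant is that in $\tilde G$ every binary rule $A \to BC$ has at least one terminal-producing child. I would verify this by inspecting the CNF binarization of a linear right-hand side $u_1 \cdots u_m \tilde B v_1 \cdots v_n$: once each terminal $u_i, v_j$ is replaced by a fresh pre-terminal variable $U_i, V_j$, any binary split of the sequence $U_1 \cdots U_m \tilde B V_1 \cdots V_n$ isolates $\tilde B$ on one side and a contiguous block of $U$'s and $V$'s on the other, and that block (being composed entirely of terminal-producing variables) is itself a terminal-producing compound nonterminal. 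The pairwise-renaming procedure of Theorem 1.2 only renames variables while preserving the shape of their right-hand sides, so a (transitive) substitute of a terminal-producing nonterminal remains terminal-producing. Hence every rule $X \to [_i ]_i$ of $G_k$ has at least one terminal-producing bracket, which is precisely the condition $N^{(3)} = \emptyset$.

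For the converse, given $G_k$ in linear-Dyck normal form, I would construct an equivalent linear grammar $G' = (N', T, P', S)$ directly. Take $N'$ to consist of $S$ together with the bracket-producing brackets, namely $]_i$ for each pair in $N^{(2)}_l$ and $[_i$ for each pair in $N^{(2)}_r$. For every rule $X \to [_i ]_i \in P_k$, the hypothesis $N^{(3)} = \emptyset$ forces one of three mutually exclusive cases, and I add a single rule to $P'$ accordingly: $X \to t\,t'$ when the pair lies in $N^{(1)}$ with $[_i \to t$ and $]_i \to t'$; $X \to t\,]_i$ when the pair lies in $N^{(2)}_l$ with $[_i \to t$; and $X \to [_i\,t$ when the pair lies in $N^{(2)}_r$ with $]_i \to t$. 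Rules of $P_k$ rewriting $S$ directly to a terminal or $\lambda$ are copied verbatim. Each new rule has at most one nonterminal on its right-hand side, so $G'$ is linear. The equivalence $L(G') = L(G_k)$ follows by a straightforward induction on derivation length: every length-two segment of a $G_k$ derivation in which a rule $X \to [_i ]_i$ is immediately followed by the terminal rewriting of one of its brackets collapses to a single step of $G'$, and conversely each step of $G'$ expands into such a segment.

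The delicate point is the invariant in the forward direction, since the pairwise-renaming procedure of Theorem 1.2 introduces many substituted copies of the original nonterminals, and one must confirm that through the joint action of CNF binarization and subsequent terminal/nonterminal substitutions no rule $X \to [_i ]_i$ ever appears whose two brackets are both bracket-producing. Once this invariant is secured, the converse direction reduces to routine grammar rewriting and induction.
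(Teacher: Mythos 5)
Your converse direction is fine (and more explicit than the paper, which dismisses it as trivial): collapsing each rule $X \to [_i\,]_i$ with the terminal rule of its terminal-producing bracket does yield an equivalent linear grammar, and the induction is routine. The gap is in the forward direction, precisely at the point you yourself flag as delicate. Your verification rests on the claim that a compound nonterminal introduced by binarization for a contiguous block of pre-terminals is ``itself a terminal-producing compound nonterminal.'' That is false whenever the block has length at least two: such a nonterminal is rewritten only by binary rules, so under $\varphi$ it is erased, exactly like $\tilde B$. Concretely, take the linear rule $A \to B\,v_1 v_2$. After replacing terminals by pre-terminals and applying the standard left-to-right CNF binarization you get $A \to B\,C$ and $C \to V_1 V_2$. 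In the rule $A \to B\,C$ neither child is terminal-producing (generically $B$ is rewritten by binary rules, and $C$ certainly is), and the pairwise-renaming of Theorem 1.2 preserves this, so the bracket pair corresponding to $(B, C)$ lands in $N^{(3)}$ and the resulting grammar is \emph{not} in linear-Dyck normal form. So your invariant, as stated for an arbitrary linear right-hand side $u_1\cdots u_m \tilde B v_1\cdots v_n$ and an arbitrary binarization, simply does not hold.

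The repair is easy and is exactly what the paper does: first bring $G$ to the standard form for linear grammars, with rules $X \to \lambda$, $X \to t$, $X \to t_1 Y$, $X \to Y t_2$, $X \to t_1 Y t_2$, i.e., at most one terminal on each side of the nonterminal. Then the CNF conversion produces only rules of the shapes $X \to T_1 Y$, $X \to Y T_2$, $X \to T_1 C$, $C \to Y T_2$, each of which has a pre-terminal child, and your argument that the renaming steps of Theorem 1.2 preserve terminal-producingness then correctly yields $N^{(3)} = \emptyset$. Alternatively, you could keep general linear rules but prescribe a binarization that peels off a single pre-terminal at each step from the appropriate side (left while $U$'s remain, right for the $V$'s), so that every binary rule isolates one pre-terminal; but you must say this explicitly --- ``any binary split'' does not work, and the standard chain binarization is a counterexample.
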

\vspace*{-0.1cm}
\begin{proof}
Each linear grammar $G$, in standard form, is composed of rules of the forms 
$X\rightarrow \lambda$, $X\rightarrow t$, $X\rightarrow t_1Y$, 
$X\rightarrow Yt_2$, $X\rightarrow t_1Yt_2$, $t, t_1, t_2\in T$, $X$, $Y \in N$. 
Transforming $G$ into Chomsky normal form, and then into the Dyck normal form, we obtain a 
grammar $G_k$ in linear-Dyck normal form. Since the standard form for linear languages, Chomsky normal form, and 
Dyck normal form are weakly equivalent we obtain $L(G) = L(G_k)$. The converse statement is trivial.  
\end{proof} 

Next we consider more closely the structures of the derivation trees associated with words 
generated by linear and context-free  grammars in linear-Dyck normal form and Dyck normal form, respectively. We are interested on  
the structure of the trace-words associated with words generated by these grammars. 

Let $G_k=(N_k, T, P_k, S)$ be an arbitrary (linear) context-free  grammar in (linear-)Dyck normal form, and $L(G_k)$ the 
language generated by this grammar. Let $w\in L(G_k)$, $D$ a leftmost derivation of $w$, and 
$t_{w,D}$ the trace-word of $w$ associated with $D$. From the structure of the derivation tree, read 
in the depth-first search order, it is easy to observe that each bracket $[_i$, such that 
$[_i, ]_i\in N^{(1)}$, is immediately followed, in $t_{w,D}$ by its pairwise $]_i$. The same 
property holds for those pairs $[_i, ]_i \in N^{(2)}_l$. If $[_i, ]_i \in N^{(2)}_r\cup N^{(3)}$ 
then the pair $[_i, ]_i$ should embed a left subtree, i.e., the case of the left embedded subtree in 
the proof of Theorem 2.8. In this case the bracket $[_i$ may have a left, long distance, placement from 
its pairwise $]_i$, in $t_{w,D}$. 

Suppose that $G_k$ is a linear grammar in linear-Dyck normal form, i.e., $N^{(3)}=\emptyset$, such that 
$N^{(2)}_l\neq \emptyset$ and $N^{(2)}_r\neq \emptyset$. Each word $w=a_1a_2...a_n\in L(G_k)$, of an 
arbitrary length $n$, has the property that there exists an index $n_t$, $1\leq n_t \leq n-1$, 
and a unique pair\footnote{To emphasize which of the brackets in the pair $([_i, ]_i)$ produces a terminal, 
we also use the notation $[_i, ]^t_i$ if and only if $[_i, ]_i\in N^{(2)}_r$, $[^t_i, ]_i$ if and only 
if $[_i, ]_i \in N^{(2)}_l$, and $[^t_i, ]^t_i$ if and only if $[_i, ]_i\in N^{(1)}$.}  
$[^t_j,]^t_j \in N^{(1)}$,  such that $[^t_j\rightarrow \hspace{-0.1cm}a_{n_t}$ and 
$]^t_j\hspace{-0.1cm}\rightarrow \hspace{-0.1cm}a_{n_t+1}$. Using the homomorphism $\varphi$ in Theorem 2.9, 
we  have $\varphi([^t_j)=a_{n_t}$ and  $\varphi(]^t_j)=a_{n_t+1}$. For the position $n_t$ already 
``marked'', there is no other position in $w$ with the above property. We call $[^t_j \hspace{0.1cm}]^t_j$ 
the {\it core segment} of the trace-word $t_{w,D}$. Trace-words of words generated by context-free  grammars in Dyck normal form have more 
than one core segment. Each core segment induces in a trace-word (both for linear and context-free  languages) a symmetrical 
distribution of right brackets in $N^{(2)}_r\cup N^{(3)}$ (always placed at the right side of the core segment) 
according to left brackets in $N^{(2)}_r\cup N^{(3)}$ (always placed at the left side of the respective core). 
The structure of the trace-word of a word $w\in L(G_k)$, for a grammar $G_k$ in linear-Dyck normal form, is depicted in 
(\ref{p21}), where by vertical lines we emphasize the image through the homomorphism $\varphi$ of each bracket occurring 
in $t_{w,D}$. 

\vspace{0.2cm}

\centerline{\hspace{0.5cm}$t_{w,D}$ = \noindent \begin{tabular}{lllllllllll}
\hspace{-0.2cm}$\textcolor[rgb]{1.00,0.50,0.00}{[_{i_1}\hspace{0.2cm}...}$&$\textcolor[rgb]{1.00,0.50,0.00}{[_{i_{k_1}}}$&$\hspace{-0.2cm}\textcolor[rgb]{0.60,0.20,0.00}{{\large{[}_{j_1}}}$&$\textcolor[rgb]{0.60,0.20,0.00}{\large{]}_{j_1}}$&$\hspace{-0.1cm}\textcolor[rgb]{0.20,0.10,0.90}{[_{i_{k_1+1} ...}}$&$\textcolor[rgb]{0.20,0.10,0.90}{\hspace{-0.1cm}[_{i_{k_2}}}$&$\hspace{-0.2cm}\textcolor[rgb]{0.60,0.20,0.00}{\large{[}_{j_2}}$&$\textcolor[rgb]{0.60,0.20,0.00}{\large{]}_{j_2}}...$&$\hspace{-0.1cm}\textcolor[rgb]{0.60,0.20,0.00}{\large{[}_{j_{n_t-1}}}$&$
\textcolor[rgb]{0.60,0.20,0.00}{\large{]}_{j_{n_t-1}}}$&$\hspace{-0.1cm}\textcolor[rgb]{0.30,0.80,0.50}{[_{i_{k_{n_t-1}+1}}...\hspace{0.1cm}[_{i_{k_{n_t}}}}$\\
\hspace{-0.2cm}$|\hspace{0.5cm}...$&$|$&$\hspace{-0.1cm}|$&$|$&$|\hspace{0.5cm}...$&$\hspace{-0.1cm}|$&$\hspace{-0.1cm}|$&
$|\hspace{0.2cm}...$&$\hspace{0.2cm}|$&$|$&\hspace{0.1cm}$|\hspace{0.9cm}...\hspace{0.3cm}|$\\
\hspace{-0.2cm}$\hspace{-0.1cm}\lambda\hspace{0.4cm}...$&$\hspace{-0.1cm}\lambda$&$\hspace{-0.2cm}\textcolor[rgb]{0.60,0.20,0.00}{a_1}$&$\hspace{-0.1cm}\lambda$&$\hspace{-0.1cm}\lambda\hspace{0.5cm}...$&$\hspace{-0.2cm}\lambda$&$\hspace{-0.1cm}\textcolor[rgb]{0.60,0.20,0.00}{a_2}$&$\hspace{-0.1cm}\lambda\hspace{0.1cm}...$&$\hspace{-0.2cm}\textcolor[rgb]{0.60,0.20,0.00}{a_{n_t-1}}$&$\hspace{-0.1cm}\lambda$&
$\lambda\hspace{1.4cm}\lambda$\\
\end{tabular}}

\begin{equation}
\hspace{0.8cm}\noindent \begin{tabular}{lllllllll}
$\hspace{-0.1cm}\textcolor[rgb]{0.60,0.20,0.90}{\large{[}^t_{j_{n_t}}}$&$\textcolor[rgb]{0.60,0.20,0.90}{\large{]}^t_{j_{n_t}}}$&$\hspace{0.2cm}\textcolor[rgb]{0.30,0.80,0.50}{]_{i_{k_{n_t}}}}\hspace{0.1cm}...\hspace{0.1cm}$&$\hspace{0.2cm}\textcolor[rgb]{0.30,0.80,0.50}{]_{i_{k_{n_t-1}}}}\hspace{0.1cm}...\hspace{0.1cm}$&$\hspace{-0.5cm}\textcolor[rgb]{0.20,0.10,0.90}{]_{i_{k_2}}\hspace{0.2cm}...}$&$\hspace{0.6cm}\textcolor[rgb]{1.00,0.50,0.00}{]_{i_{k_1}}\hspace{0.2cm}...}$&$\hspace{0.1cm}\textcolor[rgb]{1.00,0.50,0.00}{]_{i_1}}$\\
$|$&$|$&$\hspace{0.2cm}|\hspace{0.7cm}...\hspace{1cm}$&$\hspace{0.2cm}|\hspace{1.0cm}...\hspace{0.7cm}$&$\hspace{-0.5cm}|\hspace{0.6cm}...$&$\hspace{0.7cm}|\hspace{0.7cm}...$&$\hspace{0.1cm}|$\\
$\hspace{-0.2cm}\textcolor[rgb]{0.60,0.20,0.90}{a_{n_t}}$&$\hspace{-0.2cm}\textcolor[rgb]{0.60,0.20,0.90}{a_{n_t+1}}$&$\textcolor[rgb]{0.30,0.80,0.50}{a_{n_t+2}}\hspace{0.2cm}...$&$\hspace{-1.2cm}\textcolor[rgb]{0.30,0.80,0.50}{a_{n-k_{n_t-1}...-k_1+1}}\hspace{0.1cm}...\hspace{1cm}$&$\hspace{-1.4cm}\textcolor[rgb]{0.20,0.10,0.90}{a_{n-k_2-k_1+1}\hspace{0.1cm}}...\hspace{0.1cm}$&$\hspace{-0.1cm}\textcolor[rgb]{1.00,0.50,0.00}{a_{n-k_1+1}}\hspace{0.1cm}...$&$\textcolor[rgb]{1.00,0.50,0.00}{a_n}$\\
\end{tabular}
\label{p21}
\end{equation}
\vspace{0.1cm}
Next our aim is to find a connection between Theorem 2.9 and the Chomsky-Sch\"utzenberger theorem. More precisely we want 
to compute, from the structure of trace-words, the regular and the Dyck languages yielded by the Chomsky-Sch\"utzenberger theorem. 
Therefore, we build a transition-like diagram for context-free  grammars in Dyck normal form. First we build some directed graphs as follows.  
 
\begin{construction}\rm  
Let $G_k=(N_k, T, P_k, S)$ be an arbitrary context-free  grammar in Dyck normal form. A {\it dependency graph} of $G_k$ is a 
directed graph ${\cal G}^X=(V_X, E_X)$, $X \in \{]_j| [_j, ]_j\in N^{(3)}\}\cup \{S\}$, in which vertices 
are labeled with variables in $\tilde N_k\cup \{X\}$,  
$\tilde N_k=\{[_i| [_i, ]_i \in N^{(1)} \cup N^{(2)}_r\cup N^{(3)}\}\cup \{]_j| [_j, ]_j \in N^{(2)}_l\}$ and 
the set of edges is built as follows. For each rule $X\rightarrow [_i\hspace{0.1cm}]_i\in P_k$, $[_i, ]_i \in N^{(2)}_l$, 
${\cal G}^X$ contains a directed edge from $X$ to $]_i$, for each rule $X\rightarrow [_i\hspace{0.1cm}]_i\in P_k$, 
$[_i, ]_i \in N^{(1)}\cup N^{(2)}_r\cup N^{(3)}$, ${\cal G}^X$ contains a directed edge from $X$ to $[_i$. There exists an edge 
in ${\cal G}^X$ from a vertex labeled by $[_i$,  $[_i, ]_i \in N^{(2)}_r\cup N^{(3)}$, to a vertex labeled by $]_j$/$[_k$, 
$[_j, ]_j\hspace{-0.1cm}\in N^{(2)}_l$, $[_k, ]_k \in N^{(1)}\cup N^{(2)}_r\cup N^{(3)}$, if there exists a rule in 
$P_k$ of the form $[_i\rightarrow [_j\hspace{0.1cm}]_j$/$[_i\rightarrow [_k\hspace{0.1cm}]_k$. There exists an edge in ${\cal G}^X$ from 
a vertex labeled by $]_i$,  $[_i, ]_i \in N^{(2)}_l$, to a vertex labeled by $]_j$/$[_k$, $[_j, ]_j \in N^{(2)}_l$, 
$[_k, ]_k \in N^{(1)}\cup N^{(2)}_r\cup N^{(3)}$, if there exists a rule in $P_k$ of the form 
$]_i\rightarrow [_j\hspace{0.1cm}]_j$/$]_i\rightarrow [_k\hspace{0.1cm}]_k$. The vertex labeled by $X$ is called 
the {\it initial vertex} of ${\cal G}^X$. Any vertex labeled by a left bracket in $N^{(1)}$ is a {\it final vertex}.
\end{construction} 

Let ${\cal G}^X$ be a dependency graph of $G_k$. Consider the set of all possible paths in ${\cal G}^X$ starting 
from the initial vertex to a final vertex. Such a path is called {\it terminal path}. A {\it loop} or {\it cycle} in 
a graph is a path from $v$ to $v$ composed of distinct vertices. If from $v$ to $v$ there is no other vertex, then 
the loop is a {\it self-loop}. The {\it cycle rank} of a graph is a measure of the loop complexity formally defined\footnote{In brief, the rank of a cycle ${\cal C}$ is $1$ if there exists $v\in {\cal C}$ such that ${\cal C}-v$ 
is not a cycle. Recursively, the rank of a cycle ${\cal C}$ is $k$ if there exists $v\in {\cal C}$ such that 
${\cal C}-v$ contain a cycle of rank $k-1$ and all the other cycles in ${\cal C}-v$ have the rank at most $k-1$.} 
and studied in \cite{Coh} and \cite{E}. In \cite{E} it is proved that from each two vertices $u$ and $v$ belonging 
to a digraph of cycle rank $k$, there exists a regular expression of star-height\footnote{Informally, this is the (maximal) 
power of a nested $*$-loop occurring in the description of a regular expression. For the formal definition 
the reader is referred to \cite{E} and \cite{H1} (see also Definition 4.1, Section 4).} at most $k$ that describes the 
set of paths from $u$ to $v$. On the other hand, the cycle rank of a digraph with $n$ vertices is upper bounded 
by $n\log n$ \cite{GH}. Hence any regular expression obtained from a digraph with $n$ vertices has the star-height 
at most $n\log n$. Consequently, the (infinite) set of paths from an initial vertex to a final vertex in 
${\cal G}^X$, can be divided into a finite number of classes of terminal paths. Paths belonging to the same class 
are characterized by the same regular expression, in terms of $*$ and $+$ Kleene operations, of star-height at most $|V_X|\log |V_X|$ (which is finite related to the lengths of strings in $L(G_k)$). 
\vspace{0.1cm}

Denote by ${\cal R}^X_{[^t_i}$ the set of all regular expressions over $\tilde N_k\cup \{X\}$ that can be read in  
${\cal G}^X$, starting from the initial vertex $X$ and ending in the final vertex $[^t_i$. The cardinality of 
${\cal R}^X_{[^t_i}$ is finite. Define the homomorphism 
$h_{{\cal G}}\hspace{-0.1cm}: \tilde N_k\cup \{X\} \rightarrow \{]_i| [_i, ]_i \in N^{(2)}_r\cup N^{(3)}\}\cup \{\lambda\}$ 
such that $h_{{\cal G}}([_i)= ]_i$ for any $[_i, ]_i \in N^{(2)}_r\cup N^{(3)}$, $h_{{\cal G}}(X)=h_{{\cal G}}([^t_i)=h_{{\cal G}}(]_i)=\lambda$, for any  $[^t_i, ]^t_i \in N^{(1)}$ and $[_i, ]_i \in N^{(2)}_l$. For any element  $r.e^{(l,X)}_{[^t_i}\in {\cal R}^X_{[^t_i}$ we build a new 
regular expression\footnote{Since regular languages are closed under homomorphism and reverse operation,  
$r.e^{(r,X)}_{[^t_i}$ is a regular expression.} $r.e^{(r,X)}_{[^t_i}=h^r_{{\cal G}}(r.e^{(l,X)}_{[^t_i})$, 
where $h^r_{{\cal G}}$ is the mirror image of $h_{{\cal G}}$. Consider $r.e^X_{[^t_i}=r.e^{(l,X)}_{[^t_i}r.e^{(r,X)}_{[^t_i}$. For a certain $X$ and $[^t_i$, denote by 
${\cal R}.e^X_{[^t_i}$ the set of all regular expressions $r.e^X_{[^t_i}$ obtained as above. 
Furthermore,  ${\cal R}.e^X=\bigcup_{[^t_i, ]^t_i \in N^{(1)}}{\cal R}.e^X_{[^t_i}$ and 
${\cal R}.e={\cal R}.e^S\cup (\bigcup_{[_i, ]_i\in N^{(3)}}{\cal R}.e^{]_i}$). 
		
\begin{construction}\rm  
Let $G_k=(N_k, T, P_k, S)$ be a context-free  grammar in Dyck normal form and 
$\{{\cal G}^X| X \in \{]_j| [_j, ]_j \in N^{(3)}\cup \{S\}\}\}$ the set of dependency graphs of $G_k$. The 
{\it extended dependency graph} of $G_k$, denoted by ${\cal G}_e=({\cal V}_e, {\cal E}_e)$, is a directed graph 
for which ${\cal V}_e=\tilde N_k \cup \{S\}\cup\{]_i| [_i, ]_i \in N^{(2)}_r\cup N^{(3)}\}$, $S$ is the initial 
vertex of ${\cal G}_e$ and ${\cal E}_e$ is built as follows:\\
\vspace*{-0.4cm}

$1.$ - $S[_i$ ($S]_j$) - there exists an edge in ${\cal G}_e$ 
from the vertex labeled by $S$ to a vertex labeled by $[_i$ (from $S$ to $]_j$), $[_i, ]_i \in N^{(1)}\cup N^{(2)}_r\cup N^{(3)}$ 
($[_j, ]_j \in N^{(2)}_l$), if there exists a regular expression in ${\cal R}.e^S$ with a prefix of the form $S[_i$ ($S]_j$, respectively).      
\vspace*{0.1cm}

$2.$ - $]_i]_j$ - there exists an edge in ${\cal G}_e$ from a vertex labeled by 
$]_i$ to a vertex labeled by $]_j$ $[_i, ]_i, [_j, ]_j \in N^{(2)}_l$, if there exists a regular expression in ${\cal R}.e$ 
having a substring of the form $]_i]_j$ (if $i=j$ then $]_i]_i$ forms a self-loop in  ${\cal G}_e$).
\vspace*{0.1cm}

$3.$ - $]_i[_j$ (or $[_j]_i$) - there exists an edge in ${\cal G}_e$ 
from a vertex labeled by $]_i$ to a vertex labeled by $[_j$ (or vice versa from $[_j$ to $]_i$) such that $[_i, ]_i\in N^{(2)}_l$ 
and $[_j, ]_j \in N^{(2)}_r \cup N^{(3)}$, if there exists a regular expression in ${\cal R}.e$ having a substring of the form 
$]_i[_j$ ($[_j]_i$, respectively).
\vspace*{0.1cm}

$4.$ - $[_i[_j$ - there exists an edge in ${\cal G}_e$ from a vertex labeled by $[_i$ to a vertex 
labeled by $[_j$, $[_i, ]_i, [_j, ]_j\in N^{(2)}_r\cup N^{(3)}$, if there exists a regular expression in ${\cal R}.e$ having a 
substring of the form $[_i[_j$ (if $i=j$ then $[_i[_i$ forms a self-loop in  ${\cal G}_e$).
\vspace*{0.1cm}

$5.$ - $]_i[^t_j$ (or $[_i[^t_j$) - there exists an edge in ${\cal G}_e$ 
from a vertex labeled by $]_i$ (or by $[_i$) to a vertex labeled by $[^t_j$, $[_i, ]_i \in N^{(2)}_l$ (or $[_i, ]_i \in N^{(2)}_r$, respectively),  
$[^t_j, ]^t_j \in N^{(1)}$, if there exists a regular expression in ${\cal R}.e$ with a substring of the form $]_i[^t_j$ ($[_i[^t_j$, respectively).   
\vspace*{0.1cm}

$6.$ - $]_j[^t_i$ - there exists an edge in ${\cal G}_e$ from a vertex labeled by $]_j$ to a vertex labeled 
by $[^t_i$,  $[_j, ]_j \in N^{(3)}$, $[^t_i, ]^t_i \in N^{(1)}$, if there exists a regular expression in ${\cal R}.e^{]_j}_{[^t_i}$ of the form $]_j[^t_i$. 
\vspace*{0.1cm}

$7.$ - $]_j[_i$ (or $]_j]_i$) - there exists an edge in ${\cal G}_e$ from a 
vertex labeled by $]_j$ to a vertex labeled by $[_i$, $[_j, ]_j \in N^{(3)}$, $[_i, ]_i \in N^{(2)}_r$ ($[_i, ]_i \in N^{(2)}_l$, respectively), 
if there exists a regular expression in ${\cal R}.e^{]_j}$ with a prefix of the form $]_j[_i$ ($]_j]_i$, respectively).  
\vspace*{0.1cm}

$8.$ - $]_i]_j$ - there exists an edge in ${\cal G}_e$ from a vertex  $]_i$ to a vertex labeled by 
$]_j$, $i$ and $j$ not necessarily distinct, such that $[_i, ]_i\in N^{(2)}_r$, $[_j, ]_j\in N^{(2)}_r\cup N^{(3)}$, if either 
$i.$, $ii.$, or $iii.$ holds:\\
\hspace*{0.3cm} $i.$ there exists at least one regular expression in ${\cal R}.e$ having a substring of the form $]_i]_j$ (if $i=j$ then 
$]_i]_i$ forms a self-loop in  ${\cal G}_e$),\\ 
\hspace*{0.3cm} $ii.$ there exists $[_k, ]_k \in N^{(3)}$ such that there exist a regular expression in ${\cal R}.e$ with a substring of 
the form $]_k]_j$, and a regular expression in ${\cal R}.e^{]_k}$ that ends in $]_i$ (if $i=j$ then $]_i]_i$ is a self-loop). \\
\hspace*{0.3cm} $iii.$ there exist $[_k, ]_k,[_{k_1}, ]_{k_1}, ...,[_{k_m}, ]_{k_m} \in N^{(3)}$ such that there exist   
a regular expression in ${\cal R}.e$ with a substring of the form $]_k]_j$, a regular expression in ${\cal R}.e^{]_k}$ that ends in $]_{k_1}$, a 
regular expression in ${\cal R}.e^{]_{k_1}}$ that ends in $]_{k_2}$, and so on, until a regular expression in ${\cal R}.e^{]_{k_{m-1}}}$ 
ending in $]_{k_m}$ and a regular expression in ${\cal R}.e^{]_{k_m}}$ ending in $]_i$ are reached.
\vspace*{0.1cm}

$9.$ - $[^t_i]_j$ - there exists an edge in ${\cal G}_e$ from a vertex labeled by $[^t_i$,  
$[^t_i, ]^t_i \in N^{(1)}$, to a vertex labeled by $]_j$, $[_j, ]_j\in N^{(2)}_r\cup N^{(3)}$ if either $i.$, $ii.$, or $iii.$ holds\\
\hspace*{0.3cm} $i.$ there exists a regular expression in ${\cal R}.e$ having a substring of the form $[^t_i]_j$,\\ 
\hspace*{0.3cm} $ii.$ there exists $[_k, ]_k \in N^{(3)}$ such that there exist a regular expression in ${\cal R}.e$ having a 
substring of the form $]_k]_j$, and a regular expression in ${\cal R}.e^{]_k}_{[^t_i}$ that ends in $[^t_i$. \\
\hspace*{0.3cm} $iii.$ there exist $[_k, ]_k,[_{k_1}, ]_{k_1}, ...,[_{k_m}, ]_{k_m} \in N^{(3)}$  such that there exist 
a regular expression in ${\cal R}.e$ with a substring of the form $]_k]_j$, a regular expression in ${\cal R}.e^{]_k}$ that ends  
in $]_{k_1}$, a regular expression in ${\cal R}.e^{]_{k_1}}$ that ends in $]_{k_2}$, and so on, until a regular expression 
in ${\cal R}.e^{]_{k_{m-1}}}$ ending in $]_{k_m}$ and a regular expression in ${\cal R}.e^{]_{k_m}}_{[^t_i}$ ending in $[^t_i$ are reached.
\vspace*{0.1cm}

\hspace*{-0.2cm}$10.$ -  A vertex labeled by $[^t_i$, $[^t_i, ]^t_i \in N^{(1)}$, is a final vertex in 
${\cal G}_e$ if either $i.$, $ii.$, or $iii.$ holds: \\
\hspace*{0.3cm} $i.$ there exists a regular expression in ${\cal R}.e^S$ that ends in $[^t_i$, \\
\hspace*{0.3cm} $ii.$ there exists $[_k, ]_k \in N^{(3)}$, such that there exist a regular 
expression in ${\cal R}.e^S$ that ends in $]_k$, and a regular expression in ${\cal R}.e^{]_k}_{[^t_i}$ that ends in $[^t_i$.\\ 
\hspace*{0.3cm} $iii.$ there exists $[_k, ]_k \in N^{(3)}$ such that there exist a regular expression in ${\cal R}.e^S$ that ends  
in $]_k$, and $[_{k_1}, ]_{k_1}, ...,[_{k_m}, ]_{k_m} \hspace*{-0.1cm}\in N^{(3)}$  such that there is a regular expression in 
${\cal R}.e^{]_k}$ that ends in $]_{k_1}$, a  regular expression in ${\cal R}.e^{]_{k_1}}$ that ends in $]_{k_2}$, and so on, until 
a  regular expression in ${\cal R}.e^{]_{k_{m-1}}}$ ending in $]_{k_m}$ and a  regular expression in ${\cal R}.e^{]_{k_m}}$ ending in $[^t_i$ are reached.
\vspace*{0.1cm}

\hspace*{-0.2cm}$11.$ -  A vertex labeled by $]_i$, $[_i, ]_i \in N^{(2)}_r$,  is a final vertex in ${\cal G}_e$ if either 
$i.$, $ii.$, or $iii.$ holds:\\
\hspace*{0.3cm} $i.$ there exists a regular expression in ${\cal R}.e^S$ that ends in $]_i$, \\
\hspace*{0.3cm} $ii.$ there exists $[_k, ]_k \in N^{(3)}$, such that there exist a regular expression in 
${\cal R}.e^S$ that ends  in $]_k$, and a regular expression in ${\cal R}.e^{]_k}$ that ends in $]_i$.\\
\hspace*{0.3cm} $iii.$ there exists $[_k, ]_k \in N^{(3)}$ such that there exist a regular expression in 
${\cal R}.e^S$ that ends  in $]_k$, and $[_{k_1}, ]_{k_1}, ...,[_{k_m}, ]_{k_m} \in N^{(3)}$  such that 
there exist a regular expression in ${\cal R}.e^{]_k}$ ending in $]_{k_1}$, a regular expression in 
${\cal R}.e^{]_{k_1}}$ ending in $]_{k_2}$, and so on, until a regular expression in ${\cal R}.e^{]_{k_{m-1}}}$ 
ending in $]_{k_m}$ and a regular expression in ${\cal R}.e^{]_{k_m}}$ ending in $]_i$ are reached. 
\end{construction}

Denote by ${\cal R}_e$ the set of all regular expressions obtained by reading all paths in ${\cal G}_e$ from 
the initial vertex $S$ to all final vertices (i.e., all terminal paths). We have  

\begin{theorem}(Chomsky-Sch\"utzenberger theorem)\rm \hspace{0.1cm} For each context-free  language $L$ there 
exist an integer $K$, a regular set $R$, and a homomorphism $h$, such that $L=h(D_K\cap R)$. Furthermore, if $G$ 
is the context-free  grammar that generates $L$, $G_k$ the Dyck normal form of $G$, and $G_k$ has no extended grammar, then $K=k$ and 
$D_K\cap R=\L(G_k)$. Otherwise, there exists $p>0$ such that $K=k+p$, and $D_K\cap R=D'_K$, where $D'_K$ is the 
subset of $D_K$ computed as in Theorem 2.9. 
\end{theorem}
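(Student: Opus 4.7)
The plan is to build on Theorem 2.9, which already delivers a homomorphism $\varphi$ and a subset $D'_K\subseteq D_K$ with $L=\varphi(D'_K)$. What remains is to exhibit a regular language $R$ such that $D'_K=D_K\cap R$, and then set $h:=\varphi$ to obtain the Chomsky--Sch\"utzenberger representation $L=h(D_K\cap R)$. I would take $R$ to be the regular language over $N_k\cup\{S\}$ denoted by the set ${\cal R}_e$ of all labels of terminal paths in the extended dependency graph ${\cal G}_e$, starting at the initial vertex $S$ and ending at a final vertex. Since ${\cal G}_e$ is a finite labeled digraph with distinguished initial and final vertices, $R$ is indeed regular.

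The main content of the proof is the equality $D'_K=D_K\cap R$, which I would establish by two inclusions. For $D'_K\subseteq D_K\cap R$: Theorem 2.8 already gives $\L(G_k)\subseteq D_k$, so I only need to verify that every trace-word $t_{w,D}$ labels some terminal path of ${\cal G}_e$. I would proceed by induction on the height of the derivation tree of $w$, reading the tree in depth-first order. The eleven cases in Construction 3.4 are tailored precisely so that every pair of consecutive bracket symbols that can appear in a trace-word is matched by an edge of ${\cal G}_e$: items $1$--$7$ cover the short-range transitions inside a single dependency graph ${\cal G}^X$, while items $8$--$11$ account for the long-range transitions that occur when a chain of left embedded subtrees whose roots lie in $N^{(3)}$ is traversed before reaching the next core segment or closing bracket. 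At each inductive step, I would append to the currently built path either a regular expression from some ${\cal R}.e^X_{[^t_i}$ (corresponding to a left embedded subtree) or a bridging edge supplied by one of items $8$--$11$.

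The reverse inclusion $D_K\cap R\subseteq D'_K$ is the harder direction and is where I expect the principal obstacle. The graph ${\cal G}_e$ by itself captures only the local adjacencies permitted by rules of $G_k$, so a path in ${\cal G}_e$ need not yield a balanced word; conversely, balance alone does not guarantee that a word traces a valid derivation. The key point is that when a word $w$ lies in both $D_K$ and $R$, these two conditions together force $w$ to be the depth-first reading of a unique derivation tree of $G_k$. Concretely, I would argue that the positions of the core segments and the left embedded subtrees are completely determined by $w$ once one scans it left to right, maintaining a stack of unmatched $[_i\in N^{(2)}_r\cup N^{(3)}$ brackets: balance guarantees the existence of the matching $]_i$, while membership in $R$ guarantees that the sequence of symbols traversed between matching brackets is exactly the label of a terminal sub-path of some ${\cal G}^X$, which by Construction 3.3 corresponds to a legal sequence of rule applications in $G_k$. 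The delicate subcase is the nesting of $N^{(3)}$-pairs, whose left and right brackets may be separated by arbitrarily many symbols; here the telescoping items $8$--$11$ of Construction 3.4 are essential to ensure that the stack-matching procedure reconstructs a genuine derivation.

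Having established $D'_K=D_K\cap R$, applying $\varphi$ yields $L=\varphi(D'_K)=\varphi(D_K\cap R)=h(D_K\cap R)$, with $K=k$ in the non-extended case. If $G_k$ admits an extended grammar $G_{k+p}$, I would set $K=k+p$ and enlarge $R$ by the finite union of the singleton words $[_{t_{k+i}}]_{t_{k+i}}$ introduced in Theorem 2.9; this preserves regularity and gives $D_K\cap R=D'_K=\L(G_k)\cup L_p$, closing the argument.
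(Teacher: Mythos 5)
Your overall route is the paper's own: start from Theorem 2.9, manufacture a regular language from the extended dependency graph ${\cal G}_e$ of Construction 3.4, prove $D_K\cap R=D'_K$, and take $h=\varphi$; your treatment of the extended-grammar case (adjoining the words $[_{t_{k+i}}]_{t_{k+i}}$) is exactly the effect of the paper's ${\cal r}_e$ together with the extension of $h_k$ to $h_K$. The one genuine defect is your choice of $R$. You set $R$ equal to the raw path-label language ${\cal R}_e$, but the vertex set of ${\cal G}_e$ contains only one bracket of each pair in $N^{(1)}$ (the left one, $[^t_i$) and only one bracket of each pair in $N^{(2)}_l$ (the right one, $]_i$), and every terminal path begins with the symbol $S$. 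Hence no word of ${\cal R}_e$ is a balanced string over the bracket alphabet: the closing brackets $]^t_i$ of $N^{(1)}$ pairs and the opening brackets $[_i$ of $N^{(2)}_l$ pairs never occur in it, and $S$ is not a Dyck letter at all. With your $R$ the intersection $D_K\cap R$ is empty, not $D'_K$, and your key claim that ``every trace-word $t_{w,D}$ labels some terminal path of ${\cal G}_e$'' is false as stated, since a trace word contains both brackets of every pair used in the derivation.

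What is missing is precisely the paper's homomorphism $h_k$, which erases $S$, acts as the identity on brackets of pairs in $N^{(2)}_r\cup N^{(3)}$, and expands $]_i\mapsto [_i\,]_i$ for $[_i,]_i\in N^{(2)}_l$ and $[^t_i\mapsto [^t_i\,]^t_i$ for $[^t_i,]^t_i\in N^{(1)}$; the regular set of the theorem is $R=h_k({\cal R}_e)$ (respectively $h_K({\cal R}_e\cup{\cal r}_e)$ in the extended case), and the correct form of your first inclusion is that every trace word is the $h_k$-image of a terminal path label. With that correction your two-inclusion argument goes through and is, if anything, more explicit than the paper's own justification, which essentially asserts that Construction 3.4 reproduces the structure of trace words and that the ``star-height synchronization'' of the $N^{(2)}_r\cup N^{(3)}$ brackets on the two sides of each core segment, as well as the harmlessness of concatenating a left expression with a mismatched mirror image, is policed by the intersection with $D_k$ --- the same point your stack-scanning argument makes for the harder inclusion.
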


\begin{proof}
Let $G_k=(N_k, T, P_k, S)$ be the Dyck normal form of $G$ such that $L=L(G)$. Suppose that $G_k$ does not have an extended 
grammar. Let $h_k\hspace{-0.1cm}: \tilde N_k\cup \{]_i| [_i,]_i \in N^{(2)}_r\cup N^{(3)}\}\cup \{S\}\rightarrow \{[_i,]_i| [_i,]_i \in N^{(2)}_r\cup N^{(3)}\}\cup\{[_i]_i| [_i,]_i \in N^{(2)}_l\cup N^{(1)}\}\cup \{\lambda\}$ be the homomorphism defined by $h_k(S)=\lambda$, $h_k([_i)= [_i$, $h_k(]_i)=\hspace{0.1cm}]_i$ 
for $[_i, ]_i \in N^{(2)}_r\cup N^{(3)}$, $h_k(]_i)=[_i\hspace{0.1cm}]_i$ for $[_i, ]_i \in N^{(2)}_l$, 
and $h_k([^t_i)=[^t_i\hspace{0.1cm}]^t_i$ for $[^t_i, ]^t_i \in N^{(1)}$. Then $R=h_k({\cal R}_e)$ is a 
regular language such that $D_k\cap R=\L(G_k)$. 

To prove the last equality, notice that each terminal path in a dependency graph ${\cal G}^X$ (Construction 3.3) provides a string equal to a substring (or a prefix if $X=S$) of a trace-word in $\L(G_k)$ (in which left brackets in $N^{(2)}_l$ are omitted) generated (in the leftmost derivation order) from the derivation time when $X$ is rewritten, up to the moment when the very first left bracket of a pair in $N^{(1)}$ is rewritten. This string corresponds to a regular expression $r.e^{(l,X)}_{[^t_i}\in {\cal R}^{X}_{[^t_i}$, which is extended with another regular 
expression $r.e^{(r,X)}_{[^t_i}$ that is the ``mirror image'' of left brackets in $N^{(2)}_r\cup N^{(3)}$ occurring 
in $r.e^{(l,X)}_{[^t_i}$. If left brackets in $N^{(2)}_r\cup N^{(3)}$ are enrolled in a star-height, then their homomorphic image (through $h^r_{\cal G}$) in $r.e^{(r,X)}_{[^t_i}$ is another star-height. The ``mirror image'' of consecutive left brackets in $N^{(2)}_r$ (with respect to their relative core) is a segment composed of consecutive right brackets in $N^{(2)}_r$. The ``mirror image'' of consecutive left brackets in $N^{(3)}$ is ``broken'' by the 
interpolation of a regular expression $r.e^{]_j}_{[^t_i}$ in ${\cal R}.e^{]_j}$, $[_j, ]_j \in N^{(3)}$. The 
number of $r.e^{]_j}_{[^t_i}$ insertions matches the number of left brackets $[_j$ placed at the left side of the relative core (this is assured by the intersection with $D_k$). In fact, the  extended dependency graph of $G_k$ 
has been conceived such that it reproduces, on regular expressions in ${\cal R}_e$, the structure of trace-words in $\L(G_k)$. The main problem is the ``star-height synchronizations'' for brackets in $N^{(2)}_r\cup N^{(3)}$, i.e., 
the number of left-brackets occurring in a loop placed at the left-side of a core segment $[^t_i \hspace{0.1cm}]^t_i$, to be equal to 
the number of their pairwise right-brackets occurring in the corresponding ``mirror'' loop placed at the right-side of its relative core, 
$[^t_i \hspace{0.1cm}]^t_i$, $[^t_i\hspace{0.1cm}\in N^{(1)}$. 
This is controlled by the intersection of $h_k({\cal R}_e)$ with $D_k$, leading to $\L(G_k)$. In 
few words, the proof is by the construction described in Construction 3.4.  Another problem that occurs is that 
the construction of ${\cal G}_e$ allows to concatenate $r.e^{(l,X)}_{[^t_i}\in {\cal R}^{X}_{[^t_i}$ to its right pairwise $r.e^{(r,X)}_{[^t_i}$ 
as well as to another regular expression $\overline{r.e}^{(r,X')}_{[^t_i}$ (which by construction it is also concatenated to its left pairwise $\overline{r.e}^{(l,X')}_{[^t_i}$) where $X$ and $X'$ 
are not necessarily distinct. This does not change the intersection with the Dyck language, but enlarges the regular language $R=h_k({\cal R}_e)$ with useless\footnote{In Section 4 we show how these unnecessary  concatenations can be avoided, through 
a refinement procedure of the regular language in the Chomsky-Sch\"utzenberger theorem.} words.  
  

If $G_k$ has an extended grammar $G_{k+p}\hspace{-0.1cm}=(N_{k+p}, T, P_{k+p}, S)$, built as in the proof of 
Theorem 2.9, then ${\cal R}_e$ is augmented with ${\cal r}_e\hspace{-0.1cm}=\hspace{-0.1cm}\{S[_{t_{k+1}}, 
..., S[_{t_{k+p}}\}$ and $h_k$ is extended to  $h_K\hspace{-0.1cm}: \tilde N_k \cup \{S\}\cup \{]_i| [_i,]_i 
\in N^{(2)}_r\cup N^{(3)}\}\cup\{[_{t_{k+1}}, ..., [_{t_{k+p}}\} \hspace{-0.1cm}\rightarrow \hspace{-0.1cm}\{[_i,]_i| [_i,]_i \in N^{(2)}_r\cup N^{(3)}\}\cup \{[_i]_i| [_i,]_i\hspace{-0.1cm}\in N^{(2)}_l\cup N^{(1)}\} \cup \{[_{t_{k+1}}]_{t_{k+1}},$ $ ..., [_{t_{k+p}}]_{t_{k+p}}\}\cup \{\lambda\}$, 
where $h_K(x)=h_k(x)$, $x\notin \{[_{t_{k+1}}, ..., [_{t_{k+p}}\}$, $h_K([_{t_{k+j}})=[_{t_{k+j}}]_{t_{k+j}}$, $1\leq j\leq p$, $K=k+p$. $\L(G_k)$ 
is augmented with $L_p=\{[_{t_{k+1}} ]_{t_{k+1}}, ..., [_{t_{k+p}}]_{t_{k+p}}\}$ and 
$D'_K=h_K({\cal R}_e\cup {\cal r}_e)\cap D_K =\L(G_k)\cup L_p$. 
\vspace{0.1cm}

The homomorphism $h$ is equal to $\varphi$ in 
Theorem 2.9, i.e., $\varphi\hspace{-0.1cm}: (N_{k+p}- \{S\})^*\rightarrow T^*$, $\varphi(N)=\lambda$, 
for each rule of the form $N\rightarrow XY$, $N, X, Y\in N_k$, and $\varphi(N)=t$, for each rule of the form 
$N\rightarrow t$, $N\in N_k-\{S\}$, $t\in T$, $\varphi([_{k+i})=t_{k+i}$, and $\varphi(]_{k+i})=\lambda$, for 
each $1\leq i \leq p$.
\end{proof}
\vspace{0.1cm}

Note that, for the case of linear languages there is only one dependency graph ${\cal G}^S$. The regular 
language in the Chomsky-Sch\"utzenberger theorem can be built without the use of the extended dependency graph. It suffices 
to consider only the regular expressions in ${\cal R}.e^S=\bigcup_{[^t_i, ]^t_i \in N^{(1)}}{\cal R}.e^S_{[^t_i}$. If $G_k$ has an extended grammar $G_K$, then $L(G_k)= \varphi(D_K\cap h_K({\cal R}.e^S\cup {\cal r}_e))$, where 
$K=k+p$, $G_K$, ${\cal r}_e$, and $\varphi$ are defined as in Theorems 2.9 and 3.5. If $G_k$ has no extended 
grammar then $L(G_k)= \varphi(D_k\cap h_k({\cal R}.e^S))$. However, a graphical representation may be considered 
an interesting common framework for both, linear and context-free  languages. Below we illustrate the manner in which the 
regular language in the Chomsky-Sch\"utzenberger theorem can be computed for linear (Examples 3.6) and context-free  (Example 3.7) languages. 
\vspace*{-0.1cm}

\begin{example}\rm
Consider the linear context-free  grammar  $G=( \{S,[_1...,[_7, ]_1...,]_7\}, \{ a,b,c,d\},\\ S, P)$ in linear-Dyck normal form, with  
$P\hspace{-0.1cm}=\{ S\rightarrow [^t_1\hspace{0.1cm}]_1,  ]_1\hspace{-0.1cm}\rightarrow [_2\hspace{0.1cm}]^t_2,[_2\rightarrow [_3\hspace{0.1cm}]^t_3, 
[_3\rightarrow [_2\hspace{0.1cm}]^t_2/[^t_4\hspace{0.1cm}]_4,]_4\rightarrow [_5\hspace{0.1cm}]^t_5, 
[_5\rightarrow [^t_6\hspace{0.1cm}]_6, ]_6\rightarrow [^t_1\hspace{0.1cm}]_1/[^t_7\hspace{0.1cm}]_7^t, [^t_1
\rightarrow a, ]^t_2 \rightarrow b,]^t_3 \rightarrow c, [^t_4 \rightarrow b, ]^t_5 \rightarrow d,
[^t_6 \rightarrow b, [^t_7 \rightarrow a, ]^t_7 \rightarrow a\}$.
\vspace*{-0.3cm}\\

\hspace{-0.1cm}The dependency graph ${\cal G}^S$ and extended dependency graph ${\cal G}_e$ of $G$ are depicted in 
Figu-\\re 1.a and 1.b, respectively. There exists only one regular expression readable from ${\cal G}^S$, i.e.,  
$r.e^{(l,S)}_{[^t_7}\hspace{-0.1cm}=S(]_1([_2[_3)^+]_4[_5]_6)^+[^t_7$. Hence, $r.e^{S}_{[^t_7}=r.e^{(l,S)}_{[^t_7}r.e^{(r,S)}_{[^t_7}\hspace{-0.1cm}=S(]_1([_2[_3)^+]_4
[_5]_6)^+[^t_7(]_5(]_3]_2)^+)^+$.

The regular language provided by the Chomsky-Sch\"utzenberger theorem is 

\centerline{$R=([_1\hspace{0.1cm}]_1([_2\hspace{0.1cm}[_3)^+[_4\hspace{0.1cm}]_4\hspace{0.1cm}
[_5\hspace{0.1cm}[_6\hspace{0.1cm}]_6)^+[^t_7]^t_7(]_5\hspace{0.1cm}(]_3\hspace{0.1cm}]_2)^+)^+$.}

Therefore, $D'_7= D_7\cap R =\{([_1\hspace{0.1cm}]_1([_2\hspace{0.1cm}[_3)^n[_4\hspace{0.1cm}]_4
\hspace{0.1cm}[_5\hspace{0.1cm}[_6\hspace{0.1cm}]_6)^m[^t_7]^t_7(]_5\hspace{0.1cm}(]_3
\hspace{0.1cm}]_2)^n)^m|n,m\geq 1\}=\L(G_k)$, and $L(G)= \varphi(D'_7)=\hspace{-0.1cm}\{(abb)^maa(d(cb)^n)^m|n,m\geq 1\}$ 
($G$ contains no rule of the form $S\rightarrow t$, $t\in T$). 

\begin{figure}
	\centering
		\includegraphics[scale=0.56]{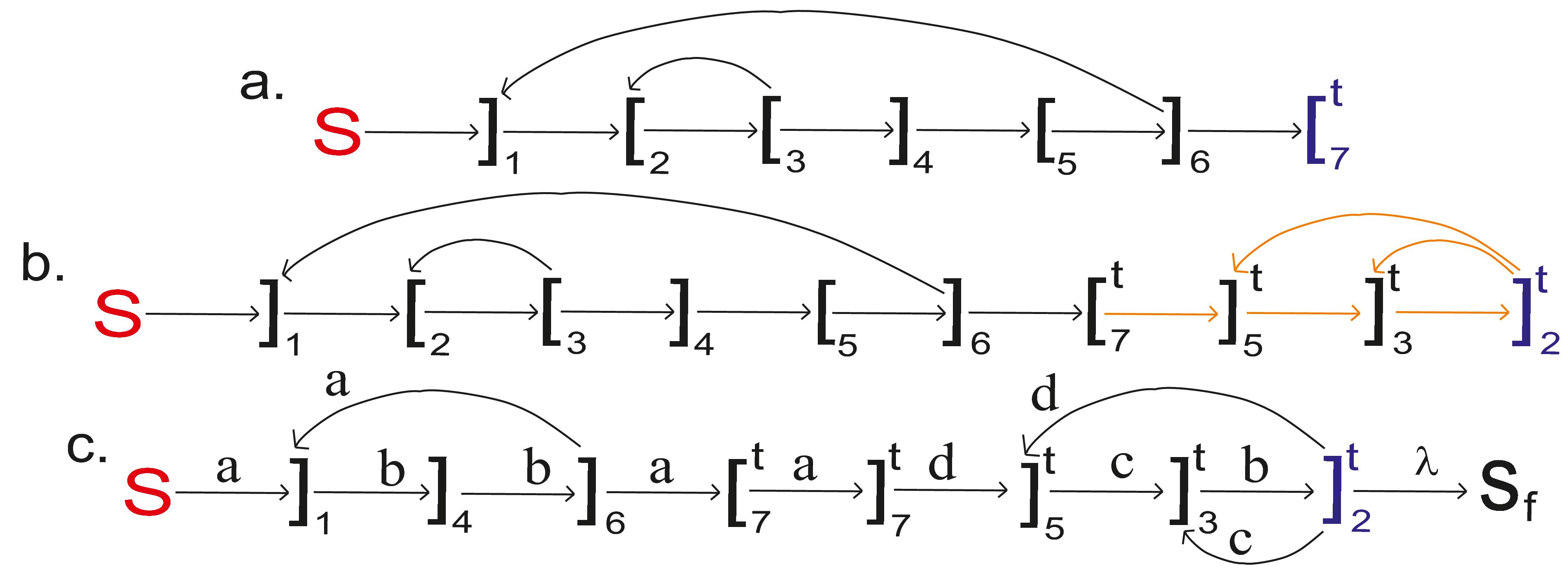}
		\vspace{-0.3cm}
		\caption{{\footnotesize a. The dependency graph ${\cal G}^S$ of grammar $G$ in Example 1. b. The 
        extended dependency graph of $G$. Edges colored in orange extend $\cal G$ to ${\cal G}_e$. c. The 
        transition diagram ${\cal A}_e$ (see Example 5.1 a.) built from ${\cal G}_e$. Each bracket $[_i$ 
        ($S$, $]_i$) in ${\cal A}_e$ corresponds to state $s_{[_i}$ ($s_S$, $s_{]_i}$). In all graphs $S$ 
        is the initial vertex. In a. - b. the vertex colored in blue is the final vertex.}}
\end{figure}
\end{example}
\vspace{-0.3cm}
\begin{example}\rm
Consider the context-free  grammar $G=( \{S,[_1...,[_7, ]_1...,]_7\}, \{ a,b,c\}, S, P)$ in Dyck normal form with  
$P=\{ S\rightarrow [_1]_1,[_1\rightarrow[_5]_5^t/[_1]_1, ]_1\rightarrow [_6]_6, [_2\rightarrow [_6]_6/[^t_7]_7, 
[_3\rightarrow [^t_7]_7, [_5\rightarrow [^t_4]^t_4, [_6 \rightarrow [_3]^t_3, ]_6 \rightarrow [_2]^t_2,
]_7\rightarrow [_3]^t_3/[^t_4]^t_4,]^t_2\rightarrow b, ]^t_3 \rightarrow a, [^t_4\rightarrow c, ]^t_4\rightarrow c, 
]^t_5\rightarrow b, [^t_7\rightarrow a\}$
\vspace{0.1cm}

The sets of regular expressions and extended regular expressions obtained by reading ${\cal G}^S$ (Figure 2.a) are  
${\cal R}^S_{[^t_4}=\{S[^+_1[_5[^t_4\}$ and ${\cal R}.e^S={\cal R}.e^S_{[^t_4}=\{S[^+_1[_5[^t_4]^t_5]^+_1\}$, respectively. 

The regular expressions and extended regular expressions readable from ${\cal G}^{]_1}$ (Figure 2.b) are 
${\cal R}^{]_1}_{[^t_4}=\{]_1[_6([_3]_7)^+[^t_4\}$ and ${\cal R}.e^{]_1}=\{]_1[_6([_3]_7)^+[^t_4(]^t_3)^+]_6\}$, respectively.
The regular expressions and extended regular expressions obtained by reading ${\cal G}^{]_6}$ (Figure 2.c) are
${\cal R}^{]_6}_{[^t_4}\hspace{-0.1cm}=\{]_6[_2[_6([_3]_7)^+[^t_4, ]_6[_2(]_7[_3)^*]_7[^t_4\}$ and  
${\cal R}.e^{]_6}\hspace{-0.1cm}={\cal R}.e^{]_6}_{[^t_4}\hspace{-0.1cm}=\{]_6[_2[_6([_3]_7)^+[^t_4(]^t_3)^+]_6]^t_2, 
]_6[_2(]_7[_3)^*]_7[^t_4(]^t_3)^*]^t_2\}$, respectively. 

The extended dependency graph of $G$ is sketched in Figure 2.d. Edges in black, are built from the regular 
expressions in ${\cal R}^X_{[^t_4}$, $X\in\{S, ]_1,]_6\}$.  Orange edges emphasize symmetrical structures, built 
with respect to the structure of trace-words in $\L(G)$. Some of them (e.g., $]^t_2]_1$ and $]^t_2]^t_2$) connect 
regular expressions in ${\cal R}_e$ between them with respect to the structure of trace-words in $\L(G)$ (see Construction 3.4, item 8). 
The edge $]^t_2]_1$ is added because there exists at least one regular expression in ${\cal R}_e$ that contains $]_1]_1$ (e.g. $S[^+_1[_5[^t_4]^t_5]^+_1$), 
a regular expression in ${\cal R}.e^{]_1}_{[^t_4}$ that ends in $]_6$ (e.g. $]_1[_6([_3]_7)^+[^t_4(]^t_3)^+]_6$) and a regular expression in 
${\cal R}.e^{]_6}_{[^t_4}$ that ends in $]^t_2$ (see Construction 3.4, item $8.iii.$).  The + self-loop 
$]^t_2]^t_2$ is due to the existence of a regular expression that contains $]_6]^t_2$ (e.g. $]_6[_2[_6([_3]_7)^+[^t_4(]^t_3)^+]_6]^t_2$) and a 
regular expression in ${\cal R}.e^{]_6}_{[^t_4}$ that ends in $]^t_2$ (e.g. $]_6[_2[_6([_3]_7)^+[^t_4(]^t_3)^+]_6]^t_2$ or $]_6[_2(]_7[_3)^*]_7[^t_4(]^t_3)^*]^t_2$). 

The regular language provided by the Chomsky-Sch\"utzenberger theorem is the homomorphic image, through $h_k$ (defined in Theorem 3.5), 
of all regular expressions associated with all paths in the extended dependency graph in Figure 2.d, reachable 
from the initial vertex $S$ to the final vertex labeled by $]^t_2$, i.e., terminal paths.  
\vspace*{-0.1cm}

\begin{figure}
	\centering
		\includegraphics[scale=0.71]{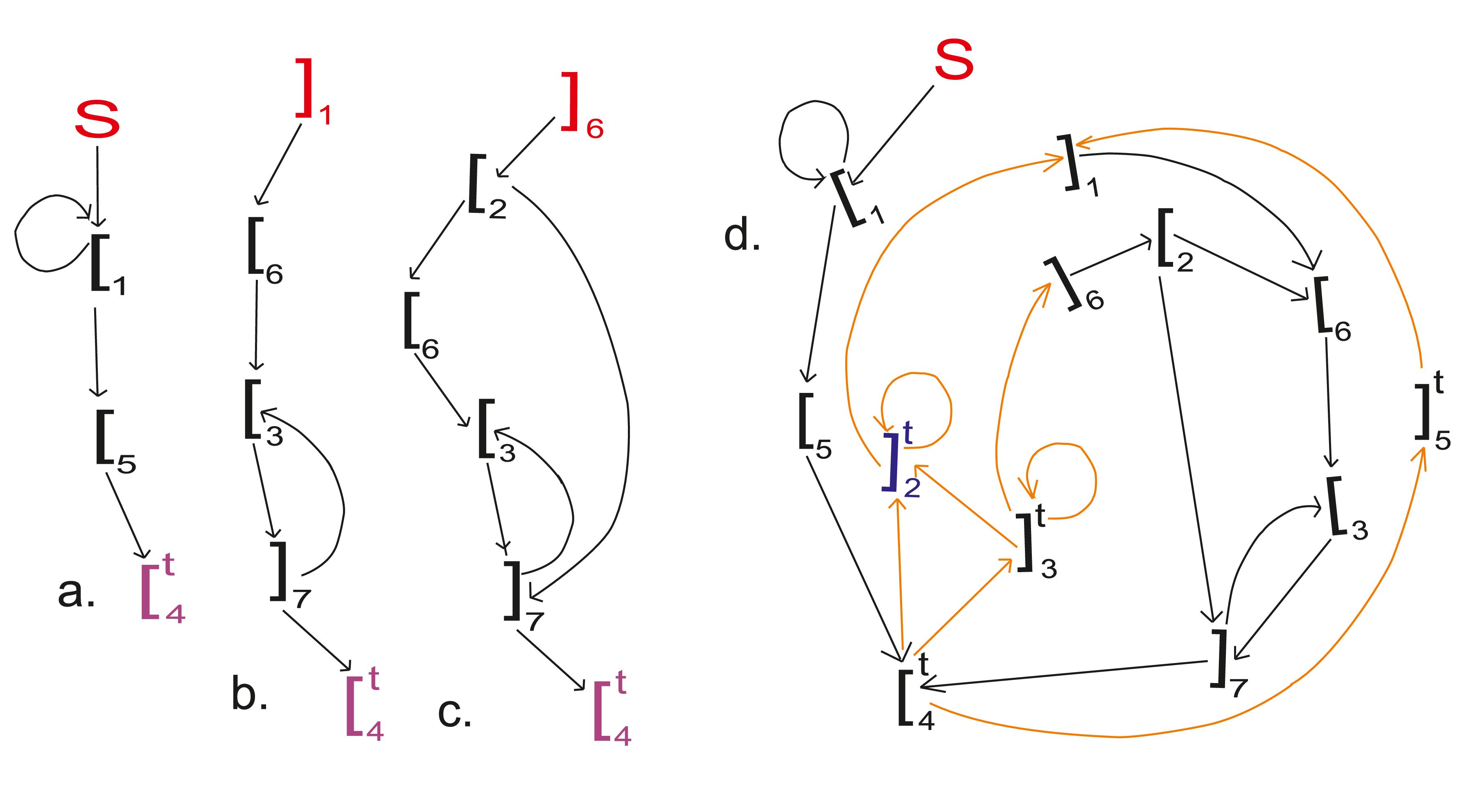}
		\vspace{-0.3cm}
	\caption{{\footnotesize a. - d. The dependency graphs of the context-free  grammar $G$ in Example 3.7. e. The extended 
      dependency graph of $G$. In all graphs,  vertices colored in red are initial vertices, while vertices colored 
      in blue are final vertices. Edges colored in orange, in $d.$ emphasize symmetrical structures obtained by 
      linking the dependency graphs between them.}}
\end{figure}
\end{example}

The interpretation that emerges from the graphical method described in this paper is that {\it the regular language in 
the Chomsky-Sch\"utzenberger theorem intersected with a (certain) Dyck language lists all derivation trees (read in 
the depth-first search order) associated with words in a context-free  grammar, in Dyck normal form or in Chomsky normal 
form (since these derivation trees are equal, up to an homomorphism). The intersection forms (with very little exceptions) 
the trace-language associated with the respective context-free  grammar.} 


In the next section we refine the extended dependency graph ${\cal G}_e$ to provide a thiner regular language in the Chomsky-Sch\"utzenberger 
theorem with respect to the structure of the context-free grammar in Dyck normal form obtained through the algorithm described in the proof of 
Theorem 1.2. Based on this readjustment in Section 5 we sketch a {\it transition diagram} for a {\it finite automaton} and a {\it regular grammar} 
that generates a \textit{regular superset approximation} for the initial context-free language. 

\section{Further Refinements of the Regular Language in the\\Chomsky-Sch\"utzenberger Theorem}

One of the main disadvantage of considering $*$-height regular expressions in building the extended dependency 
graph associated with a context-free grammar in Dyck normal form is that some $*$-loops composed of right brackets 
in $N^{(2)}_r\cup N^{(3)}$ may not be symmetrically arranged according to their corresponding left brackets in 
$N^{(2)}_r\cup N^{(3)}$, if we consider their corresponding core segment as a symmetrical center. This is due to  
the possibility of having ``$\lambda$-loops''. This deficiency does not affect the intersection with a Dyck 
language, but it has the disadvantage of enlarging considerable the regular language  in the Chomsky-Sch\"utzenberger 
theorem. This can be avoided by considering only loops described in terms of $+$ Kleene closure. 

Another disfunction of the extended dependency graph built through  Construction 3.4 is the concatenation of a regular 
expression $r.e^{(l,X)}_{[^t_i}$ with another regular expression $\overline{r.e}^{(r,X')}_{[^t_i}$, 
$\overline{r.e}^{(r,X')}_{[^t_i}\neq r.e^{(r,X)}_{[^t_i}$ (due to the common 
tie $[^t_i$ that marks a core segment). This can be avoided by a renaming procedure of the regular expressions 
we want to concatenate. All these additional modifications in building an extended dependency graph are useful 
only if we want to refine the regular language that satisfies the Chomsky-Sch\"utzenberger theorem (with regards 
to the grammar in Dyck normal form). This will be further handled (in Section 5) to build a tighter approximation 
for the context-free language it characterizes.      
\vspace{0.1cm}

Each regular expression of a finite star-height can be described as a finite union of regular expressions in 
terms of $+$ Kleene closure (shortly plus-height). For instance the $*$-height regular expression 
$]_6[_2(]_7[_3)^*]_7[^t_4$  in ${\cal R}^{]_6}_{[^t_4}$ can be forked into $]_6[_2(]_7[_3)^+]_7[^t_4$ and $]_6[_2]_7[^t_4$. 
The plus-height of a regular expression, can be defined analogous to the star-height of a regular expression in \cite{H1}, as follows.     
\vspace{-0.1cm}

\begin{definition}\rm
Let $\Sigma$ be a finite alphabet. The plus-height $h(r)$ of a regular expression $r$ is defined 
recursively as follows: $i.$ $h(\lambda)=h(\emptyset)=h(a)=0$ for $a\in \Sigma$, 
$ii.$ $h(r_1\cup r_2)=h(r_1r_2)=$ max$\{h(r_1), h(r_2)\}$, and $h(r^+)=h(r)+1$. 
\end{definition}
   
Note that for any star-height regular expression it is possible to build a digraph, with an initial vertex $v_i$ 
and a final vertex $v_f$, such that all paths in this digraph, from $v_i$ to $v_f$, to provide the respective 
regular expression (which can be done in a similar manner as in Construction 3.4). However, if the regular 
expression is described in terms of plus-height then this statement may not be true (due to the repetition of 
some symbols). To force this statement be true, also for plus-height regular expressions, each repetition of a 
bracket is marked by a distinct symbol (e.g., $]_6[_2(]_7[_3)^+]_7[^t_4$ becomes $]_6[_2(]_7[_3)^+\bar ]_7[^t_4$), 
and then, for the new plus-height regular expression obtained in this way,  we build a digraph with the above 
property. In order to recover the initial plus-height regular expression from the associated digraph, a homomorphism 
that maps all the marked brackets (by distinct symbols) into the initial one must be applied. Each time it is 
required, we refer to such a vertex as a {\it $\hbar$-marked} vertex. Therefore, due to the technical transformations 
described above and the symmetrical considerations used in the construction of a trace language, we may assume to work 
only with plus-height regular expressions. 
\vspace{0.1cm}

Let $G_k=(N_k, T, P_k, S)$ be an arbitrary context-free  grammar in Dyck normal form, and ${\cal G}^X$ the dependency graph of $G_k$ 
(see Construction 3.3). Denote by ${\cal P}^X_{[^t_i}$ the set of all plus-height regular expressions over 
$\tilde N_k\cup \{X\}$ that can be read in ${\cal G}^X$, starting from the initial vertex $X$ and ending in the 
final vertex $[^t_i$. The cardinality of ${\cal P}^X_{[^t_i}$ is finite. Now, we consider the same homomorphism, 
as defined for the case of the set ${\cal R}^X_{[^t_i}$, i.e., $h_{{\cal G}}\hspace{-0.1cm}: \tilde N_k\cup \{X\} \rightarrow 
\{]_i| [_i, ]_i \in N^{(2)}_r\cup N^{(3)}\}\cup \{\lambda\}$ such that $h_{{\cal G}}([_i)= ]_i$ for any pair 
$[_i, ]_i \in N^{(2)}_r\cup N^{(3)}$, $h_{{\cal G}}(X)=h_{{\cal G}}([^t_i)=h_{{\cal G}}(]_i)=\lambda$, 
for any  $[^t_i, ]^t_i \in N^{(1)}$ and $[_i, ]_i \in N^{(2)}_l$. For any element  $r.e^{(l,X)}_{[^t_i}\in {\cal P}^X_{[^t_i}$ 
we build a new plus-height regular expression $r.e^{(r,X)}_{[^t_i}=h^r_{{\cal G}}(r.e^{(l,X)}_{[^t_i})$, where $h^r_{{\cal G}}$ is the mirror image of 
$h_{{\cal G}}$. Consider $r.e^X_{[^t_i}=r.e^{(l,X)}_{[^t_i}r.e^{(r,X)}_{[^t_i}$. For a certain $X$ and $[^t_i$, 
denote by ${\cal P}.e^X_{[^t_i}$ the set of all (plus-height) regular expressions $r.e^X_{[^t_i}$ obtained as 
above. Furthermore,  ${\cal P}.e^X=\bigcup_{[^t_i, ]^t_i \in N^{(1)}}{\cal P}.e^X_{[^t_i}$, and 
${\cal P}.e={\cal P}.e^S\cup (\bigcup_{[_i, ]_i\in N^{(3)}}{\cal P}.e^{]_i}$).  
\vspace{-0.1cm}

Note that linear languages do not need an extended dependency graph. The set of 
all regular expressions ${\cal P}.e^S$ suffices to build a regular language in the Chomsky-Sch\"utzenberger 
theorem (see Theorem 3.5) that cannot be further adjusted by using the graphical method proposed in this section. 
Furthermore $|{\cal R}.e^S|\leq |{\cal P}.e^S|$. Equality takes place only for the case when 
each regular expression in ${\cal R}.e^S$ is a plus-height regular expression (see Example 3.6). For the case of 
context-free  languages the plus-height regular expressions in ${\cal P}.e$ must be linked with each other in such a way 
it approximates, as much as possible, the trace-language associated with the respective context-free  language. 
\vspace{0.1cm}

In order to find an optimal connection of the regular expressions in ${\cal P}.e$ we consider the following 
labeling procedure of elements in ${\cal P}.e$. Denote by $c_0$ the cardinality of ${\cal P}.e^S$, i.e., 
$|{\cal P}.e^S|=c_0$, and by $c_j$ the cardinality of ${\cal P}.e^{]_j}$, where  $[_j, ]_j \in N^{(3)}$. Each 
regular expression $r \in {\cal P}.e^S$ is labeled by a unique $q$, $1\leq q\leq c_0$, and each regular expression 
$r \in {\cal P}.e^{]_j}$, is labeled by a unique $q$, such that $\sum_{r=0}^{i-1} c_r+1\leq q\leq \sum_{r=0}^i c_r$, 
$1\leq i\leq s$, and $s=|\{]_j| [_j, ]_j \in N^{(3)}\}|$. Denote by $r^q$ the labeled version of $r$. To preserve 
symmetric structures that characterize trace-words of context-free  languages, then when we link regular expressions 
in ${\cal P}.e$ between them, each bracket in a regular expression $r^q$ is upper labeled by $q$. Exception 
makes the first bracket occurring in $r^q$ (which is a bracket in $\{]_j| [_j, ]_j \in N^{(3)}\}$). Now, a 
refined extended digraph can be built similar to that described in Construction 3.4. 
\vspace{-0.3cm}

To have a better picture of how the labeled regular expressions must be linked to each other, and 
where further relabeling procedures may be required (to obtain a better approximation of the trace-language), 
we first build for each plus-height regular expression $r^q \in {\cal P}.e^{]_j}$, $[_j, ]_j \in N^{(3)}$, a 
digraph and then we connect all digraphs between them. Denote by 
$G^{q,]_j}$ the digraph associated with $r^q \in {\cal P}.e^{]_j}$, such that $]_j$ is the initial vertex and 
the final vertex is the last bracket occurring in $r^q$. Each digraph $G^{q,]_j}$  read from the initial vertex 
$]_j$ to the final vertex provides the regular expression $r^q$. Hence, any digraph $G^{q,]_j}$ has vertices 
labeled by brackets of the forms $\{[^q_j| [_j, ]_j \hspace{-0.1cm}\in N^{(1)}\cup N^{(2)}_r\cup  N^{(3)}\} 
\cup \{]^q_j| [_j, ]_j\hspace{-0.1cm}\in N^{(2)}_l\cup N^{(2)}_r\cup  N^{(3)}\}$, $c_0\leq q \leq \sum_{r=0}^s c_r$, 
with the exception of the initial vertex $]_j$, $[_j, ]_j \in N^{(3)}$. Some of vertices in $G^{q,]_j}$, besides 
the $q$-index, may also be $\hbar$-marked, in order to prevent repetitions of the same vertex which may occur in 
a plus-height regular expression. As the construction of the dependency graph does not depend on $\hbar$-markers, 
unless it is necessary, we avoid $\hbar$-marked notations in further explanations when building this digraph.
\vspace{0.1cm}
      
The adjacent vertex $Y$ to $]_j$, in $G^{q,]_j}$, is called {\it sibling}. Any edge of the form $]^{-}_l]^{-}_k$, 
where $[_l, ]_l\in N^{(3)}$, $[_k, ]_k\in N^{(2)}_r\cup N^{(3)}$, is called {\it dummy edge}, while $]^{-}_l$ ($]^{-}_k$, if 
$[_k, ]_k\in N^{(3)}$) is a {\it dummy vertex}. An edge that is not a dummy edge is called {\it stable edge}. Denote by ${\cal G}^{]_j}$ 
the set of all digraphs $G^{q,]_j}$, i.e., their initial 
vertex is $]_j$. Any digraph $G^{q,]_j}$ has only one bracket $[^q_k$, $[_k, ]_k \in N^{(1)}$, which stands for a core segment in a trace-word. 
Right brackets $]^q_j$, $[_j, ]_j \in N^{(2)}_r\cup  N^{(3)}$, must be symmetrically arranged according to their left pairwise $[^q_j$, $[_j, ]_j \in N^{(2)}_r\cup  N^{(3)}$, that occur at the left side of $[^q_k$. A dummy vertex labeled by $]^q_j$, $[_j, ]_j \in N^{(3)}$, 
allows the connection with any digraph in ${\cal G}^{]_j}$. A digraph in ${\cal G}^{]_j}$ with a final vertex labeled by a bracket $[^{-}_k$, 
$[_k, ]_k \in N^{(1)}$, or by a bracket $]^{-}_l$, $[_l, ]_l \in N^{(2)}_r$, is called {\it terminal}, because the vertex $[^{-}_k$ 
or $]^{-}_l$, respectively, does not allow more connections.

Next we describe the procedure that builds a refined extended digraph with the property that reading this 
digraph (in which each loop is a plus-loop) from the initial vertex (which is $S$) to all its final vertices, 
we obtain those (plus-height) regular expressions that form a regular language that provides the best approximation 
of the corresponding trace-language. 
   
{\bf Step 1.} First we build a digraph ${\cal G}.e^S$ that describes all (plus-height) regular expressions in ${\cal P}.e^S$. 
This can be done by connecting all digraphs in ${\cal G}^S$ to $S$.  Since each bracket labeling a vertex in $G^{q,S}$, $1\leq q \leq c_0$, is 
uniquely labeled by $q$, and there exists a finite number of brackets, ${\cal G}.e^S$ is correct (in the sense that 
it is finite and any vertex occurs only one time). The initial vertex of ${\cal G}.e^S$ is $S$. If a graph in 
${\cal G}^S$ has a final vertex labeled by a bracket $[^q_i$, $[_i,]_i \in N^{(1)}$ or by a bracket $]^q_i$, 
$[_i,]_i \in N^{(2)}_r$, then this is also a final vertex in ${\cal G}.e^S$. 

If $G_k$ is a grammar in linear-Dyck normal form then ${\cal G}.e^S$, built in this way, suffices to build the regular 
language in the Chomsky-Sch\"utzenberger theorem. The set of all paths from $S$ to each final vertex to which we apply 
the homomorphism $h_k$, defined in the proof of Theorem 3.5, yields a regular language $R_m$ that cannot be further adjusted, 
such that the Chomsky-Sch\"utzenberger theorem still holds. Therefore, we call the $R_m$ language, as minimal with respect to 
the grammar $G_k$ and the Chomsky-Sch\"utzenberger theorem, i.e., the equality $\varphi (D_K\cap R_m)=\varphi(\L(G_k))$ still 
holds, where $\varphi$ is the homomorphism defined in the proof of Theorem 2.9.    

{\bf Step 2.} For each vertex $]^q_j$ existing in ${\cal G}.e^S$, such that $]_j\in  N^{(3)}$, we connect 
all digraphs in ${\cal G}^{]_j}$ to ${\cal G}.e^S$. This can be done by adding to ${\cal G}.e^S$ a new edge 
$]^q_jY$, for each sibling $Y$ of $]_j$ (in ${\cal G}^{]_j}$). If $Z$ is the adjacent vertex of $]^q_j$ (in the former version of 
${\cal G}.e^S$), i.e., $]^q_jZ$ is a dummy edge, then we remove in ${\cal G}.e^S$ the edge $]^q_jZ$, while 
in $G^{q',]_j}$ (connected to ${\cal G}.e^S$ through $]^q_j$) we remove the vertex $]_j$ and consequently, the 
edge $]_jY$. For the moment, all the other edges in $G^{q',]_j}$ are preserved in ${\cal G}.e^S$, too. Besides, 
if $V$ is the final vertex of $G^{q',]_j}$, then a new edge $VZ$ is added to ${\cal G}.e^S$. If $V\in \{[^{-}_k| [_k, ]_k \in N^{(1)}\}\cup \{]^{-}_l| [_l, ]_l \in N^{(2)}_r\}$, i.e., $G^{q',]_j}$ is a terminal digraph then the edge 
$VZ$ is a {\it glue edge}, i.e., it is a stable edge that makes the connection of  $G^{q',]_j}$ into ${\cal G}.e^S$ 
(or more precisely the connection of  $G^{q',]_j}$ to $G^{q,]_j}$ digraph in which it has been inserted). Otherwise,  $VZ$ is a dummy edge, which will be removed at a further connection with a digraph in ${\cal G}^{V}$. Since for 
the case of linear languages generated by a grammar in linear-Dyck normal form, ${\cal G}.e^S$ does not contain any dummy 
vertex, the construction of ${\cal G}.e^S$ is completed at {\bf {\it Step 1}}. 
\vspace{0.1cm}

A vertex in ${\cal G}.e^S$ labeled by a bracket $]^q_j$, $[_j, ]_j\in N^{(3)}$, that has no adjacent vertex, i.e., 
the out degree of the vertex labeled by $]^q_j$ is $0$, is called {\it pop vertex}. When connecting a digraph $G^{q',]_j}$ to ${\cal G}.e^S$, through a pop vertex, if $G^{q',]_j}$ is a terminal digraph, then the final vertex 
of $G^{q',]_j}$ becomes a final vertex of ${\cal G}.e^S$. If $G^{q',]_j}$ is not a terminal digraph, 
then the final vertex of $G^{q',]_j}$ becomes a pop vertex for ${\cal G}.e^S$.
\vspace{0.1cm}


If there exist more then one vertex labeled by an upper indexed\footnote{As ${\cal G}.e^S$ is finite, there cannot exist in ${\cal G}.e^S$ two right brackets $]_j$, $[_j, ]_j\in  N^{(3)}$, upper indexed by the same value.} bracket $]^{\bar q}_j$, $[_j, ]_j\in  N^{(3)}$, then, if $G^{q',]_j}$ has been already added to  ${\cal G}.e^S$ there is 
no need to add another ``copy'' of $G^{q',]_j}$. It is enough to connect $]^{\bar q}_j$ to the digraph existing 
in ${\cal G}.e^S$, i.e., to add a new edge $]^{\bar q}_jY$, where $Y$ is a sibling of $]_j$ in $G^{q',]_j}$. 
This observation holds for any element in ${\cal G}^{]_j}$. 
The procedure described at {\bf {\it Step 2}} is repeated for each new dummy or pop vertex added to ${\cal G}.e^S$. For each transformation performed on ${\cal G}.e^S$, we maintain the same notation ${\cal G}.e^S$ for the new 
obtained digraph. The construction of ${\cal G}.e^S$ ends up then when each vertex $]^{-}_j$, $[_j, ]_j\in N^{(3)}$, has been connected to a digraph in ${\cal G}^{]_j}$, i.e., no dummy and pop vertices exist in ${\cal G}.e^S$. The 
only permissible contexts under which a bracket $]^{-}_j$, $[_j, ]_j\in N^{(3)}$, may occur, in the final version 
of ${\cal G}.e^S$, are of the forms $]^{-}_i]^{-}_j [^{-}_k$, $[^{-}_h]^{-}_j [^{-}_k$, $]^{-}_i]^{-}_j ]^{-}_l$, $[^{-}_h]^{-}_j ]^{-}_l$, where $[_i, ]_i\in N^{(2)}_r$, $[_h, ]_h\in N^{(1)}$, $[_k, ]_k\in N^{(1)}\cup N^{(2)}_r
\cup N^{(3)}$, $[_l, ]_l\in N^{(2)}_l$.  
\vspace{0.1cm}

There are several refinements that can be done on ${\cal G}.e^S$ such that the resulted regular language better approximates the trace  
language associated with the considered context-free  language. Two peculiar situations may occur when adding digraphs to 
${\cal G}.e^S$: 
\vspace{0.1cm}

$\bf{{\cal I}_1}$. First, suppose that during the construction of ${\cal G}.e^S$ 
by subsequently connecting digraphs between them, starting from $]^q_j$,  $[_j, ]_j\in N^{(3)}$, we reach a terminal digraph with 
a final vertex $]^{q'}_k$, $[_k, ]_k\in N^{(2)}_r$, such that $]^{q'}_k$ is linked to $Z$, forming thus a stable (glue) edge $]^{q'}_kZ$. 
Denote by $\wp=]^q_j...]^{q'}_kZ$ the path (in ${\cal G}.e^S$) from $]^q_j$ to $]^{q'}_kZ$, obtained at this stage. If the vertex that 
precedes $]^{q'}_k$ in $\wp$ is $]^{q'}_j$, $[_j, ]_j\in N^{(3)}$, i.e., $\wp= ]^q_j... ]^{q'}_j]^{q'}_kZ$, then connecting $]^{q'}_j$ 
($]^{q'}_j]^{q'}_k$ is a dummy edge), through its siblings, to digraphs in ${\cal G}^{]_j}$ another edge $]^{q'}_k]^{q'}_k$ preceded 
by $]^{q'}_j]^{q'}_k$, is added to ${\cal G}.e^S$, i.e., $\wp$ becomes $\wp= ]^q_j...]^{q'}_j(]^{q'}_k)^2Z$. Since $]^{q'}_j]^{q'}_k$ 
is a dummy edge, the vertex $]^{q'}_j$ must be again connected to digraphs in ${\cal G}^{]_j}$, and so on, until $]^{q'}_j$ is connected 
to a terminal digraph $G^{\bar q,]_j}\in {\cal G}^{]_j}$, $\bar q\neq q'$, that has a final vertex labeled by a bracket 
$]^{\bar q}_m$, $[_m, ]_m\in N^{(2)}_r$ ($m$ and $k$ not necessarily distinct), or by a bracket $[^{\bar q}_m$, $[_m, ]_m\in N^{(1)}$
such that $]^{\bar q}_m$ is not preceded by a bracket of the form $]^{-}_j$, $[_j, ]_j\in N^{(3)}$. Then $\wp$ will be either of the 
form $]^q_j\wp_1]^{\bar q}_m(]^{q'}_k)^+Z$ or of the form $]^q_j\wp_1[^{\bar q}_m(]^{q'}_k)^+Z$, respectively. On the other hand, since 
$G^{\bar q,]_j}\in {\cal G}^{]_j}$ the digraph $G^{\bar q,]_j}$ can be added to ${\cal G}.e^S$, through $]^q_j$, from the very first 
beginning, avoiding thus the plus-loop $(]^{q'}_k)^+$, i.e., there should exist in ${\cal G}.e^S$ a new path $\wp'=]^q_j\wp_2]^{\bar q}_mZ$ 
or $\wp'=]^q_j\wp_2[^{\bar q}_mZ$ (where $\wp_2$ is a path in $G^{\bar q,]_j}$). This allows two other new paths to be created in  
${\cal G}.e^S$, i.e., $\bar{\wp}=]^q_j\wp_2]^{\bar q}_m(]^{q'}_k)^+Z$ (or $\wp''=]^q_j\wp_2[^{\bar q}_m(]^{q'}_k)^+Z$) 
and  $\bar{\wp}'=]^q_j\wp_1]^{\bar q}_mZ$ (or $\bar{\wp}''=]^q_j\wp_1[^{\bar q}_mZ$), which are of no use in approximating the trace 
language (hence in building the regular language in the Chomsky-Sch\"utzenberger theorem). Paths $\bar{\wp}$ and $\bar{\wp}'$ ($\wp''$, $\bar{\wp}''$) do not 
affect the intersection with the Dyck language but they enlarge the regular language with useless words. 

In order to avoid the paths $\bar{\wp}$ and $\bar{\wp}'$ (or $\wp''$, $\bar{\wp}''$) the terminal digraph $G^{\bar q,]_j}$ receives a 
new label $\tilde q$, besides of label $\bar q$ (which is maintained to allow $\wp$ to be produced). To allow the shorter path $\wp'$ to be 
created, instead of $G^{\bar q,]_j}$ the terminal digraph $G^{\tilde q,]_j}$ is connected to ${\cal G}.e^S$ through the dummy vertex $]^q_j$. 
Hence $\wp'$ becomes $\wp'\hspace{-0.1cm}=]^q_j...]^{\tilde q}_mZ$ (or $\wp'\hspace{-0.1cm}=]^q_j...[^{\tilde q}_mZ$) , while $\wp$ remains 
$]^q_j...]^{\bar q}_m(]^{q'}_k)^+Z$ (or $]^q_j...[^{\bar q}_m(]^{q'}_k)^+Z$, respectively). This relabeling procedure is used for any case 
similar to that described above\footnote{For instance, $]^{q'}_k$ may also be a dummy vertex and $]^{q'}_kZ$ a dummy edge.} encountered 
during the computation of ${\cal G}.e^S$. As there may exist a finite number\footnote{The plus-height of a regular expression obtained from any 
digraph in ${\cal G}^{]_j}$ is finite related to the length of the strings in $L(G_k)$.} of plus-loops in ${\cal G}.e^S$, there will be 
a finite number of ``relabeled'' digraphs (not necessarily terminal). A loop (not necessarily a self-loop) may be reached through different paths 
that must be ``renamed'' (if we want to avoid that loop). 


\vspace{0.1cm}

$\bf{{\cal I}_2}$. Another situation that requires a relabeling procedure may occur when connecting a digraph to ${\cal G}.e^S$ through a pop vertex. Suppose that $]^q_j$, $[_j, ]_j\in N^{(3)}$, is a pop vertex, and the 
digraph $G^{q',]_j}$ that must be added to ${\cal G}.e^S$ has been already connected through a dummy vertex labeled 
by $]^{\bar q}_j$ (i.e., $G^{q',]_j}$ has been already inserted in ${\cal G}.e^S$). According to the procedure described at {\bf {\it Step 2}} the vertex $]^q_j$ is linked to the sibling of $]_j$ in $G^{q',]_j}$ already existing 
in ${\cal G}.e^S$. Since the connection of $G^{q',]_j}$ to ${\cal G}.e^S$ has been done through a dummy vertex, 
the final vertex in $G^{q',]_j}$ cannot be neither a final vertex in ${\cal G}.e^S$ (if  $G^{q',]_j}$ is a terminal digraph) nor a pop vertex.
\vspace{0.1cm}

To forbid a pop vertex $]^{-}_j$ to overlap with a dummy vertex $]^{-}_j$, each of the digraphs connected to ${\cal G}.e^S$ through a pop vertex, 
is renamed by a new label. Denote by $\bar {\cal G}^{]_j}$ the labeled version of ${\cal G}^{]_j}$. Then connections through pop vertices will 
be done by using only digraphs in $\bar {\cal G}^{]_j}$. However, any dummy vertex $]^{-}_j$, that is not a pop vertex, obtained by connecting 
digraphs in $\bar {\cal G}^{]_j}$ to ${\cal G}.e^S$ should be connected to the original digraphs in ${\cal G}^{]_j}$, unless a relabeling procedure 
described at $\bf{{\cal I}_1}$ is required. 
\vspace{0.1cm}

Denote by $\bar N_k=\{[^{-}_i| [_i, ]_i \in N^{(1)} \cup N^{(2)}_r \cup N^{(3)}\}\cup \{]^{-}_j|[_j, ]_j \in N^{(2)}_l\cup N^{(2)}_r\cup N^{(3)}\}$, 
the set of vertices composing ${\cal G}.e^S$, in which some brackets may be $\hbar$-marked (by distinct $\hbar$-markers). To reach the regular 
language in the Chomsky-Sch\"utzenberger theorem we denote by ${\cal R}_{{\cal G}}$ the set of all regular expressions obtained by reading  
${\cal G}.e^S$ from the initial vertex $S$ to any final vertex. First, suppose that $G_k$ does not have an extended grammar. We have $K=k$ and 
$D'_k=\L(G_k)$. Consider the homomorphism $h_k\hspace{-0.1cm}: \bar N_k \cup \{S\}\hspace{-0.1cm}\rightarrow \{[_i,]_i| [_i,]_i \hspace{-0.1cm}
\in N^{(2)}_r\cup N^{(3)}\}\cup\{[_i]_i| [_i,]_i \hspace{-0.1cm}\in N^{(2)}_l\cup N^{(1)}\}\cup \{\lambda\}$, defined by $h_k(S)=\lambda$, 
$h_k([^{-}_i)= [_i$, $h_k(]^{-}_i)=\hspace{0.1cm}]_i$ for any $[_i, ]_i \in N^{(2)}_r$, $h_k(]^{-}_i)=[_i\hspace{0.1cm}]_i$ for any 
$[_i, ]_i \in N^{(2)}_l$, $h_k([^{-}_i)=[_i\hspace{0.1cm}]_i$ for any $[_i, ]_i \in N^{(1)}$. Then $R_m=h_k({\cal R}_{{\cal G}})$ is a 
regular language with  $D_k\cap R_m=\L(G_k)$. Furthermore, $R_m$ is a strength refinement of $R$, such that the 
Chomsky-Sch\"utzenberger theorem still holds. This is because when building regular expressions in ${\cal P}.e$ each 
$r.e^{(l,X)}_{[^t_i}$ is linked only to its right pairwise $r.e^{(r,X)}_{[^t_i}$ (due to plus-height 
considerations and labeling procedures). In this way all plus-loops in $r.e^{(l,X)}_{[^t_i}$ are correctly 
mirrored (through $h^r_{\cal G}$) into its correct pairwise $r.e^{(r,X)}_{[^t_i}$. The case of $\lambda$-loops 
is taken by the relabeling procedure described at $\bf{{\cal I}_1}$. This is also applicable each time we want 
to fork a path in ${\cal G}.e^S$ in order to avoid useless loops on that path. The relabeling procedure 
$\bf{{\cal I}_2}$ allows to leave ${\cal G}.e^S$ without re-loading another useless path. 
That is why the regular language $R_m$ built this way is a tighter approximation of $\L(G_k)$. A finer language 
than $R_m$ can be found by searching for a more efficient grammar in Dyck normal form, with respect to 
the number of rules and nonterminals.   
\vspace{0.1cm}

If $G_k$ has an extended grammar $G_{k+p}=(N_{k+p}, T, P_{k+p}, S)$ (built as in the proof of Theorem 2.9) then 
${\cal R}_{{\cal G}}$ is augmented with ${\cal r}_e=\{S[_{t_{k+1}}, ..., S[_{t_{k+p}}\}$ and $h_k$ is extended to 
$h_K$, $h_K\hspace{-0.1cm}: \bar N_k \cup \{S\}\cup \{[_{t_{k+1}}, ..., [_{t_{k+p}}\} \rightarrow 
\{[_i,]_i| [_i,]_i \in N^{(2)}_r\cup N^{(3)}\}\cup \{[_i]_i| [_i,]_i \in N^{(2)}_l\cup N^{(1)}\} 
\cup \{[_{t_{k+1}}]_{t_{k+1}}, ..., [_{t_{k+p}}]_{t_{k+p}}\}\cup \{\lambda\}$,  $h_K(x)=h_k(x)$, $x\notin \{[_{t_{k+1}}, ..., [_{t_{k+p}}\}$, and $h_K([_{t_{k+j}})=[_{t_{k+j}}]_{t_{k+j}}$, $1\leq j\leq p$, $K=k+p$. 

\vspace{0.3cm}
\begin{figure}[h!]
	\centering
		\includegraphics[scale=0.70]{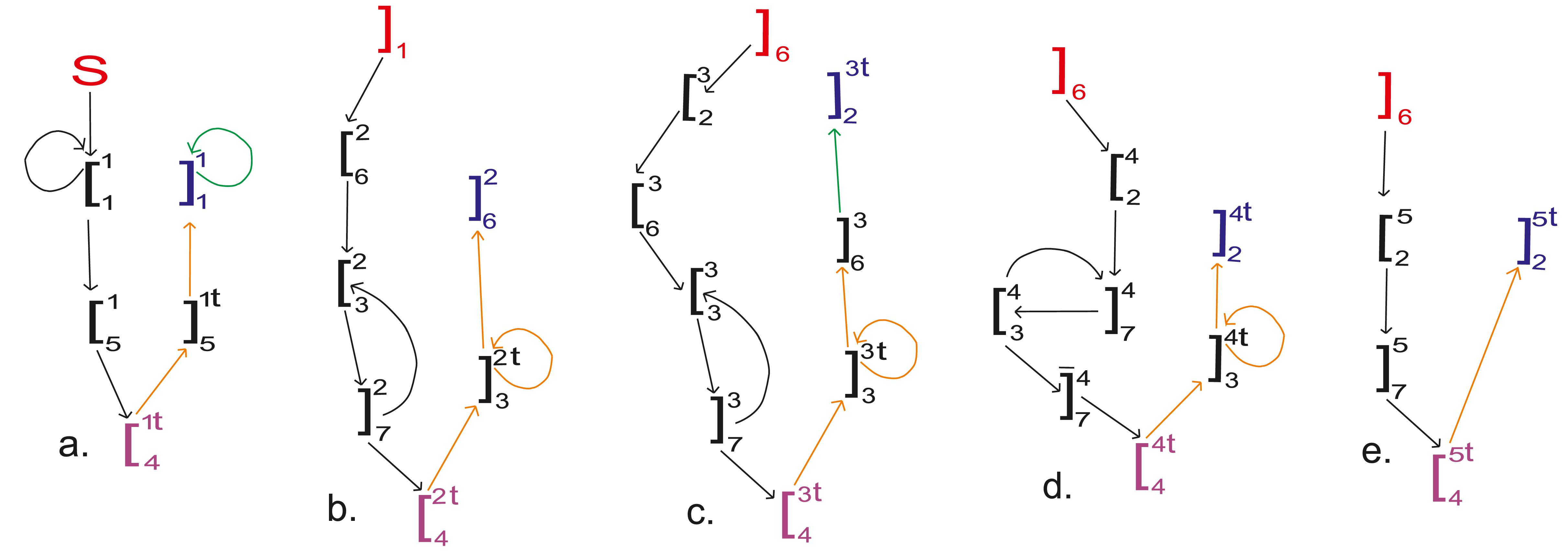}
		\vspace{-0.1cm}
	\caption{{\footnotesize a. - e. Graphs associated with regular expressions in ${\cal P}.e$ (Example 4.2).  
     Initial vertices are colored in red, final vertices in blue, while purple vertices mark a 
     core segment. $\bar ]^4_7$ is a marked vertex to allow the plus-loop $([^4_3]^4_7)^+$.}} 
\end{figure}
 \vspace*{-0.4cm}
 
\begin{example}\rm
Consider the context-free  grammar in Example 3.7 with the dependency graphs sketched in Figure 3. 
The set ${\cal P}.e$ of labeled plus-loop regular expressions built from\\
the dependency graphs is composed of $S([^1_1)^+[^1_5[^{1t}_4]^{1t}_5(]^1_1)^+$ (with the associated digraph 
$G^{1,S}$, Fig. 3.a), $]_1[^2_6([^2_3]^2_7)^+[^{2t}_4(]^{2t}_3)^+]^2_6$ (with $G^{2,]_1}$,  Fig. 3.b),
$]_6[^3_2[^3_6([^3_3]^3_7)^+[^{3t}_4(]^{3t}_3)^+]^3_6]^{3t}_2$ (with $G^{3,]_6}$,  Fig. 3.c), 
$]_6[^4_2(]^4_7[^4_3)^+]^4_7[^{4t}_4(]^{4t}_3)^+]^{4t}_2$ or the $\hbar$-marked version 
$]_6[^4_2(]^4_7[^4_3)^+\bar ]^4_7[^{4t}_4(]^{4t}_3)^+]^{4t}_2$ (with the associated digraph $G^{4,]_6}$,  Fig. 3.d), 
and $]_6[^5_2]^5_7[^{5t}_4]^{t5}_2$ (with the digraph $G^{5,]_6}$,  Fig. 3.e). 
\vspace*{0.2cm}

The extended dependency graph built with respect to the refinement procedure is sketched in Figure 4. 
The terminal digraphs $G^{6,]_6}$ and $G^{7,]_6}$ are introduced with respect to the relabeling procedure 
$\bf{{\cal I}_1}$, in order to prevent the loop yielded by the ''iterated`` digraph $G^{3,]_6}$ to occur 
between $G^{2,]_1}$ and $G^{6,]_6}$ (or $G^{7,]_6}$). It also forbids the self-loop $(]^{3t}_2)^+$ to be 
linked to $G^{6,]_6}$ (or to $G^{7,]_6}$), then when the digraph $G^{3,]_6}$ is not added to the corresponding 
path. Due to the self-loop $(]^1_1)^+$, in which $]^1_1$ is a pop vertex, we did not applied the relabeling 
procedure described at $\bf{{\cal I}_2}$ (applying it leads to the same result).  
\end{example}

\begin{figure}
	\centering
		\includegraphics[scale=0.69]{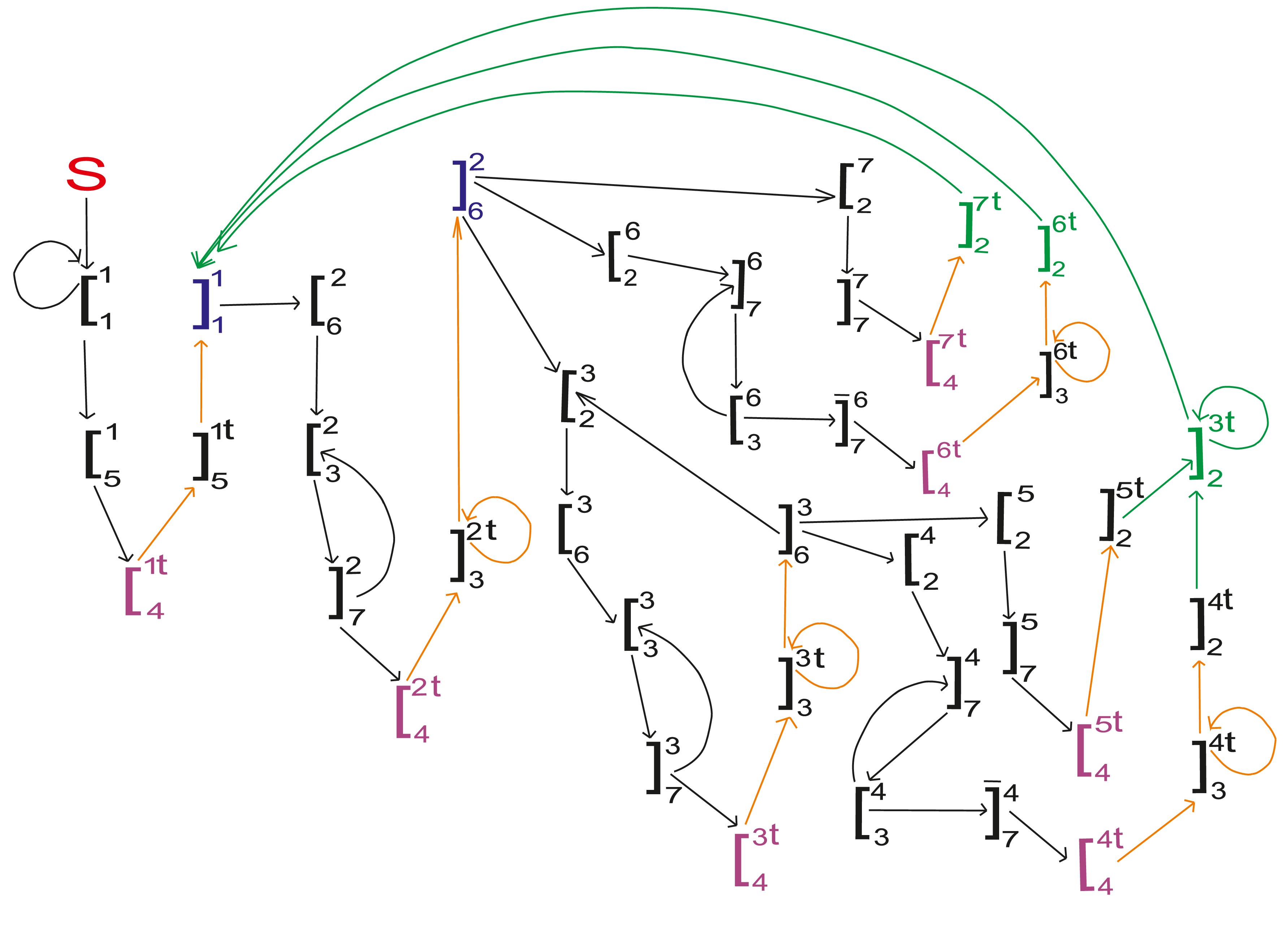}
		\vspace{-0.4cm}
	\caption{{\footnotesize The refined dependency graph of the context-free  grammar in Examples 3.7 and 4.2. $S$ is the 
	initial vertex, vertices colored in green are final vertices, vertices colored in blue are dummy vertices, 
	vertices colored in purple mark a core segment. Orange edges emphasize symetrical structures built with respect 
	to the structure of the trace language. Green edges are glue edges.}} 
\end{figure}
\vspace*{-0.2cm}

\section{A Regular Superset Approximation for Context-Free\\Languages}

A regular language $R$ may be considered a superset approximation for a context-free  language $L$, if $L\subseteq R$. A 
good approximation for $L$ is that for which the set $R-L$ is as small as possible. There are considerable methods 
to find a regular approximation for a context-free  language. The most significant consist in building, through several 
transformations applied to the original pushdown automaton (or context-free   grammar), the most appropriate finite automaton 
(regular grammar) recognizing (generating) a regular superset approximation of the original context-free  language.  How 
accurate the approximation is, depends on the transformations applied to the considered devices. However, the perfect regular 
superset (or subset) approximation for an arbitrary context-free language cannot be built. For surveys on approximation methods 
and their practical applications in computational linguistics (especially in parsing theory) the reader is referred to \cite{MN} 
and \cite{N}. Methods to measure the accuracy of a regular approximation can be found in \cite{CS}, \cite{ER}, and \cite{SB}. 
\vspace*{0.1cm}

In the sequel we propose a new approximation technique that emerges from the Chomsky-Sch\"utzenberger theorem. In brief, the method 
consists in transforming the original context-free  grammar into a context-free  grammar in Dyck normal form. For this grammar we build 
the refined extended dependency graph ${\cal G}.e^S$ described in Section 4. From ${\cal G}.e^S$ we depict a {\it state diagram} ${\cal A}_e$ 
for a finite automaton and a regular grammar $G_r=(N_r, T, P_r, S)$ that generates a regular (superset) approximation for $L(G_k)$ (which is nothing else than the 
image through $\varphi$ of the language $R_m$ built in Section 4). 

Let $G_k=(N_k, T, P_k, S)$ be an arbitrary context-free  grammar in Dyck normal form, and ${\cal G}.e^S=(V_e, E_e)$ the extended dependency 
graph of $G_k$. Recall that $V_e=\{[^{-}_i| [_i, ]_i \in N^{(1)} \cup N^{(2)}_r \cup N^{(3)}\}
\cup \{]^{-}_j|[_j, ]_j \in N^{(2)}_l\cup N^{(2)}_r\cup N^{(3)}\} \cup \{S\}$ in which some of the vertices may 
be $\hbar$-marked, in order to prevent repetition of the same bracket when building the digraph associated with a plus-height regular 
expression. In brief, the state diagram ${\cal A}_e$ can be built by skipping  in ${\cal G}.e^S$ all left brackets in $N^{(2)}_r$ and 
all brackets in $N^{(3)}$, and labeling the edges with the symbol produced by left or right bracket in $N^{(2)}\cup N^{(1)}$. This reasoning 
is applied no matter whether the vertex in $V_e$ is $\hbar$-marked or not. Therefore, we avoid $\hbar$-marker specifications when building 
${\cal A}_e$, unless this is strictly necessary. Denote by $s_f$ the accepting state of ${\cal A}_e$. The {\it start state} of ${\cal A}_e$ 
is $s_S$, where $S$ is the axiom of $G_k$.  We proceed as follows:

$1.$ There exists an edge in ${\cal A}_e$ from $s_S$ to $s_{]^q_i}$, labeled by $a$, where $[_i, ]_i\in N^{(2)}_l$ and 
$[_i\rightarrow a\in P_k$, if either $S]^q_i \in E_e$ or there exists a path in ${\cal G}.e^S$ from $S$ to $]^q_i$ that 
contains no vertex labeled by $]^q_j$, $[_j, ]_j\in N^{(2)}_l$, or by $[^{-t}_k$, $[_k, ]_k\in N^{(1)}$. We fix 
$S\rightarrow a]^q_i\in P_r$. 
\vspace{0.1cm}

$2.$ There exists an edge in ${\cal A}_e$ from $s_S$ to $s_{[^{qt}_i}$, labeled by $a$, and an edge 
from $s_{[^{qt}_i}$ to $s_{]^{qt}_i}$ labeled by $b$, where $[^t_i, ]^t_i\in N^{(1)}$, $[^t_i\rightarrow a$, and 
$]^t_i\rightarrow b \in P_k$, if either $S[^{qt}_i \in E_e$ or there exists a path in ${\cal G}.e^S$ 
from $S$ to $[^{qt}_i$ that contains no vertex labeled by $]^q_j$, $[_j, ]_j\in N^{(2)}_l$, or labeled by 
$[^{-t}_k$, $[_k, ]_k\in N^{(1)}$. We fix $S\rightarrow a[^{qt}_i, [^{qt}_i \rightarrow b]^{qt}_i\in P_r$. 
\vspace{0.1cm}

$3.$ There exists an edge in ${\cal A}_e$ from $s_{]^q_i}$ to $s_{]^q_j}$, labeled by $a$, where $[_i, ]_i, [_j, ]_j \in N^{(2)}_l$ 
and $[_j\rightarrow a\in P_k$, if either $]^q_i]^q_j \in E_e$ or there exists a path in ${\cal G}.e^S$ from 
$]^q_i$ to $]^q_j$ that contains no vertex labeled by $[^{qt}_k$ or by $]^q_l$, $[_k, ]_k\in N^{(1)}$, $[_l, ]_l\in N^{(2)}_l$. 
If $i=j$, i.e., $]^q_i]^q_i$ is a self-loop in ${\cal G}.e^S$, then $s_{]^q_i}s_{]^q_i}$ is a self-loop\footnote{This case deals 
also with the situation when $]^q_i$, $[_i, ]_i \in N^{(2)}_l$, occurs in a loop in ${\cal G}.e^S$ composed of only 
left brackets in $N^{(2)}_r\cup N^{(3)}$, excepting $]^q_i$. A loop composed of only left brackets in 
$N^{(2)}_r\cup N^{(3)}$ is ignored when building ${\cal A}_e$.} in ${\cal A}_e$. We fix $]^q_i\rightarrow a]^q_j\in P_r$. 
\vspace{0.1cm}

$4.$ There exists an edge in ${\cal A}_e$ from $s_{]^q_i}$ to $s_{[^{qt}_j}$, labeled by $a$ and an edge 
from $s_{[^{qt}_j}$ to $s_{]^{qt}_j}$ labeled by $b$, where $[_i, ]_i\in N^{(2)}_l$, $[_j, ]_j\in N^{(1)}$,  
$[^t_j\rightarrow a$, and $]^t_j\rightarrow b \in P_k$, if either $]^q_i[^{qt}_j \in E_e$ or there exists a 
path in ${\cal G}.e^S$ from $]^q_i$ to $[^{qt}_j$ that contains no vertex labeled by $]^q_k$, $[_k, ]_k\in N^{(2)}_l$. 
We fix $]^q_i\rightarrow a[^{qt}_j, [^{qt}_j \rightarrow b]^{qt}_j\in P_r$. 
\vspace{0.1cm}

$5.$ There exists an edge in ${\cal A}_e$ from $s_{]^q_i}$ to $s_{]^{q'}_j}$, labeled by $a$, where $[_i, ]_i, [_j, ]_j\in N^{(2)}_r$, 
and $]_j\rightarrow a\in P_k$, if $]^q_i]^{q'}_j \in E_e$. If $i=j$ and $q=q'$, then $s_{]^q_i}s_{]^q_i}$ is a self-loop in ${\cal A}_e$
(because $]^q_i]^q_i$ is a self-loop in ${\cal G}.e^S$). We fix $]^q_i\rightarrow a]^{q'}_j\in P_r$. Note that, it is also possible to 
have $i\neq j$ and $q=q'$ or $q\neq q'$ (with $i=j$ or $i\neq j$, case in which $]^q_{-}]^{q'}_{-}$ is a glue edge in ${\cal G}.e^S$). 
\vspace{0.1cm}

$6.$ There exists an edge in ${\cal A}_e$ from $s_{]^q_i}$ to $s_{]^{q'}_j}$, labeled by $a$, where $[_i, ]_i\in N^{(2)}_r$, 
$[_j, ]_j\in N^{(2)}_l$, and $[_j\rightarrow a\in P_k$, if there exists a path in ${\cal G}.e^S$ from $]^q_i$ to $]^{q'}_j$ that 
contains no vertex labeled by $]^{-}_k$, $[_k, ]_k\in N^{(2)}_l\cup N^{(2)}_r$, or labeled by $[^{-t}_l$, $([_l, ]_l)\in N^{(1)}$. 
We fix $]^q_i\rightarrow a]^{q'}_j\in P_r$. Note that, $q$ may be equal to $q'$.
\vspace{0.1cm}

$7.$ There exists an edge $s_{]^q_i}s_{[^{q't}_j}$ labeled by $a$, and an edge $s_{[^{q't}_j}s_{]^{q't}_j}$ labeled by $b$, where
$[_i, ]_i\in N^{(2)}_r$, $[_j, ]_j\in N^{(1)}$, $[^t_j\rightarrow a$, and  $]^t_j\rightarrow b \in P_k$, if there exists a path in 
${\cal G}.e^S$ from $]^q_i$ to $[^{q't}_j$ that contains no vertex labeled by $]^{-}_k$, $[_k, ]_k\in N^{(2)}_l\cup N^{(2)}_r$, or 
by $[^{-t}_l$, $[_l, ]_l\in N^{(1)}$. We fix $]^q_i\hspace{-0.1cm}\rightarrow a[^{q't}_j, [^{q't}_j \rightarrow \hspace{-0.1cm}b]^{q't}_j\in P_r$. 
\vspace{0.1cm}

$8.$ There exists an edge in ${\cal A}_e$ from $s_{]^{qt}_i}$ to $s_{]^{q'}_j}$, labeled by $a$, where 
$[_i, ]_i\in N^{(1)}$, $[_j, ]_j\in N^{(2)}_r$, and $]_j\rightarrow a\in P_k$, if $[^{qt}_i]^{q'}_j \in E_e$. 
We fix $]^{qt}_i\rightarrow a]^q_j\in P_r$. Note that, it is possible to have $q=q'$ or $q\neq q'$ (in the last 
case $]^{qt}_i]^{q'}_j$ is a glue edge in ${\cal G}.e^S$). 
\vspace{0.1cm}

$9.$ There exists an edge $s_{]^{qt}_i}s_{[^{q't}_j}$, labeled by $a$, and an edge $s_{[^{q't}_j}s_{]^{q't}_j}$ 
labeled by $b$, where $[_i, ]_i, [_j, ]_j\hspace{-0.1cm}\in N^{(1)}$,  
\vspace{-0.1cm}$[^t_j\rightarrow a$, and  $]^t_j\hspace{-0.1cm}\rightarrow b \hspace{-0.1cm}\in P_k$, 
if there exists a path in ${\cal G}.e^S$ from $[^{qt}_i$ to $[^{q't}_j$ that contains no vertex labeled by $]^{-}_k$, 
$[_k, ]_k\hspace{-0.1cm}\in N^{(2)}_l\cup N^{(2)}_r$, or by $[^{-t}_l$, $[_l, ]_l\in N^{(1)}$.
We fix $]^{qt}_i\hspace{-0.1cm}\rightarrow a[^{q't}_j, [^{q't}_j \rightarrow \hspace{-0.1cm}b]^{q't}_j\in P_r$.
Note that, $[^{qt}_i$ may be equal to $[^{q't}_j$, i.e., $i=j$ and $q=q'$, i.e., the case of a loop in $[^{qt}_i$.
\vspace{0.1cm}

\hspace{-0.1cm}$10.$ - For any final vertex labeled by $]^q_i$, $[_i, ]_i\in N^{(2)}_r$, or by $[^{qt}_i$, 
$[^t_i, ]^t_i\in N^{(1)}$,  in ${\cal G}.e^S$, we add in ${\cal A}_e$ a new edge $s_{]^q_i}s_f$, or 
$s_{]^{qt}_i}s_f$, respectively. In both cases, this is labeled by $\lambda$. We set in $P_r$ a rule of the 
form $]^q_i\rightarrow \lambda$ or $]^{qt}_i\rightarrow \lambda$, respectively. 
\vspace{0.2cm}

The new grammar $G_r=(N_r, T, P_r, S)$, in which the set of rules $P_r$ is built as above, and 
$N_r=\{]^{-}_i| [_i, ]_i\in N^{(2)}\}\cup \{[^{-}_i, ]^{-}_i| [_i, ]_i\in N^{(1)}\}$ is a regular grammar 
generating a regular superset approximation for $L(G_k)$. Recall that, some of the brackets in $N_r$ may also 
be $\hbar$-marked (by distinct symbols). It is easy to observe that $L(G_r)=\varphi(R_m)$, where $\varphi$ is 
the homomorphism in the proof of Theorem 2.9. 
\vspace{0.1cm}

Note that since the regular language in the Chomsky-Sch\"utzenberger theorem is an approximation of the trace-language, $R_m$ depends 
on the considered context-free  grammar in Dyck normal form. Hence, the refinement of the regular approximation depicted in this section 
is considered with respect to the structure of the grammar $G_k$ in Dyck normal form, where by the {\it structure} we mean the number of 
rules and nonterminals composing $G_k$.  As for $L=L(G_k)$ there exist infinitely many grammars generating it, setting these grammars in 
Dyck normal form other trace languages can be drawn, and consequently other regular languages, of type $R_m$, can be built. The best 
approximation for $L$ is the regular language with fewer words that are not in $L$. 

Denote by ${\cal G}_L$ the infinite set of grammars in Dyck normal form generating $L$, by ${\cal R}_m$ the set of all regular languages 
obtained from the refined extended dependency graphs associated with grammars in ${\cal G}_L$, and by 
${\cal A}_L=\{ \varphi(R_m)| R_m\in {\cal R}_m\}$ the set of all superset regular approximations of $L$. It is easy 
to observe that ${\cal A}_L$, with the inclusion relation on sets, is a partially ordered subset of context-free languages. 
${\cal A}_L$ has an {\it infimum} equal to the context-free  language it approximates, but it does not have the {\it least element}. 
Indeed, as proved in \cite{Ce}, \cite{GK1}, \cite{GK2}, and \cite{GK3}, there is no algorithm to build for a certain context-free  
language $L$, the simplest context-free  grammar that generates $L$. Hence, there is no possibility to identify the simplest context-free  
grammar in Dyck normal form that generates $L$. Therefore, there is no algorithm to build the minimal superset approximation for $L$. Where 
by the {\it simplest} grammar we refer to a grammar with a minimal number of nonterminals, 
rules, or loops (grammatical levels encountered during derivations). Consequently,  ${\cal A}_L$ does not have the least element.      

It would be interesting to further study how the (refined) extended dependency graphs, associated with grammars in 
Dyck normal form generating a certain context-free  language $L$, vary depending on the structure of these grammars\footnote{For instance, 
how does it look the extended dependency graph associated with a {\it nonself-embedding} grammar in Dyck normal form, and which 
is the corresponding regular superset approximation. Note that, a context-free -nonself-embedding grammar always generates a regular 
language (since the language is finite).}, and what makes the structure of the regular language $R_m$ (hence the regular 
superset approximation) simpler. In other words, to find a hierarchy on ${\cal A}_L$, depending on the structure of the grammars in 
Dyck normal form that generate $L$. These may also provide an appropriate measure to compare languages in ${\cal A}_L$. On the other 
hand, for an ambiguous grammar $G_k$, there exist several paths (hence regular expressions) in the refined extended dependency graph, 
which ``approximate'' the same word in $L(G_k)$. Apparently, finding an unambiguous grammar for $L(G_k)$ may refine the language $R_m$. The main disadvantage is that, again in general, there is no algorithm to solve this problem. Moreover, even if it is possible to find an unambiguous grammar for $L(G_k)$, it is doubtful that the corresponding regular language $R_m$ is finer than the others. In \cite{GK1} it is also proved that 
the cost of the ``simplicity'' is the ambiguity. In other words, finding an unambiguous grammar for $L=L(G_k)$ may lead to the increase in size (e.g. number of nonterminals, rules, levels, etc.) of the respective grammar. Which again, may enlarge $R_m$ with useless words. Therefore, a challenging matter that deserves further attention is whether the unambiguity is more powerful than the ``simplicity'' in determining a more refined regular superset approximation for 
a certain context-free  language (with respect to the method proposed in this paper). 
          

In \cite{CS} it is proved that optimal (minimal) superset approximations exist for several kind of context-free  languages, 
but no specification is provided of how the existing minimal approximation can be built starting from the context-free  language it approximates.  
It would be challenging to further investigate whether there exist subsets of context-free  languages for which it would be possible to build a 
minimal superset approximation (by using the graphical method herein proposed).

\begin{figure}
	\centering
		\includegraphics[scale=0.69]{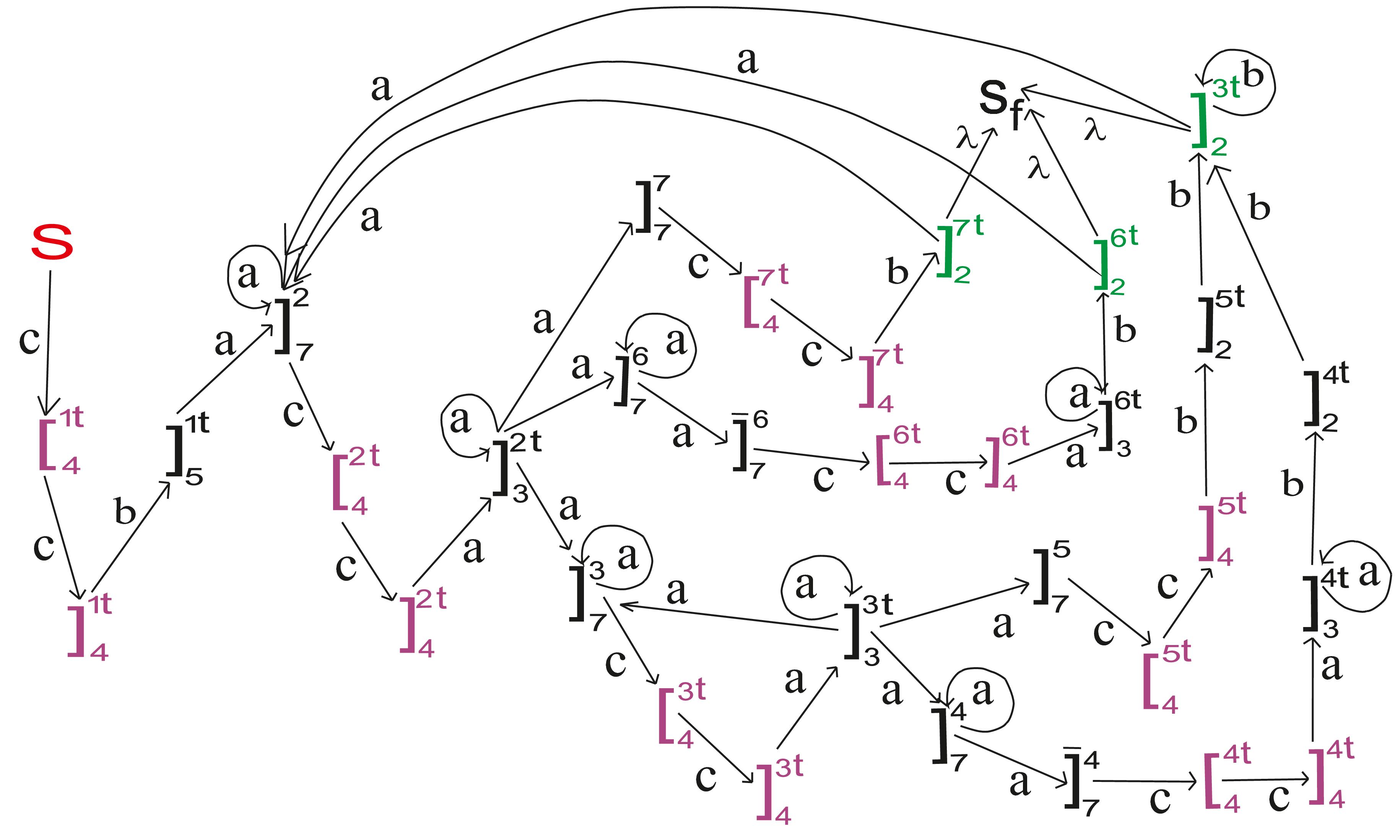}
		\vspace{-0.2cm}
		\caption{{\footnotesize The transition diagram ${\cal A}_e$ built from ${\cal G}.e^S$ in Example 4.2. Each 
		bracket $[_i$ ($S$, $]_i$) in ${\cal A}_e$ corresponds to the state $s_{[_i}$ ($s_S$, $s_{]_i}$) (see 
		Example 5.1 b.). $S$ is the initial vertex, vertices colored in green lead to the final state.}}
\end{figure}
\vspace{-0.1cm}

\begin{example}
a. The regular grammar that generates the regular superset approximation of the linear language in Example 3.6  is  $G_r\hspace{-0.1cm}=(\{S,]_1,]^t_2,]^t_3,]_4,]^t_5,]_6,[^t_7,]^t_7\}, \{ a,b,c,d\}, S, P_r)$, where\footnote{Note that, 
since there is only one dependency graph that yields only one plus-height regular expression there is no need of the 
labeling procedure described in Section 4.} $P\hspace{-0.1cm}=\{S\hspace{-0.1cm}\rightarrow a]_1,  
]_1\hspace{-0.1cm}\rightarrow b]_4,[_4\rightarrow b]_6, ]_6\hspace{-0.1cm}\rightarrow a]_1/a[^t_7, 
[^t_7 \rightarrow a]^t_7, ]^t_7 \hspace{-0.1cm}\rightarrow d]^t_5, ]^t_5 \hspace{-0.1cm}\rightarrow c]^t_3, 
]^t_3 \hspace{-0.1cm}\rightarrow b]^t_2, ]^t_2 \hspace{-0.1cm}\rightarrow c]^t_3, 
]^t_2 \hspace{-0.1cm}\rightarrow d]^t_5, ]^t_2 \hspace{-0.1cm}\rightarrow \lambda\}$. 
The language generated by $G_r$ is $L(G_r)\hspace{-0.1cm}=\{(abb)^maa(d(cb)^n)^p|n,m,p\geq 1\}=(abb)^+aa(d(cb)^+)^+\hspace{-0.2cm}=h(R)$. 
The transition diagram associated with the finite automaton that accepts $L(G_r)$ is sketched in Figure 1.c.   

\hspace{-0.2cm}b. The regular grammar that generates the regular superset approximation of the context-free  
language in Examples 3.7 and 4.2  is  
$G_r\hspace{-0.1cm}=(\{S,]^{3t}_2,..., ]^{7t}_2, ]^{2t}_3, ]^{3t}_3,]^{4t}_3, ]^{6t}_3,  
[^{1t}_4,..., [^{7t}_4, ]^{1t}_4,...,]^{7t}_4,\\]^{1t}_5, ]^2_7,...,
]^7_7, \bar ]^4_7, \bar ]^6_7\}, \{ a,b,c\}, S, P_r)$, where $P_r\hspace{-0.1cm}=\{ S\hspace{-0.1cm}\rightarrow c[^{1t}_4, 
[^{it}_4 \rightarrow c]^{it}_4, ]^{1t}_4 \hspace{-0.1cm}\rightarrow b]^{1t}_5, ]^{jt}_4 \hspace{-0.1cm}\rightarrow a]^{jt}_3, 
]^{mt}_4 \hspace{-0.1cm}\rightarrow b]^{mt}_2, ]^{1t}_5\hspace{-0.1cm}\rightarrow a]^2_7, ]^j_7\hspace{-0.1cm}\rightarrow a]^j_7, 
]^n_7 \rightarrow c[^{nt}_4, ]^k_7 \hspace{-0.1cm}\rightarrow a\bar ]^k_7, 
\bar ]^k_7 \rightarrow c[^{kt}_4, ]^{2t}_3 \rightarrow a]^3_7/a]^6_7/a]^7_7, ]^{jt}_3 \hspace{-0.1cm}\rightarrow a]^{jt}_3, 
]^{3t}_3 \hspace{-0.1cm}\rightarrow a]^3_7/a]^4_7/a]^5_7, ]^{kt}_3 \hspace{-0.1cm}\rightarrow b]^{kt}_2, ]^{lt}_2 \rightarrow a]^2_7/\lambda , 
]^{ht}_2 \rightarrow b]^{3t}_2, ]^{3t}_2 \rightarrow b]^{3t}_2/a]^2_7 /\lambda | h\in \{4,5\}, i\in \{1,2,3,4,5,6,7\},\\ j\in \{2,3,4,6\}, k\in \{4,6\}, l\in \{6,7\}, m\in \{5,7\}, n\in \{2,3,5,7\}\}$. The transition diagram associated 
with the finite automaton that accepts $L(G_r)$ is sketched in Figure 5.   



\end{example}
\vspace*{-0.3cm}

\section{Conclusions}

In this paper we have introduced a normal form for context-free grammars, called \textit{Dyck normal form}. Based 
on  this normal form and on graphical approaches we gave an alternative proof of the \textit{Chomsky-Sch$\ddot{u}$tzenberger 
theorem}. From a \textit{transition-like diagram} for a context-free grammar in Dyck normal form we built a \textit{transition diagram} 
for a \textit{finite automaton} and a \textit{regular grammar} for a \textit{regular superset approximation} of the original context-free 
language. A challenging problem for further investigations may be to further refine this superset approximation depending on the type of 
the grammar (e.g. nonself-embedding or unambiguous) or on the size of the grammar (e.g. number of nonterminals, rules, etc.) generating a 
certain context-free language. 

The method used throughout this paper is graphically constructive, and it shows that $i.$ derivational structures in context-free  
grammars can be better described through nested systems of parenthesis (Dyck languages), and $ii.$ the Chomsky-Sch\"utzenberger theorem 
may render a good and efficient approximation for context-free  languages. Furthermore, the method provides a graphical framework to 
handle derivations and descriptional structures in context-free  grammars, which may be useful in further complexity investigations 
of context-free  languages.     

\renewcommand\refname{References}

\end{document}